\newcommand{\ds}{\displaystyle}
\newcommand{\pt}{\partial_t}
\newcommand{\stress}{\mathbf{S}}
\newcommand{\esssup}[1]{\mathop{\rm ess\ sup}}
\newcommand{\essinf}[1]{\mathop{\rm ess\ inf}}
\newcommand{\D}{{\rm {-}\kern - 6pt D}}
\newcommand{\rr}{\mathbb R}
\newcommand{\R}{\mathbb{R}}
\newcommand{\Lp}{L_{\lower0.05cm\hbox{\scriptsize X}}^{\lower-0.05cm\hbox{\scriptsize 1}}}
\newcommand{\Linf}{L_{\lower0.05cm\hbox{\scriptsize X}}^{\lower-0.05cm\hbox{\scriptsize $\infty$}}}
\newcommand{\CX}{C_{\lower0.05cm\hbox{\scriptsize X}}}
\newcommand{\TAK}{T^{\lower-0.05cm \hbox{$\scriptstyle A$}}_{\lower 0.05cm \hbox{$\scriptstyle K$}}}
\newcommand{\TK}{T_{\lower 0.05cm \hbox{$\scriptstyle K$}}}
\newcommand{\TC}{T_{\lower 0.05cm \hbox{$\scriptstyle C$}}}
\newcommand{\PET}{P_T^{\lower-0.05cm \hbox{$\scriptstyle \epsilon$}}}
\newcommand{\ea}[1]{\lower 1mm \hbox{$|_{#1}$}}
\newcommand{\lims}[1]{\lower0.2cm\hbox{${\overline{\ds{\lim}_{}}\atop {#1}}$}}
\newcommand{\limi}[1]{\lower0.2cm\hbox{${\underline{\ds{\lim}_{}}\atop {#1}}$}}
\numberwithin{equation}{section}
\newtheorem{theorem}{Theorem}[section]
\newtheorem{lemma}[theorem]{Lemma}
\theoremstyle{definition}
\theoremstyle{definition} %%{remark}
\newtheorem{remark}[theorem]{Remark}
\renewcommand{\div}{{\rm div \,}}
\DeclareMathSymbol{\complement}{\mathord}{AMSa}{"7B}
\def\vv<#1>{\langle #1\rangle}
\def\Vv<#1>{\bigl\langle #1\bigr\rangle}
\begin{document}
\title[Multicomponent Diffusion Modeling]
{On the structure of continuum thermodynamical diffusion fluxes -- A novel closure scheme and its relation to the Maxwell-Stefan and the %generalized
Fick-Onsager approach}

\author[Dieter Bothe]{Dieter Bothe \vspace{0.1in}}
\address{Mathematical Modeling and Analysis\\
Technical University of Darmstadt\\
Alarich-Weiss-Str.~10\\
D-64287 Darmstadt, Germany}
\email{bothe@mma.tu-darmstadt.de}

\author[Pierre-\'Etienne Druet]{Pierre-\'Etienne Druet \vspace{0.1in}}
\address{Weierstrass Institute Berlin\\
Mohrenstr.~39\\
D-10117 Berlin, Germany}
\email{PierreEtienne.Druet@wias-berlin.de}

\thanks{The first author gratefully acknowledged funding within the SFB 1194 "Interaction of Transport and Wetting Processes", Project-ID 265191195, subproject B01.
The second author was supported by the grant DR-1117/1-1 of the German Science Foundation}

\begin{abstract}
This paper revisits the modeling of multicomponent diffusion within the framework of thermodynamics of irreversible processes.
We briefly review the two well-known main approaches, leading to the generalized Fick-Onsager multicomponent diffusion fluxes
or to the generalized Maxwell-Stefan equations. The latter approach has the advantage that the resulting fluxes are consistent with non-negativity of the partial mass densities for non-singular and non-degenerate Maxwell-Stefan diffusivities.
On the other hand, this approach requires computationally expensive matrix inversions since the fluxes are only
implicitly given.
We propose and discuss a novel and more direct closure which avoids the inversion of the Maxwell-Stefan equations.
It is shown that all three closures are actually equivalent under the natural requirement
of positivity for the concentrations, thus revealing the general structure of continuum thermodynamical diffusion fluxes.
%
%It is know from experiments as well as from molecular dynamics simulations that Maxwell-Stefan
%diffusivities are not constant but depend especially on the mixture composition.
%A theoretical model for this dependence was recently proposed as a multicomponent extension of the Darken equation.
As a special case, the new closure also gives rise to a core-diagonal diffusion model in which only those cross-effects are present that are necessary to guarantee consistency with total mass conservation, plus a compositional dependence of the diffusivity.
This core-diagonal closure turns out to provide a rigorous fundament for recent extensions of the Darken equation from binary mixtures to the general multicomponent case.
As an outcome of our investigation, we also address different questions related to the sign of multicomponent thermodynamic or Fickian diffusion coefficients. We show rigorously that in general the second law requires positivity properties for tensors and operators rather than for scalar diffusivities.

% i.e.\ it leads to a consistent model for the dependence of Maxwell-Stefan diffusivities on composition in case of so-called weakly associated constituents.
%Moreover, the core-diagonal diffusion model corresponds precisely to the multicomponent Darken equation.
%\\[1ex]
\end{abstract}
\maketitle
{\small\noindent
{\bf Mathematics Subject Classification (2000):}\\
Primary: 76R50, 76T30, 80A20, 80A17. Secondary: 35K57, 76V05, 80A32, 92E20. \vspace{0.1in}\\
{\bf Key words:} Multicomponent diffusion, irreversible thermodynamics,
entropy production, Maxwell-Stefan diffusivities, core-diagonal closure, Darken equation, cross-diffusion, sign of diffusivities. \vspace{0.2in}
\section{Introduction}
In 1855, Adolf Fick summarized his findings on diffusion of matter in liquids in his seminal paper
\cite{Fick} and formulated (in one dimension)
what is today known as Fick's first law, namely the closure relation
\begin{equation}
\label{Fick-1}
{\bf j}_i^{\rm mol} = - D_i \nabla c_i,
\end{equation}
where ${\bf j}_i^{\rm mol}$ denotes the molar mass flux and $c_i$ is the molar concentration of constituent $A_i$,
$i\in\{1,\ldots ,N\}$, of a mixture.
The phenomenological coefficient $D_i$ is called diffusion coefficient or diffusivity of the species.
It turned out that this ''law'' applies well for dilute species, i.e.\ a species $A_i$ such that its molar
fraction $x_i=c_i/c$, with $c=\sum_{k=1}^N c_k$ the total molar concentration, satisfies $x_i \ll 1$.
But outside of this dilute case, closure laws for multicomponent diffusion fluxes get much more involved, mainly in two ways:
(i) even for moderate concentrations, the activity of a species is not linearly related to its concentration and (ii)
interactions between species $i$ and all other constituents--instead of interactions only between a dilute solute and the solvent--lead to cross-effects. Concerning (i), a more deeper thermodynamical study showed that--still in simple cases--the so-called driving force of diffusion is not given by $-\nabla c_i$, but rather by $-c_i \nabla \frac{\mu_i^{\rm mol}}{RT}$ with $\mu_i^{\rm mol}$
denoting the chemical potential of constituent $i$.
Regarding (ii) above, let us note that the continuity equation
(see \eqref{continuity} below) as the total mass balance imposes a constraint on the set of all diffusion fluxes, which
necessarily leads to a cross-coupling between all mass fluxes.

In fact, only in very particular cases,
use of Fick's law for all constituents of a mixture is consistent with the continuity equation, usually
requiring the introduction of \emph{ad hoc} convective velocities and/or the rather unrealistic assumption that all diffusivities coincide.
%; see \cite{Giovan}.\\[2ex]

The general setting of the Theory of Irreversible Processes (T.I.P.) as summarized in \cite{dGM} allowed to overcome these deficiencies and has lead to the Fick-Onsager form of diffusion fluxes, replacing the diffusion
coefficients $D_i$ by a matrix of phenomenological coefficients $L_{ij}$, also called Onsager coefficients (for diffusion). This matrix will be dense in real applications and the $L_{ij}$ strongly depend on the composition of the system as well as on temperature, i.e.\ they are functions of the thermodynamic state
variables and in a highly non-trivial manner.
This makes it difficult to infer realistic models for the phenomenological coefficients from
experimental measurements or molecular dynamics (MD) simulation, although it is rather easy to compute
an approximation of the value of $L_{ij}$ from MD simulations for a specific thermodynamic state.
The difficulty comes in because of the complicated dependencies in particular on the composition.

Another approach employs the Maxwell-Stefan equations which--in the diffusive approximation--assumes
a force balance between friction forces occurring because of the relative motion of the different constituents
and so-called thermodynamic driving forces. The coefficients appearing in these force balances are hence
friction coefficients and these are always associated with a pair of constituents. The reciprocal of these
coefficients, possibly up to a common factor, are called Maxwell-Stefan (MS) diffusivities and denoted as
$\D_{ij}$. The $\D_{ij}$ also depend on temperature and especially on the composition, but in a much more moderate
way as compared to the Onsager coefficients. An indication for this is the fact that even with constant
$\D_{ij}$, the system of partial differential equations resulting from the partial mass balances
allows for non-negative solutions (which exist at least locally in time). The latter is not true if constant $L_{ij}$ are used in the Fick-Onsager approach. {\color{black} At least in the case that non-conservative mass transfer mechanisms are present (chemical reactions, boundary interactions), the complete extinction of constituents in finite time is possible even for smooth solutions: See \cite{dredrugagu20} for some mathematical results including this aspect.}

The thermodynamic driving forces used in the Maxwell-Stefan approach have been obtained outside of the T.I.P.\
via approximate solutions to the multicomponent Boltzmann equations. It took a while until it was shown in \cite{Standart} that this approach is consistent with the second law of thermodynamics.
%For some time it appeared (and in parts of the literature still does so up to now) as if the two approaches constitute completely different ways to model diffusion.
On the other hand, for small--up to quaternary--systems, the Maxwell-Stefan diffusivities are nowadays routinely used to compute (Fick-)Onsager coefficients by using explicit inversion formulae; cf.\ \cite{TK-book}. Interestingly, MD simulations are often employed for computing Maxwell-Stefan diffusivities in which case,
in the first place, Green-Kubo integral representations are used to compute Onsager coefficients.
These are then converted into Maxwell-Stefan diffusivities since the latter are  better suited to describe the dependence on composition.
This also demonstrates that a close link between both approaches is of course present.
Actually, it has already been stated explicitly (see, e.g., \cite{Cussler}) that both approaches are equivalent - albeit without a precise statement how this equivalence should look like, not to mention a proof of the equivalence.

It is one of the main results in the present paper that the Fick-Onsager and the Maxwell-Stefan approach to multicomponent diffusion are indeed equivalent to each other in a rigorous, mathematical sense. Furthermore, as another main contribution we introduce a third way of performing a thermodynamically consistent closure of the diffusion fluxes which combines the advantages of the Fick-Onsager and the Maxwell-Stefan approach.
This novel closure separates those cross-couplings between the diffusion fluxes which are necessary to
simultaneously fulfill the continuity equation, i.e.\ the fact that all mass diffusion fluxes taken
against the barycentric velocity must sum up to zero, from ''true cross-effects'' which may be present or not.
The outcome, again, is proven to be equivalent to both the Fick-Onsager and the Maxwell-Stefan approach.
%{\color{red} Is there an heritage from Curtiss/Bird? Ich denke NEIN. Wir sollten das nicht nennen hier.}

In recent years, the topic of accurate and consistent modeling of multicomponent diffusion fluxes has received an even increased popularity. On the one hand, the ever growing computing power allows for detailed and fully resolved numerical simulations of transport processes in multicomponent fluid mixtures, mainly with chemical reactions.

This is relevant to understand, quantitatively describe, optimize and control a multitude of phenomena and processes, including
ocean-atmosphere exchange, cloud physics, combustion, separation, materials processing, water treatment, electrolysis and many more.
For this purpose, highly accurate models for the diffusion coefficients are mandatory. Accordingly, usually in the context of the Maxwell-Stefan approach, (semi-)empirical or theoretical models for the $\D_{ij}$ have been further developed, often based on existing relations for binary systems. For instance, the classical Vignes equation,
\[
\D_{12} \, =\, \big( \D_{12}^{x_1 \to 1} \big)^{x_1} \, \big( \D_{12}^{x_2 \to 1} \big)^{x_2}
\]
has been extended in \cite{KvB2005} to ternary systems based on empirical information. 

\noindent
Another example is the classical Darken equation,
\[
\D_{12} \, =\, x_2 D_{1,\rm self} +  x_1 D_{2,\rm self},
\]
which has been extended in \cite{Bardow2011} to the multicomponent case based on theoretical ad hoc arguments.
In this respect and as another main result, we show that our novel closure approach in the special case of a core-diagonal closure,
i.e.\ if only the necessary cross-effects enforced by the total mass balance are included, provides a
consistent derivation of this multicomponent extension of the Darken equation.

In the applied literature, there are ongoing discussions on the sign of Maxwell-Stefan diffusivities; see, e.g., \cite{chakrawangeapen}, \cite{kraaijeveldwesselingh}. We therefore also ask in this paper which consequences can be derived from the second law of thermodynamics for the main scalar diffusion coefficients occurring in the three closure schemes. For special cases with simple compositional dependencies -- as for instance purely binary dependence of MS-coefficients -- we show that thermodynamic diffusivities are sometimes required to be positive. But as already shown for ternary systems by counterexamples, the general picture is that the second law does not impose the sign of diffusion coefficients. We prove that the same is essentially valid for the multicomponent Fickian diffusion coefficients expressing the proportionality between diffusion fluxes and concentration gradients.

In the last decade, new interest in the mathematical analysis of multicomponent diffusion systems, in particular within the Maxwell-Stefan formulation, arose. This comprises both the proof of short time existence of strong or even classical solutions (see \cite{JP-MS}, \cite{Giovan}, \cite{piashiba19}, \cite{BD-COMP}) as well as global existence of weak solutions (see \cite{chenjuengel}, \cite{mupoza15}, \cite{dredrugagu20}, \cite{bondesanbriant}, \cite{PED-MS}). In most of these papers, the underlying physics of multicomponent diffusion is exploited for the way to treat these systems mathematically.

Let us therefore note that our estimate of ellipticity constants for diffusion matrices in Theorem~7.2 below
is a new finding out of the equivalence proof, but also of interest for the mathematical analysis.
Currently, the existence of unique global solutions without restriction on the initial data is not known, even for the isothermal and isobaric case where the mechanics simplifies to trivial. Here the core-diagonal case, where cross-effects are reduced to minimal, might be an interesting case to start with.

%{\color{red} TODO 2 (P-E): apparently not in Giovangigli (P-E). bitte abschliessend pruefen}

In order to focus on the aspect of multicomponent diffusion, we restrict our attention to the \emph{iso-thermal case} throughout the entire paper. Indeed, variable temperature implies that the forces driving diffusion have a more complex expression than in the isothermal case. The question of finding consistent proportional closure relations between diffusion fluxes and these driving forces is not essentially affected, though.

%\color{black}
%$\Diamond$ wie w\"urde sich das auswirken?! Maxwell-Stefan mit Thermo-Diffusion bekannt. BSL / Giov. zitieren
%

For more information about the history of research on diffusion in liquids see, e.g., \cite{Cussler}, \cite{Mehrer}
and the primary literature cited there.\\[2ex]

% {\color{red} TODO 3 (P-E):
% Bemerkung zur Nichteindeutigkeit der Diffusionskoeffizienten (in allen Ansaetzen) einfuegen. Kann in den letzten Paragraph.}
%
% {\color{red} TODO 4 (P-E):
% Bemerkung zu den Vorzeichen der Diffusionskoeffizienten (in allen Ansätzen) einfuegen. Kann in den letzten Paragraph.}
%
% {\color{blue} Ich schicke dazu Literatur (DB); siehe auch letzter Abschnitt in Appendix C für ein ternaeres Beispiel}.
% (P-E): Aus den drei Eigenschaften ${\bf L} \geq 0$ auf $\{{\bf e}\}^{\perp}$, ${\bf L}$ symmetrisch und ${\bf L } {\bf e} = 0$ folgt, dass ${\bf L} \geq 0$ auf $\mathbb{R}^N$. Damit sind die Diagonaleintraege nichtnegativ.
% {\color{red} Das bitte etwas ausarbeiten für Leser mit wenig Mathebackground. Ist aber schon in \eqref{A-pos} gemacht.}
%
% Was ist unser Kenntnisstand zu den $f_{ik}$ bzgl. Vorzeichen?\\

\section{Continuum Thermodynamical Framework}
\label{TIP}
%\noindent
A thermodynamically consistent model for multicomponent diffusion has to be based on continuum thermodynamics
of irreversible processes, for instance in the framework of the by now classical Theory of Irreversible
Processes (T.I.P.) as summarized in the excellent book by De Groot and Mazur \cite{dGM}.
Within Continuum Thermodynamics, the modeling of multicomponent diffusion fluxes is based on the
partial mass balances for the mass densities $\rho_i$ of all constituents $A_1,\ldots ,A_N$ of the considered
mixture. The partial mass balances read as
\begin{equation}
\label{partial-mass}
\pt \rho_i + \div (\rho_i {\bf v}_i)= r_i,
\end{equation}
where $r_i$ are the rates of mass production due to chemical reactions and ${\bf v}_i$ are the individual
continuum mechanical velocities.
The mass production rate for species $A_i$ is
\begin{equation}
r_i = M_i \sum_{a=1}^{N_R} \nu_i^a R_a,
\end{equation}
where $M_i$ is the molar mass of $A_i$, $R_a=R_a^f - R_a^b$ denotes the molar rate of the $a^{\rm th}$ reaction
(forward minus backward rate) and the $\nu_i^a$ are the stoichiometric coefficients for this reaction, i.e.\ $\nu_i^a=\beta_i^a - \alpha_i^a$ if the $a^{\rm th}$ chemical reaction is of the type
\begin{equation}
\alpha_1^a A_1 + \ldots + \alpha_N^a A_N \rightleftharpoons \beta_1^a A_1 + \ldots + \beta_N^a A_N.
\end{equation}
In this section, we briefly recall the class-I model for a multicomponent fluid mixture, i.e.\
we consider partial mass balances but only a single balance for the total momentum and energy, respectively;
see \cite{BD} and the references given there for class-II modeling with partial momenta.

To formulate the common momentum balance,
we define the total mass density $\rho$ and the barycentric velocity ${\bf v}$ of the mixture according to
\begin{equation}
\label{barycentric}
\rho :=  \sum_{i=1}^N \rho_i
\quad \mbox{ and } \quad
\rho\, {\bf v} := \sum_{i=1}^N \rho_i {\bf v}_i.
\end{equation}
Exploiting the conservation of total mass,
summation of \eqref{partial-mass} for $i=1,\ldots ,N$ then yields the total mass balance
\begin{equation}\label{continuity}
\pt \rho  +  \div ( \rho {\bf v} )=0,
\end{equation}
i.e.\ the continuity equation. The total momentum balance reads as
\begin{equation}
\label{total-momentum-balance}
\pt (\rho {\bf v}) + \div (\rho {\bf v} \otimes {\bf v} - \stress)
= \rho {\bf b},
\end{equation}
where $\stress$ denotes the stress tensor, and the total body force ${\bf b}$
is given via
\begin{equation}
\rho {\bf b}:=\sum_{i=1}^N \rho_i {\bf b}_i
\end{equation}
in the general case of individual body forces ${\bf b}_i$.
Next, we let
\begin{equation}
\label{mech-pressure}
P:= - \frac 1 3 {\rm tr} (\stress)
\quad \mbox{ and } \quad
\stress^\circ := \stress + P \, {\bf I}
\end{equation}
denote the \emph{mechanical pressure} and the traceless part of the stress, respectively.
We only consider non-polar fluids for which the balance of angular momentum reduces to the symmetry of $\stress$,
i.e.\ we have
\begin{equation}
\label{S-symm}
\stress^{\sf T} = \stress.
\end{equation}

In the class-I modeling, the individual velocities ${\bf v}_i$ are decomposed as
${\bf v}_i = {\bf v} + {\bf u}_i$ with the barycentric velocity ${\bf v}$ and the diffusion velocities
\begin{equation}
{\bf u}_i = {\bf v}_i -{\bf v}.
\end{equation}
Accordingly, the partial mass balances are written as
\begin{equation}
\label{partial-mass2}
\pt \rho_i + \div (\rho_i {\bf v} + {\bf j}_i)= r_i,
\end{equation}
with the mass diffusion fluxes defined as
\begin{equation}
{\bf j}_i =\rho_i {\bf u}_i.
\end{equation}
Due to \eqref{barycentric} and the continuity equation \eqref{continuity}, the diffusive mass fluxes satisfy the relation
\begin{equation}\label{flux-constraint}
 \sum_{i=1}^N {\bf j}_i =0,
\end{equation}
which is a constraint for the modeling of the fluxes ${\bf j}_i$.

Molar-based (instead of mass-based) variants are also useful, employing the molar
mass balances
\begin{equation}\label{molar-mass-balance}
\pt c_i + \div (c_i {\bf v} + {\bf j}_i^{\rm mol}) = \sum_{a=1}^{N_R} \nu_i^a R_a
\end{equation}
for the molar concentrations $c_i := \rho_i / M_i$ with molar diffusion fluxes
${\bf j}_i^{\rm mol}:={\bf j}_i / \rho_i$.
Note that these molar diffusion fluxes are still defined relative to the
barycentric velocity and have to obey the constraint
\begin{equation}\label{molar-flux-constraint}
 \sum_{i=1}^N M_i {\bf j}_i^{\rm mol} =0.
\end{equation}
Throughout the entire paper, the superscript ``mol'' is used for molar-based quantities to distinguish them from the mass-based variants.

The description of diffusion processes requires a full thermodynamical treatment of the considered mixture.
For this purpose, the balance of internal energy is required as well which reads as
\begin{equation}
\label{internal-energy-balance}
\pt (\rho e) + \div (\rho e {\bf v} + {\bf q})
= {\bf D}:\stress + \sum_{i=1}^N {\bf j}_i \cdot {\bf b}_i,
\end{equation}
where $e$ is the specific internal energy, ${\bf q}$ the heat flux and
${\bf D}=\frac 1 2 (\nabla {\bf v} + (\nabla {\bf v})^{\sf T} )$ the symmetric part of the
velocity gradient. Equation \eqref{internal-energy-balance} is also called the first law of (continuum)
thermodynamics.

In addition to these balances of conserved quantities, we finally add the balance of entropy, i.e.\
\begin{equation}
\label{entropy-balance}
\pt (\rho s) + \div (\rho s {\bf v} + {\bf \Phi}) = \zeta,
\end{equation}
where $s$ denotes the specific entropy, $\Phi$ the entropy flux and $\zeta$ the rate of entropy production.
The second law of (continuum) thermodynamics then postulates that
\begin{equation}
\zeta \geq 0 \quad \mbox{\rm for any thermodynamic process}.
\end{equation}
We consider the simplest class of (fluid) materials for which
\begin{equation}\label{specific-entropy}
\rho s = \widetilde{\rho s}(\rho e, \rho_1, \ldots ,\rho_N)
\end{equation}
with a strictly concave function $\widetilde{\rho s}$ which is strictly increasing in $\rho e$.
We then define the (absolute) temperature $T$ and the chemical potentials $\mu_i$ as
\begin{equation}\label{entropy-derivatives}
\frac 1 T = \frac{\partial \widetilde{\rho s}}{\partial \rho e},
\qquad
- \frac{\mu_i}{T} = \frac{\partial \widetilde{\rho s}}{\partial \rho_i}.
\end{equation}
Exploiting \eqref{entropy-balance}, \eqref{specific-entropy}, \eqref{entropy-derivatives} and eliminating
time derivatives by means of the balance equations, a straightforward calculation yields
\begin{align}\label{entropy-inequality-0}
\zeta  &  =\,  \div (\Phi - \frac{\bf q}{T} + \sum_{i=1}^N  \frac{\mu_i {\bf j}_i}{T} ) - \frac 1 T (\rho e - \rho s T + P - \sum_{i=1}^N \rho_i \mu_i )
 \, \div {\bf v}\\
& \, + {\bf q} \cdot \nabla \frac 1 T + \frac 1 T  \stress^\circ : {\bf D}^\circ
- \sum_{i=1}^N {\bf j}_i \cdot \Big( \nabla \frac{\mu_i}{T}
-\frac{{\bf b}_i}{T} \Big) - \frac 1 T \sum_{a=1}^{N_R} R_a \mathcal{A}_a\nonumber
\end{align}
for the entropy production, where $\mathcal{A}_a=\sum_{i=1}^N M_i \nu_i^a \mu_i$ are the so-called affinities
and ${\bf A}^\circ$ denotes the traceless part of a second-rank tensor ${\bf A}$.
Now the entropy principle, see \cite{BD} for a strengthened version, requires that a closure for the entropy flux
leads to a reduced entropy production being a sum over binary products.
Here, we choose the entropy flux as
\begin{equation}\label{entropy-flux}
{\bf \Phi} := \frac{\bf q}{T} - \sum_{i=1}^N  \frac{\mu_i {\bf j}_i}{T}
\end{equation}
and obtain the reduced entropy production rate as
\begin{align}\label{entropy-inequality-1}
\zeta  &  =\,   - \frac 1 T (\rho e - \rho s T + P - \sum_{i=1}^N \rho_i \mu_i )
 \, \div {\bf v}\\
& \, + {\bf q} \cdot \nabla \frac 1 T + \frac 1 T  \stress^\circ : {\bf D}^\circ
- \sum_{i=1}^N {\bf j}_i \cdot \Big( \nabla \frac{\mu_i}{T}
-\frac{{\bf b}_i}{T} \Big) - \frac 1 T \sum_{a=1}^{N_R} R_a \mathcal{A}_a.\nonumber
\end{align}
The entropy production is a sum of binary products, running over all dissipative mechanisms which are acting in the mixture.
In the considered case they correspond to -- in the order of their appearance in \eqref{entropy-inequality-1} --
volume variations, heat conduction, shear strain, multicomponent diffusion and chemical reactions.

The discussion of different approaches for a thermodynamically consistent closure of the multicomponent diffusion fluxes is the
main subject of the present paper and starts in the next section. Here, for the sake of completeness,
let us briefly explain the treatment of the
other binary products. For technical simplicity we do not consider cross-effects between the different dissipative mechanisms\footnote{i.e., for technical simplicity, the Dufour- and Soret-effects of thermo-diffusion (among other
possible couplings) are not treated.}
which can be treated easily using entropy neutral mixing on the level of the entropy production rate; see \cite{BD} for more
details on this new concept to introduce cross-effects which are then accompanied by associated Onsager-Casimir relations for the phenomenological coefficients.

For the first three binary products we employ the standard linear (in the co-factors) closure, i.e.\ we let
\begin{align}
\rho e - \rho s T + P - \sum_{i=1}^N \rho_i \mu_i  &  :=\,   - \lambda \, \div {\bf v},\vspace{-0.1in}\label{closure-1}\\
{\bf q} & := \tilde\alpha \nabla \frac 1 T = - \alpha \nabla  T,\label{closure-2}\\[1.5ex]
\stress^\circ & := 2 \eta {\bf D}^\circ,\label{closure-3}
\end{align}
where the phenomenological coefficients $\lambda, \alpha, \eta \geq 0$ depend on the basic thermodynamic variables which are, after a Legendre transformation, $T,\rho_1, \ldots ,\rho_N$.
%{\color{red} Bem: In \eqref{closure-2} fehlt die thermische Diffusion (P-E).}
Since a closure without cross-effects renders every individual binary product non-negative for any thermodynamic process, it is already clear that in equilibrium, i.e.\ for vanishing entropy production, the co-factors must also vanish, i.e.\
\begin{equation}\label{equilibrium}
\div {\bf v} =0, \quad \nabla T = 0, \quad {\bf D}^\circ = {\bf 0}.
\end{equation}
Therefore, the mechanical pressure at equilibrium satisfies
\begin{equation}\label{mech-press-equi}
P_{|equ} = - \rho e + \rho s T + \sum_{i=1}^N \rho_i \mu_i.
\end{equation}
Motivated by \eqref{mech-press-equi}, we define the \emph{thermodynamic pressure} as
\begin{equation}\label{pressure}
p := - \rho e + \rho s T + \sum_{i=1}^N \rho_i \mu_i.
\end{equation}
In order to interchange $\rho e$ with $T$ as independent variables, we introduce the free energy density as
\begin{equation}\label{free-energy}
\rho \psi := \rho e - \rho s T.
\end{equation}
It is straightforward to show that with $\psi = \psi (T,\rho, y_1, \ldots ,y_N)$, where the mass fractions
\begin{equation}\label{mass-fractions}
y_i := \frac{\rho_i}{\rho}
\end{equation}
satisfy the constraint $\sum_{i=1}^N y_i = 1$,
the thermodynamic pressure fulfills the relation
\begin{equation}\label{pressure-MR}
p = \rho^2 \frac{\partial \psi}{\partial \rho}.
\end{equation}
Conversely, if we define the thermodynamic pressure, as is often done, according to \eqref{pressure-MR} than $p$
necessarily satisfies the Gibbs-Duhem (also called Euler) relation, i.e.\
\begin{equation}\label{Gibbs-Duhem}
\rho e - \rho s T + p = \sum_{i=1}^N \rho_i \mu_i.\vspace{-0.1in}
\end{equation}
If we now let\vspace{-0.2in}\\
\begin{equation}\label{dyn-pressure}
\pi := P-p
\end{equation}
denote the dynamic pressure, the closure \eqref{closure-1}
simply reads as
\[
\pi = - \lambda \, \div {\bf v}
\]
and models a viscous pressure contribution due to volume variation, where the material dependent parameter $\lambda$ is called
bulk viscosity.

Next, we recall the closure of the mass production rates according to \cite{BD},
where $R_a$ is decomposed into forward minus backward rate according to
\[
R_a=R_a^f -R_a^b.
\]
Since chemical reactions are activated processes which often occur far from equilibrium,
a linear (in the affinities) closure for $R_a$ is not appropriate.
Instead, we use the nonlinear closure
\begin{equation}
\label{closure-chem-rates}
\ln \frac{R_a^f}{R_a^b}= - \gamma_a \frac{\mathcal{ A}_a}{RT}
\quad \mbox{ with } \gamma_a >0
\end{equation}
which implies
\[
\zeta_{\rm CHEM} = R \sum_{a=1}^{N_R} \frac{1}{\gamma_a}
(R_a^f - R_a^b) (\ln R_a^f - \ln {R_a^b}) \geq 0,
\]
since the logarithm is monotone increasing. Notice that still one of the rates --
either for the forward or the backward path -- needs to be
modeled, while the form of the other one then follows from \eqref{closure-chem-rates}.
This logarithmic closure not only allows to include standard mass action kinetics into this framework,
but to provide thermodynamically consistent extensions;
cf.\ \cite{Dreyer-Guhlke-M}, where this is employed for a rigorous derivation and
an extension of the Butler-Volmer equation for fluid interfaces.

Because of the strict monotonicity of the logarithm, %$\ln x$,
the reactive contribution to the entropy
production only vanishes if {\it all reaction are separately in
equilibrium}, i.e.\ all forward and corresponding backward rates
coincide. This is an instance of the {\it principle of detailed
balance}, called Wegscheider's condition in the context of
chemical reaction kinetics.\\[1ex]
In addition to the required choice of appropriate phenomenological coefficients, the free energy $\rho \psi$
of the mixture needs to be modeled in order to arrive at a closed system of partial differential equations.
If the free energy is given as a function $\rho \psi = (\rho \psi) (T,\rho_1, \ldots , \rho_N)$, then
the chemical potentials can be obtained from
\begin{equation}
\mu_i (T,\rho_1, \ldots , \rho_N) = \frac{\partial (\rho \psi)}{\partial \rho_i} (T,\rho_1, \ldots , \rho_N)
\end{equation}
and the thermal equation of state, i.e.\ the relation $p=p(T,\rho_1, \ldots , \rho_N)$, is obtained from \eqref{pressure}.
While the modeling of appropriate free energies is a non-trivial important topic on its own, it lies outside of the scope of the present paper.
Let us only note in passing that a procedure for the construction of a consistent free energy, building on a given
thermal equation of state and on partial information on the chemical potentials, has been introduced in \cite{BD} and is currently being developed further \cite{BD-FE}.

To give a relevant prototype class for the underlying free energy model, let us introduce the so-called \emph{ideal mixtures} which are present in most of the classical references. We emphasize at the same time that the results of this paper are valid for much wider classes of mixtures than ideal ones. We call a mixture ideal if the chemical potentials obey the additive splitting
\begin{align}\label{idealchempot}
\mu_i = g_i(T, \, p) + \frac{c(T)}{M_i} \, \ln x_i \, .
\end{align}
The functions $g_1, \ldots,g_N$ are the Gibbs energies of the constituents, meaning that $\partial_{p}g_i(T, \, p) = 1/\hat{\rho}_i(T,p)$, where $\hat{\rho}_i(T,p)$ is the bulk density of the $i^{\rm th}$ constituent at $(T,p)$. The number $c(T)$ denotes some function of temperature only, usually $c(T) = RT$.

An equivalent characterization, shown in the upcoming paper \cite{BD-FE}, states that a mixture is ideal if and only if it is volume additive and simple with respect to the composition variable. The volume-additivity refers to the linear additive representation of the average molar volume
\begin{align}\label{volumeadditive}
\frac{1}{c} = \sum_{i=1}^N \frac{M_i}{\hat{\rho}_i(T,p)} \, x_i = \sum_{i=1}^N \partial_{p}g_i(T, \, p) \, M_i \, x_i \, .
\end{align}
It can be shown rigorously that the property of volume-additivity is equivalent with separation of the pressure and the composition variables according to
\begin{align}\label{volumeadditivechempot}
\mu_i = g_i(T, \, p) + a_i(T, \, {\bf x}) \, .
\end{align}
Here $a_i$ are certain maps related to the activities/fugacities of the species. Requiring also that $\mu_i$ depends on the composition of species $A_i$ only via $x_i$, which is the meaning of a simple dependence, it follows that $a_i(T, \, {\bf x}) = a_i(T, \, x_i)$. Then, rigorous arguments show that $a_i(T, \, x_i) =  c(T)/ M_i \, \ln x_i$, i.\ e.\, the representation \eqref{idealchempot} is necessary. We refer the interested reader to \cite{BD-FE} for complete proofs.

For more details about the classical theory of irreversible processes we refer in particular to \cite{dGM}. Concerning an extended
thermodynamical description involving partial momentum balances see \cite{BD} and further references given there.
Further information on the continuum thermodynamics of mixtures can be found, e.g., in the monographs
\cite{CT69}, \cite{M85},\cite{Raja}, \cite{Hutter-book}, \cite{KB-book} and \cite{PekarSamohyl}.

\section{Fick-Onsager Closure}\label{Fick-Onsager}
In 1945, Lars Onsager in \cite{Onsager-Diffusion} generalized Fick's law \eqref{Fick-1} to a non-dilute multicomponent liquid mixture by employing the closure
\begin{equation}
\label{Fick-MC}
{\bf j}_i^{\rm mol} = - \sum_{k=1}^N D_{ik} \nabla c_k
\end{equation}
with so-called binary diffusivities $D_{ik}$.
Recall that the molar concentration of species $A_k$ is given as
$c_k=\rho_k / M_k$ with corresponding balance equation \eqref{molar-mass-balance}.
In case $D_{ik}\neq 0$ for $i\neq k$, the flux of species $A_i$ obviously couples to the gradient of
concentrations of $A_k$, a phenomenon named \emph{cross-diffusion}.
In general, the closure above is not thermodynamically consistent:
\begin{enumerate}
\item
It does not comply with mass conservation unless, essentially, all
diffusivities are the same (cf.\ Chapter~7, Section~5 in \cite{Giovan});
\item
The co-factors of the mass fluxes in the entropy production, eq.\ \eqref{entropy-inequality-1},
are not (directly) involved, hence the second law will only be fulfilled in very particular cases.
\end{enumerate}

The classical theory of irreversible processes (T.I.P.), essentially in the form as briefly recalled in section~\ref{TIP} above, has been employed to generalize the Fickian closure from dilute (hence, in particular, ideal) systems to the non-dilute, non-ideal case.
Furthermore, it provides a consistent coupling between the partial mass and the total momentum balances.
The latter is extremely important, since pressure effects can lead to significant changes of diffusion fluxes,
mediated by the pressure dependence of the chemical potentials.
Even if no macroscopic flow is observed,
the pressure counteracts external forces and pressure gradients can be significant.
This is, e.g., relevant for modeling the transport processes in ultra-centrifuges, see \cite{TK-book},
and for the modeling of molecular transport in electrolytes, especially in the vicinity of electrodes, see \cite{Dreyer-Guhlke-M}.

In the present paper, the notion of ''Fick-Onsager diffusion fluxes'' shall be used \underline{not}
for \eqref{Fick-MC}, but for a thermodynamically consistent closure which yields expressions
for the ${\bf j}_i$ based on \eqref{entropy-inequality-1}.
In contrast to other approaches, the Fick-Onsager closure directly yields expressions for the diffusion fluxes, where the constraint \eqref{flux-constraint} is built in by elimination of one of them, say ${\bf j}_N$.
To keep this paper self-contained, we briefly show how to derive the relevant equations.
For better readability, we
specialize to the case ${\bf b}_i={\bf b}$ which can easily be generalized afterwards,
by replacing $\nabla \frac{\mu_i}{T}$ with $\nabla \frac{\mu_i}{T}-\frac{{\bf b}_i}{T}$.
With this simplification, the diffusional contribution to the entropy production reads as
\begin{equation}\label{entropy-diffusion}
    \zeta_{\rm DIFF}  \, =\,
- \sum_{i=1}^N {\bf j}_i \cdot \nabla \frac{\mu_i}{T},
\end{equation}
hence elimination of ${\bf j}_N$ via \eqref{flux-constraint} yields
\begin{equation}\label{entropy-Fick-i}
    \zeta_{\rm DIFF}  \, =\,
- \sum_{i=1}^{N-1} {\bf j}_i \cdot \nabla \frac{\mu_i -\mu_N}{T}.
\end{equation}
Restricting again to a linear (in the co-factors) closure, we let
\begin{equation}\label{Closure-Fick-i}
{\bf j}_i  \, :=\, - \sum_{k=1}^{N-1} L_{ik} \nabla \frac{\mu_k -\mu_N}{RT} \quad \mbox{ for } \; i=1,\ldots ,N-1
\end{equation}
with a symmetric and positive definite $(N-1)\times(N-1)$-matrix of phenomenological
(Onsager) coefficients $L_{ik}$.
These coefficients are sometimes called mobilities (note the different physical dimension
compared to the diffusivities introduced above) and
are functions of the state variables, i.e.\ $L_{ik}=L_{ik}(T,\rho ,{\bf y}' )$
%{\color{red} with ${\bf y}' =(y_1,\ldots ,y_{N-1})$; we write ${\bf y}'$ instead of ${\bf y}=(y_1,\ldots y_N)$ to account for the fact that only $N-1$ of the $y_i$ are independent. CHANGE?!}
In \eqref{Closure-Fick-i}, the universal gas constant $R$ has been inserted in order to simplify the
physical dimension of the $L_{ik}$. Note that $\mu_i^{\rm mol} /RT$ with the molar-based chemical
potential $\mu_i^{\rm mol}=M_i \mu_i$ is a dimensionless quantity.
The diffusion flux ${\bf j}_N$ follows from \eqref{flux-constraint}, resulting in
\begin{equation}\label{Closure-Fick-N}
{\bf j}_N  \, =\, - \sum_{k=1}^{N-1} L_{Nk} \nabla \frac{\mu_k -\mu_N}{RT},
\end{equation}
where
\begin{equation}\label{Lext1}
L_{Nk}:=- \sum_{i=1}^{N-1} L_{ik}.
\end{equation}
Evidently, the sums in \eqref{Closure-Fick-i} and \eqref{Closure-Fick-N} can also run up to $k=N$ for arbitrary
$L_{iN}$. We let
\begin{equation}\label{Lext2}
L_{iN}:=- \sum_{k=1}^{N-1} L_{ik} \quad \mbox{ for } \; i=1,\ldots ,N,
\end{equation}
which leads to a symmetric and positive semi-definite extended $N \times N$-matrix $[L_{ik}]$, and obtain
the symmetric (w.r.\ to the $\mu_k$) closure
\begin{equation}\label{Closure-Fick-ib}
{\bf j}_i  \, =\, - \sum_{k=1}^{N} L_{ik} \nabla \frac{\mu_k}{RT}
\quad \mbox{ for } \; i=1,\ldots ,N.
\end{equation}
The fluxes from \eqref{Closure-Fick-ib} are called ''Fick-Onsager diffusion fluxes'' throughout this paper.
Symmetry of $[L_{ik}]$ is assumed in accordance with the Onsager reciprocal relations and \eqref{Closure-Fick-ib}
is sometimes also referred to as the Fick-Onsager closure in the literature.

To facilitate the comparison of different closures, we rewrite the Fick-Onsager diffusion fluxes
employing a condensed tensor notation. For this purpose, we abbreviate the full system of fluxes as
\[
{\bf J} := [\, {\bf j}_1 | \cdots |\, {\bf j}_N ]^{\sf T}
\]
and let
\[
\nabla \frac{\boldsymbol\mu}{RT}  := [\, \nabla \frac{\mu_1}{RT} | \cdots |\, \nabla \frac{\mu_N}{RT} ]^{\sf T}.
\]
Then
\begin{equation}\label{Closure-Fick}
%{\bf J}^{\rm FO} \, :=\,
{\bf J} \, =\, - {\bf L} \, \nabla \frac{\boldsymbol\mu}{RT} \,
=\, - {\bf P}_N^{\sf T} \, {\bf L}' \,  {\bf P}_N \, \nabla \frac{\boldsymbol\mu}{RT},
\end{equation}
where ${\bf L}:=[L_{ik}] \in \R^{N\times N}$, ${\bf L}':=[L_{ik}]_{i,k=1}^{N-1} \in \R^{N-1\times N-1}$ and
\begin{equation}\label{proj-N}
{\bf P}_N = [\, {\bf I}_{N-1} | -{\bf e}' ]\in \R^{N-1\times N}
\end{equation}
with ${\bf e}'=(1,\dots , 1)^{\sf T} \in \R^{N-1}$ and ${\bf I}_{N-1}$ the identity on $\R^{N-1}$.
Column by column, \eqref{Closure-Fick} means
\[
\vec{j}^k = \, - {\bf P}_N^{\sf T} \, {\bf L}' \,  {\bf P}_N \, \partial_k \frac{\vec{\mu}}{RT}
\quad \mbox{ for $k=1,2,3$}
\]
with $\vec{j}^k$ the vector of all $k^{\rm th}$ components of the fluxes, $\partial_k$ denotes the partial derivative w.r.\ to the $k^{\rm th}$ spatial coordinate and $\vec{\mu}=(\mu_1,\ldots ,\mu_N)^{\sf T}={\boldsymbol\mu}^{\sf T}$.\\[1ex]
This Fick-Onsager closure within classical T.I.P.\ provides a thermodynamically consistent coupling
between the partial mass balances and the momentum and energy balances. It also yields explicit formulas for the ${\bf j}_i$,
i.e.\ no tedious inversion of a system of equations is required.
But this form of the closure also has two main disadvantages:
\begin{enumerate}%[(i)]
\item
For real mixtures, the phenomenological coefficients $L_{ik}$ show a complex nonlinear dependence on the composition. %\textcolor{red}{CHECK: $L_{ii} <0$ is possible.} {\color{blue} I think not, since ${\bf L}$ is positive semi-definite (P-E).}
This is observed in experiments (cf.\ \cite{TK-book}),
but also related to the qualitative behavior of solutions of the resulting PDE systems.
E.g., positivity of solutions may, in general, not persist globally in time for
constant $L_{ik}$.
\item
Since the symmetry w.r.\ to the $\mu_k$ in the closure above is only restored afterwards by extension of ${\bf L}'$
to $\bf L$ via \eqref{Lext1} and \eqref{Lext2}, a full matrix ${\bf L}'$, hence also $\bf L$, is to be expected.
In particular, the special case of a diagonal matrix ${\bf L}'$, while, e.g., appropriate
for dilute solutions, does not cover the general case.
In other words, possible cross-effects between the constituents interfere with the necessary coupling enforced by \eqref{flux-constraint}.
\end{enumerate}
The aspect of positivity of solutions will be discussed in more detail in section~\ref{positivity} below,
leading to structural information about the $L_{ik}$.

In order to arrive at a closed system of partial differential equations (PDEs), there remains the
highly non-trivial task to model the functional dependencies of all $L_{ik}$ on $(T,\rho_1, \ldots ,\rho_N)$.
In our opinion, the main purpose of the Maxwell-Stefan approach is precisely to provide a framework
for this modeling of the functional dependencies of the phenomenological coefficients
in a physically consistent manner; in particular in such a way
that positivity of concentrations is preserved during the time evolution. %; see section~\ref{positivity}.
\section{Maxwell-Stefan Closure}\label{section-MS}
In his classical paper \cite{Max}, James Clerk Maxwell used kinetic theory of gases
to derive a relation for the diffusion velocity of a binary mixture of simple gases.
Shortly after this, Josef Stefan essentially gave a continuum mechanical
derivation in \cite{Stef}, valid for a system of $N$ constituents. He employed the assumption
that every particle of a gas, if it is moving, encounters a
resistive force by every other gas, being proportional to the density of that gas and to the
relative velocity between the two.
He used this to formulate partial momentum balances which, in the diffusion approximation, lead to
\begin{equation}\label{MS-historical}
-\sum_{k\neq i} f_{ik} \rho_i \rho_k ({\bf u}_i -{\bf u}_k) = \nabla p_i,
\end{equation}
where $p_i$ are the partial pressures.
%He actually was also aware that, contrary to Maxwell's derivation, his force balance
%can be applied to liquid mixtures as well.

Later on, building on the work of Ludwig Boltzmann, i.e.\ the Boltzmann equations,
use of statistical mechanics gave rise to several types
of approximate solutions to the multi-species Boltzmann equations.
Hirschfelder, Curtiss and Bird in particular obtained the
so-called generalized driving forces ${\bf d}_i$ which were to
replace the partial pressure $p_i$; see
\cite{Hirsch} and cf.\ also \cite{Bird}. These have then been used
to formulate the reduced force balances (diffusional approximation), i.e.\ equation \eqref{MS-historical} but with right-hand side
${\bf d}_i$. An equivalent version reads as
\begin{equation}
\label{MS-naiv2}
-\sum_{k\neq i} \frac{x_k {\bf j}_i^{\rm mol} - x_i {\bf j}_k^{\rm mol}}{c\, \D_{ik}} = {\bf d}_i.
\end{equation}
The system \eqref{MS-naiv2} is nowadays referred to as the Maxwell-Stefan equations.

A short derivation of the Maxwell-Stefan equations
within T.I.P.\ employs the so-called resistance form in which
the role of the co-factors in the diffusional entropy production is exchanged; cf.\ \cite{KB-book}.
To start with, we have
\[
    \zeta_{\rm DIFF}  \, =\,
- \sum_{i=1}^{N} {\bf j}_i \cdot \big( \nabla \frac{\mu_i}{T} - \Lambda \big)
\]
for any vector field $\Lambda$ due to \eqref{flux-constraint}, which we rewrite as
\begin{equation}\label{entropy-MSa}
    \zeta_{\rm DIFF}  \, =\,
- \sum_{i=1}^{N} {\bf u}_i \cdot \rho_i \big( \nabla \frac{\mu_i}{T} - \Lambda \big).
\end{equation}
By choosing $\Lambda = \sum_k y_k \nabla \frac{\mu_k}{T}$,
this yields the alternative representation of the diffusional entropy production as
\begin{equation}\label{entropy-MSb}
    \zeta_{\rm DIFF}  \, =\,
- R \,\sum_{i=1}^{N} {\bf u}_i \cdot {\bf d}_i,
\end{equation}
where
\begin{equation}\label{di}
{\bf d}_i = \rho_i \big( \nabla \frac{\mu_i}{RT} - \sum_{k=1}^{N} y_k \nabla \frac{\mu_k}{RT} \big)
%= \big[ I - e \otimes y \big] .
\end{equation}
satisfies
\begin{equation}\label{d-sum}
\sum_{i=1}^N {\bf d}_i =0.
\end{equation}
Above, we again expanded by the universal gas constant $R$ in order to get the combination $RT$ inside
the driving forces.

Eliminating ${\bf d}_N$ by means of \eqref{d-sum}, the Maxwell-Stefan system follows from the linear closure
\begin{equation}\label{Closure-MS-i}
{\bf d}_i  \, =\, - \sum_{k=1}^{N-1} \tau_{ik} \big( {\bf u}_k - {\bf u}_N \big) \quad \mbox{ for } \; i=1,\ldots ,N-1
\end{equation}
with a positive definite matrix ${\boldsymbol \tau}' =[\tau_{ik}]\in \rr^{N-1\times N-1}$ of phenomenological coefficients.
Extending the matrix ${\boldsymbol \tau}'$ to an $N\times N$-matrix ${\boldsymbol \tau}$
in a manner fully analogous to the extension
of ${\bf L}'$ to ${\bf L}$, i.e.\ such that
\begin{equation}\label{tau-extension}
\sum_{i=1}^{N} \tau_{ik}=0 \; \mbox{ for all $k$ \quad and } \quad \sum_{k=1}^{N} \tau_{ik}=0 \; \mbox{ for all $i$},
\end{equation}
we obtain
\begin{equation}\label{Closure-MS-ii}
{\bf d}_i  \, =\, \sum_{k=1}^{N} \tau_{ik} ( {\bf u}_i - {\bf u}_k )  \quad \mbox{ for } \; i=1,\ldots ,N.
\end{equation}
The ''interaction'' coefficients $\tau_{ik}$ are functions of the thermodynamic state variables,
i.e.\ $\tau_{ik}=\tau_{ik}(T,\rho_1, \ldots , \rho_N)$; in particular, they depend on the composition.
%{\color{blue} Ich habe doch diesen Variablensatz behalten; Problem ist, dass sonst die $y_k$ von allen $\rho_l$ abhängen.}
%NICHTEINDEUTIGKEIT der $\tau_{ik}$!\\

From here on, we will also assume symmetry of the interaction coefficients, i.e.\ $\tau_{ik} = \tau_{ki}$.
Usually, as in case of the Fick-Onsager closure, this assumption is added referring to Onsager symmetry.
Let us note that this symmetry necessarily holds in case of binary-type interactions, i.e.\ if
\begin{equation}
\label{binary-interaction}
\tau_{ik}=\tau_{ik}(T, \rho_i, \rho_k) \to 0 \quad \mbox{ whenever } \; \rho_i \rho_k \to 0+.
\end{equation}
Indeed, adapting an argument from \cite{CT62}, insertion of \eqref{Closure-MS-ii} into \eqref{d-sum} yields
\[
0 = \sum_{i,k=1}^N \tau_{ik} ( {\bf u}_i - {\bf u}_k ) = \sum_{1\leq i<k \leq N} (\tau_{ik} - \tau_{ki} )( {\bf u}_i - {\bf u}_k )
\]
for any thermodynamic process which the mixture is undergoing.
Considering processes in which  $\rho_j =0$ for all $j\neq i,k$, \eqref{binary-interaction} implies
\[
(\tau_{ik} - \tau_{ki} )( {\bf u}_i - {\bf u}_k ) = 0.
\]
Hence $\tau_{ik} = \tau_{ki}$ if $\rho_j=0$ for all $j\neq i,k$. Now, since $\tau_{ik}$ only depends on $\rho_i$ and $\rho_k$
under the assumption \eqref{binary-interaction}, this yields $\tau_{ik} = \tau_{ki}$ independently of $(\rho_1, \ldots ,\rho_N)$.

Evidently, the assumption \eqref{binary-interaction} of binary interactions (hence symmetry of $[\tau_{ik} ]$)
is satisfied by letting
\begin{equation}\label{fric-coeff}
\tau_{ik} = -\rho f_{ik} y_i y_k \quad \mbox{ for all $i,k=1,\ldots ,N$  with } i\neq k,
\end{equation}
where the phenomenological coefficients $f_{ik}$ ($i\neq k$) satisfy $f_{ik}=f_{ki}$.
Below, we relax the assumption that the $f_{ik}$ only depend on $\rho_i,\, \rho_k$ but allow $f_{ik}$ to depend on the thermodynamic state variables $(T, \rho, {\bf y}')$.

% {\color{red} Bem: Vielleicht ${\bf y}$ statt ${\bf y}^{\prime}$ (P-E)?
% Das geht, aber dann sollten wir an geeigneter Stelle eine Erlaeuterung einfuegen, wie wir das meinen (d.h. immer mit dem evtl. nicht explizit genannten constraint an die $y_i$). TODO 5 (P-E?): Bitte über die einheitliche Wahl eines Variablensatzes nachdenken.
% Dazu muss alles angeschaut werden. Ggf. aber lieber so lassen!}

This dependence is assumed to be \underline{regular}, by
which we here mean that the $f_{ik}$ are smooth in all variables
and bounded in ${\bf y}'$. The explicit factor $\rho$ in \eqref{fric-coeff} refers to the fact that we model mass diffusion fluxes $\rho_i {\bf u}_i$ and leads to a slightly simpler final form.
Let us note in passing that more general relations, for instance $\tau_{ik} = -\rho f_{ik} y_i^{\alpha} y_k^{\alpha}$ with $\alpha > 0$, would also be possible
and might be interesting if fast or slow diffusion is to be modeled.

Given such $f_{ik}$ for $i\neq k$, the $f_{ii}$ %or, rather, the $f_{ii} y_i$
have to be chosen in such a way that
\eqref{tau-extension} is fulfilled, i.e.\ such that
\begin{equation}
f_{ii}y_i + \sum_{k\neq i} f_{ik} y_k =0 \;\mbox{ for $y_1,\ldots ,y_N >0$ with } \sum_{l=1}^N y_l =1.
\end{equation}
This is always possible since the $f_{ii}$ are irrelevant for the sum in \eqref{Closure-MS-ii}.

Let us sum up the assumptions on ${\boldsymbol \tau}=[\tau_{ik} ]\in \rr^{N\times N}$:
\begin{equation}\label{fik-assumptions}
{\boldsymbol \tau}={\boldsymbol \tau}^{\sf T},\;\;
{\boldsymbol \tau} {\bf e}=0,\;\;
\langle {\boldsymbol \tau} {\bf z}, {\bf z} \rangle >0
\;\; \forall \; 0\neq {\bf z}\in \{{\bf e} \}^\perp,\;\;
\tau_{ik} = -\rho f_{ik} y_i y_k \; (i\neq k)
\end{equation}
with $f_{ik}$ being regular functions of $(T,\rho, {\bf y}')$, resp.\ of $(T,\rho_1, \ldots ,\rho_N)$, for all $i\neq k$.
Concerning the positive definiteness of ${\boldsymbol \tau}$ on $\{\bf e \}^\perp$, observe that this follows from
the positive definiteness of ${\boldsymbol \tau}'$ on $\rr^{N-1}$ since
\begin{equation}\label{tau-tau}
\langle {\boldsymbol \tau} {\bf z}, {\bf z} \rangle =
\sum_{i,k=1}^N \tau_{ik} z_i z_k = \sum_{i,k=1}^{N-1} \tau_{ik} (z_i -z_N) ( z_k-z_N)=
\langle {\boldsymbol \tau}' {\bf z}', {\bf z}' \rangle
\end{equation}
with $z_i' = z_i -z_N$ for $i=1,\ldots ,N-1$, and $0\neq {\bf z}\in \{\bf e \}^\perp$
implies ${\bf z}'\neq 0$ for ${\bf z}' \in \rr^{N-1}$.
Vice versa, this also shows that ${\boldsymbol \tau}'$ is positive definite on $\rr^{N-1}$
if ${\boldsymbol \tau}$ is positive definite on $\{\bf e \}^\perp$, i.e.\ consistency of \eqref{tau-tau}
with the original closure
in \eqref{Closure-MS-i}: given $0\neq {\bf z}' \in \rr^{N-1}$, let $z_N=-\frac 1 N \langle {\bf z}',{\bf e}' \rangle$,
$z_i=z_i' +z_N$ for $i<N$ and apply \eqref{fik-assumptions} and \eqref{tau-tau}.

The $f_{ik}$ are usually interpreted as ''friction coefficients'', hence $f_{ik}>0$ would be a reasonable assumption,
explaining the minus sign in \eqref{fric-coeff}.
Let us note that, while $f_{ik}>0$ is \underline{not} implied by \eqref{fik-assumptions}, the converse holds
for non-vanishing mass fractions $y_i$ for all constituents:
if ${\boldsymbol \tau}$ has off-diagonal entries given by \eqref{fric-coeff} with $f_{ik}=f_{ki}>0$ for all $i\neq k$
and diagonal entries such that \eqref{tau-extension} holds, then ${\boldsymbol \tau}$ has all properties listed in
\eqref{fik-assumptions}. Indeed, this follows from
\begin{equation}\label{entropy-MSc}
\langle {\boldsymbol \tau} {\bf z}, {\bf z} \rangle =
 \,\sum_{1\leq i<k \leq N} \rho f_{ik} y_i y_k (z_i - z_k)^2,
\end{equation}
which shows that $\langle {\boldsymbol \tau} {\bf z}, {\bf z} \rangle\geq 0$ and $\langle {\boldsymbol \tau} {\bf z}, {\bf z} \rangle=0$ only if $z_i=z_k$ for all $i\neq k$ in which case ${\bf z}\in {\rm span}({\bf e})$.

Combining \eqref{di}, \eqref{Closure-MS-ii} and \eqref{fric-coeff}, the Maxwell-Stefan equations in
a mass-based form read as
\begin{equation}\label{MS-mass-based}
- \sum_{k=1}^{N} f_{ik} (y_k  {\bf j}_i - y_i {\bf j}_k )   \, =\, \rho_i \big( \nabla \frac{\mu_i}{RT} - \sum_{k=1}^{N} y_k \nabla \frac{\mu_k}{RT} \big) \quad \mbox{ for } i=1,\ldots ,N.
\end{equation}
The right-hand side of \eqref{MS-mass-based} can be rewritten, using the Gibbs-Duhem relation \eqref{Gibbs-Duhem}.
For this purpose, let $g:=\psi +p/\rho$
denote the specific Gibbs free energy which satisfies $g=\sum_{k=1}^N y_k \mu_k$ due to \eqref{Gibbs-Duhem}.
As is well-known, $g$ is related to $\psi$ (and $s$) by means of a Legendre transform such that
\eqref{entropy-derivatives} implies
\[
dg = -s \, dT + \frac 1 \rho dp + \sum_{k=1}^N \mu_k dy_k \quad \mbox{ with } \sum_{k=1}^N y_k =1.
\]
Together with $dg = d(\sum_{k=1}^N y_k \mu_k)$ this yields
\[
\sum_{k=1}^N  y_k \nabla \mu_k = -s \nabla T + \frac 1 \rho \nabla p,
\]
hence
\begin{equation}\label{GD-grad}
\sum_{k=1}^N  y_k \nabla \frac{\mu_k}{T} = \frac 1 {\rho T} \nabla p + h \nabla \frac 1 T,
\end{equation}
where $h:=e+p/\rho$ is the specific enthalpy.
Consequently, a second form of the mass-based Maxwell-Stefan equations reads as
\begin{equation}\label{MS-mass-based2}
- \sum_{k=1}^{N} f_{ik} (y_k  {\bf j}_i -  y_i  {\bf j}_k )   \, =\, \rho_i \nabla \frac{\mu_i}{RT}
- \frac{y_i}{RT} \nabla p - \frac{\rho_i h}{R}  \nabla \frac 1 T \quad \mbox{ for } i=1,\ldots ,N.
\end{equation}
The right-hand side of \eqref{MS-mass-based2} defines--up to the factor $1/R$--the so-called generalized thermodynamic driving forces.
The different contributions are attributed to compositional (also called molecular) diffusion, pressure diffusion and thermal diffusion, in the order of their appearance. If individual body forces ${\bf b}_i$ are present, the additional term
$\rho_i \frac{{\bf b}_i - {\bf b}}{R T}$ appears on right-hand side, inducing so-called forced diffusion.
The latter is for instance present in transport processes involving charged species (ions) due to the intrinsic electrical field.
Recall that we do not include thermo-diffusive coupling for technical simplicity; otherwise, additional terms would appear in
\eqref{MS-mass-based2}.

Let us note in passing that a more refined class-II model yields the same expression but with the partial enthalpy
$\rho_i h_i$ instead of $\rho_i h$ in the last term; see \cite{BD}.
This is consistent with kinetic gas theory in terms of the multi-species Boltzmann equations, from which
the generalized Maxwell-Stefan equations have been originally derived; cf.\ \cite{Hirsch}.
Let us also note that in Chemical Engineering, the molar-based variant of the Maxwell-Stefan equations is more common.
Later, we will need this form as well which is therefore included in Appendix~\ref{MS-ChemEng}.

We rewrite \eqref{MS-mass-based} in tensorial notation as
\begin{equation}\label{Closure-MSa}
- {\bf B} \, {\bf J} \, =\,
\mathbf{R} \, {\bf P} \, \nabla \frac{\boldsymbol{\mu}}{RT}
\end{equation}
with ${\bf B}={\bf B}(T,\rho, {\bf y})=[B_{ij}(T,\rho, {\bf y})]$, where ${\bf y}=(y_1,\ldots ,y_N)$ with $\sum_{i=1}^N y_i =1$,
\begin{equation}\label{MS-matrix}
B_{ij}=- y_i f_{ij} \;\mbox{ for } i\neq j, \quad B_{ii}= \sum_{k\neq i} y_k f_{ik},
\end{equation}
$\mathbf{R}={\rm diag}(\rho_1, \ldots , \rho_N)$ and ${\bf P}$ denotes the projection
\begin{equation}\label{projection}
{\bf P}= {\bf I} - {\bf e}\otimes {\bf y} \; \mbox{ with } {\bf e}=(1, \ldots, 1)^{\sf T} \in \rr^N.
\end{equation}
From here on we write ${\bf B}({\bf y})$ to stress the fact that the entries $B_{ij}$ depend in particular on the composition.
Note also that, with ${\bf Y}={\rm diag} (y_1, \ldots ,y_N)$,
\begin{equation}\label{Commute}
\mathbf{R} \, {\bf P} = \rho {\bf Y} \big[ {\bf I} - {\bf e}\otimes {\bf y} \big] =
\rho \big[ {\bf I} - {\bf y}\otimes {\bf e} \big] {\bf Y} = {\bf P}^{\sf T} \mathbf{R}.
\end{equation}
Since
\[
{\bf e}^{\sf T} {\bf J}=0
\quad \mbox{and} \quad
{\bf e}^{\sf T} \mathbf{R} \, {\bf P} \, \nabla \frac{\boldsymbol{\mu}}{RT}
= {\bf e}^{\sf T} {\bf P}^{\sf T} \, \mathbf{R} \, \nabla \frac{\boldsymbol{\mu}}{RT}=0,
\]
equation \eqref{Closure-MSa} means to solve
\begin{equation}\label{Bzd}
- {\bf B}({\bf y}) \, {\bf z} = {\bf d}
\end{equation}
for given right-hand side ${\bf d}\in \{\bf e \}^\perp$ such that the solution satisfies ${\bf z}\in \{\bf e \}^\perp$.
Since
\begin{equation}\label{imkerB}
{\rm im} ({\bf B}({\bf y}))=\{\bf e \}^\perp \quad \mbox{and}\quad {\rm ker} ({\bf B}({\bf y}))={\rm span} \{ \bf y \},
\end{equation}
equation \eqref{Closure-MSa} cannot be solved by inversion of ${\bf B}({\bf y})$ as a map on all of $\rr^N$.
To resolve the Maxwell-Stefan equations we hence make use of generalized inverse matrices. In the context of multicomponent diffusion, this approach was introduced by Giovangigli, cf.\ \cite{Giovan}. In the case of positive fractions $y_i$, the relations \eqref{imkerB} show that ${\bf B}({\bf y})$ possesses rank $N-1$, and that the zero eigenvalue is associated with the strictly positive right-eigenvector ${\bf y}$ and left-eigenvector ${\bf e}$. In this situation, we can introduce the unique \emph{group inverse} ${\bf B}^{\sharp} = {\bf B}^{\sharp}({\bf y}) $ of ${\bf B}({\bf y})$. Among other properties, it satisfies ${\bf B}^{\sharp} \, {\bf B} = {\bf I} - {\bf y} \otimes {\bf e} = {\bf B} \, {\bf B}^{\sharp}$, and ${\bf B}^{\sharp} {\bf y} = 0 = ({\bf B}^{\sharp})^{\sf T} {\bf e}$ (see the Appendix, section \ref{Drazin}).
%having also rank $N-1$.

Applying the group inverse to \eqref{Closure-MSa}, we obtain that
\begin{equation}\label{Closure-MSJbbis}
{\bf J}= - {\bf B}^{\sharp} ({\bf y}) \, \mathbf{R} \, {\bf P} \, \nabla \frac{\boldsymbol{\mu}}{RT}
= - {\bf B}^{\sharp} ({\bf y}) \, \mathbf{R} \, \nabla \frac{\boldsymbol{\mu}}{RT}  \, .
\end{equation}
Using the properties of the generalized inverse, it is a short exercise to prove the relationship
\begin{align}\label{Bandtau}
{\bf B}^{\sharp}({\bf y}) = {\bf P}^{\sf T} \, {\bf R} \, \boldsymbol{\tau}^{\sharp} \, {\bf P}^{\sf T} \, ,
\end{align}
where $\boldsymbol{\tau} = {\bf B}({\bf y}) \, {\bf R}$ is the matrix satisfying \eqref{fik-assumptions}.
Consequently, \eqref{Commute} shows that
\begin{align}\label{Bandtau2}
{\bf B}^{\sharp}({\bf y}) \, {\bf R} = {\bf P}^{\sf T} \, {\bf R} \, \boldsymbol{\tau}^{\sharp} \, {\bf R} \, {\bf P} \, ,
\end{align}
which establishes the symmetry and the positivity of ${\bf B}^{\sharp}({\bf y}) \, {\bf R}$ from the natural properties of $\boldsymbol{\tau}$.
% Compared to \eqref{Closure-MSJb}, the latter representation of the fluxes is free of the parameter $\alpha$, and the projector ${\bf P}$ is already incorporated into ${\bf B}^{\sharp}$. In fact, $ {\bf B}_\alpha ({\bf y})^{-1} = {\bf B} ({\bf y})^{\sharp} + \alpha^{-1} \, {\bf y} \otimes {\bf e}$, which is proved in
Appendix~\ref{Drazin} provides additional facts on the generalized inverse. Estimates concerning the representation \eqref{Closure-MSJbbis} and the relationship of ${\bf B}^{\sharp}$ to the coefficients of the Fick-Onsager matrix are discussed in our main theorems below.
\section{A Novel Consistent Closure Scheme}\label{sec-novel-scheme}
While constant $f_{ik}$ are, in principle, admissible and--as will become clear in section~\ref{positivity} below--also lead to preservation of positivity of the solutions of the partial mass balances, data from experimental measurements as well as from
molecular dynamics simulations show that the $f_{ik}$ depend on the composition in a non-trivial manner; see, e.g., \cite{TK-book}.
%indicating that the employed structure might not be the optimal way to formulate the closure relations.
Furthermore, a disadvantage of the Maxwell-Stefan approach is that the fluxes are defined implicitly, requiring
the inversion of the Maxwell-Stefan equations which is computationally expensive.
For small systems, the inversion is usually done by eliminating one of the fluxes,
i.e.\ in an analogous way as in the generalized Fick-Onsager approach. This, again, breaks the symmetry w.r.\ to the constituents and complicates the required linear algebra considerably.

We aim for a closure as simple as the Fick-Onsager one, but having the advantages of the Maxwell-Stefan approach.
The new scheme is based on two simple ideas, both related to a certain symmetry aspect:
\begin{enumerate}
\item
The constraint \eqref{flux-constraint} on the fluxes should be incorporated without breaking
the symmetry w.r.\ to the constituents
\item
The decomposition of the binary products in the diffusive entropy production into co-factors should be done in a symmetric way.
\end{enumerate}

To avoid breaking the symmetry w.r.\ to the $A_i$, instead of incorporating the constraint \eqref{flux-constraint},
we prevent interference of the latter with the core closure process
by starting with general diffusion velocities taken against an undetermined reference velocity ${\bf v}^\ast$.
This way, we also keep an advantage present in the Maxwell-Stefan approach, namely a closure which is independent of a specific reference system chosen to define specific diffusion fluxes.
We therefore consider diffusion fluxes according to
\[
{\bf j}_i^\ast = \rho_i ({\bf v}_i -{\bf v}^\ast).
\]
This introduces one more unknown vector field, compensating for the additional equation \eqref{flux-constraint}.
Let us also note in passing that we need to consider a relative velocity for obtaining an objective constitutive quantity.

Evidently, the barycentric diffusion fluxes ${\bf j}_i$ are then given as
\begin{equation}\label{velo-relation}
{\bf j}_i =   {\bf j}_i^\ast -  y_i \sum_{k=1}^N {\bf j}_k^\ast, \; \mbox{ or } \;
{\bf J} = {\bf P}^{\sf T}\, {\bf J}^\ast
\end{equation}
with the projection
\[
{\bf P}^{\sf T}={\bf I}-{\bf y}\otimes {\bf e};
\]
recall that the transposed projection
${\bf P}={\bf I}-{\bf e}\otimes {\bf y}$ has already been introduced in the context of the Maxwell-Stefan closure.
We then have
\begin{equation}\label{entropy-diffusion4}
    \frac 1 R \zeta_{\rm DIFF}  \, =\,
- \langle {\bf P}^{\sf T} \,  {\bf J}^\ast ,   \nabla \frac{\boldsymbol{\mu}}{RT} \rangle
= - \langle  {\bf J}^\ast  , {\bf P} \, \nabla \frac{\boldsymbol{\mu}}{RT}\rangle.
\end{equation}
Since the ${\bf j}_i^\ast$, in contrast to the barycentric diffusion fluxes ${\bf j}_i$,
are unconstrained, we can employ the linear (in the co-factors) closure for the full system of diffusion fluxes,
i.e.\ we let
\begin{equation}\label{closure-BP0}
{\bf J}^\ast  \, =\, - {\bf L}\, {\bf P} \, \nabla \frac{\boldsymbol{\mu}}{RT}
\end{equation}
with a positive definite and symmetric matrix ${\bf L}=[L_{ij}]$. Hence
\begin{equation}\label{closure-BP0b}
{\bf J}  \, =\, - {\bf P}^{\sf T}\, {\bf L}\, {\bf P} \,\nabla \frac{\boldsymbol{\mu}}{RT}.
\end{equation}
At this point note that, here, in contrast to both the Fick-Onsager and the Maxwell-Stefan closure,
a diagonal closure (i.e., with diagonal ${\bf L}$) is possible, ignoring additional cross-effects which are not covered by the projections. This ''core-diagonal'' special case is interesting in itself and will be studied in section~\ref{Darken} below.

But one difficulty remains to be resolved: in the constitutive relation \eqref{closure-BP0},
the phenomenological coefficients $L_{ij}$ are not diffusivities, having the physical units ${\rm kg}^2 {\rm m}^{-1}  {\rm s}^{-1} {\rm mol}^{-1}$ instead of ${\rm m}^2 {\rm s}^{-1}$.
This indicates that the $L_{ij}$ contain, in particular, factors with the dimension of mass densities.
Such factors are in fact required since without additional structure, positivity of solutions to the resulting final PDE system cannot persist as will be explained in more detail in section~\ref{positivity}.
Consequently, in order to minimize the need for composition-dependence of the phenomenological coefficients,
we have to slightly adjust the approach.
At this point, observe that the physical dimension of the phenomenological coefficients can be changed by shuffling factors within the binary products. Recall that the latter was an important ingredient for the Maxwell-Stefan closure in section~\ref{section-MS}. Employing the corresponding representation of the diffusional entropy production according to
\begin{equation}\label{entropy-diffusion5}
    \frac 1 R \zeta_{\rm DIFF}  \, =\,
- \langle  {\bf U}  ,  {\bf R} \, \nabla \frac{\boldsymbol{\mu}}{RT}\rangle \, =\,
 - \langle  {\bf U}^\ast  , {\bf P}^{\sf T} \, {\bf R} \, \nabla \frac{\boldsymbol{\mu}}{RT}\rangle,
\end{equation}
where  ${\bf U}=[{\bf u}_1 | \cdots | {\bf u}_N]^{\sf T}$ and
${\bf U}^\ast=[{\bf u}_1^\ast | \cdots | {\bf u}_N^\ast]^{\sf T}$ with  ${\bf u}_i^\ast={\bf v}_i-{\bf v}^\ast$,
we arrive at the closure
\begin{equation}\label{closure-BP1}
{\bf J}  \, =\, - {\bf R}\, {\bf P} \, {\bf L}\, {\bf P}^{\sf T} \, {\bf R} \,\nabla \frac{\boldsymbol{\mu}}{RT}.
\end{equation}
In this constitutive relation, the $L_{ij}$ have physical dimension ${\rm m}^5 {\rm s}^{-1} {\rm mol}^{-1}$, i.e.\ according to these units, they shall contain reciprocals of concentrations as factors.

In order to obtain diffusivities as phenomenological coefficients, thus avoiding the need for hidden factors
(mass densities or molar concentration) which introduce strong dependence on composition,
we once again start from
\begin{equation}\label{entropy-diffusion5b}
    \frac 1 R \zeta_{\rm DIFF}  \, =\,
 - \langle  {\bf U}^\ast  , {\bf P}^{\sf T} \, {\bf R} \, \nabla \frac{\boldsymbol{\mu}}{RT}\rangle  \, =\,
 - \langle  {\bf U}^\ast  , {\bf R} \, {\bf P} \, \nabla \frac{\boldsymbol{\mu}}{RT}\rangle
\end{equation}
but change from $\mu_i$ to $\mu_i^{\rm mol}=M_i \mu_i$,
leading to
\begin{equation}\label{entropy-diffusion6}
    \frac 1 R \zeta_{\rm DIFF}  \, =\,
 - \langle  {\bf U}^\ast  , {\bf C} \, {\bf P}_{\rm mol} \, \nabla \frac{\boldsymbol{\mu}^{\rm mol}}{RT}\rangle
\end{equation}
with the projection
\[
{\bf P}_{\rm mol} \, =\, {\bf M} \, {\bf P} \, {\bf M}^{-1}.
\]
According to (ii) above, we now distribute the factor ${\bf C}={\rm diag}(c_1,\ldots ,c_N)$ symmetrically between the co-factors in \eqref{entropy-diffusion6}.
In order to avoid fractional physical dimensions, we factor out the total concentration $c$, employing ${\bf C}=c {\bf X}$. Hence, we build our closure on the representation
\begin{equation}\label{entropy-diffusion7}
\frac 1 {Rc} \zeta_{\rm DIFF}  \, =\,
 -  \, \langle  {\bf X}^{1/2} {\bf U}^\ast  , {\bf X}^{1/2}\, {\bf P}_{\rm mol} \, \nabla \frac{\boldsymbol{\mu}^{\rm mol}}{RT}\rangle.
\end{equation}
Linear (in the co-factors) closure gives
\begin{equation}\label{closure-B0}
{\bf X}^{1/2} {\bf U}^\ast  \, =\, - {\bf D}\, {\bf X}^{1/2}\, {\bf P}_{\rm mol} \, \nabla \frac{\boldsymbol{\mu}^{\rm mol}}{RT}
\end{equation}
with a symmetric matrix ${\bf D}$ of diffusivities $D_{ij}$ which is positive definite on $\{ \sqrt{\bf x} \}^\perp$,
where $\sqrt{\bf x}:=(\sqrt{x_1},\ldots , \sqrt{x_N})$; note that
\begin{equation}\label{rangexhalfpm}
{\rm im}({\bf X}^{1/2}\, {\bf P}_{\rm mol})={\rm ker}({\bf P}_{\rm mol}^{\sf T} \, {\bf X}^{1/2})^\perp
=\{ {\bf z}: {\bf P}^{\sf T}\,{\bf M}\,{\bf X}^{1/2}\, {\bf z}=0 \}^\perp
=\{ \sqrt{\bf x} \}^\perp.
\end{equation}
This yields diffusion velocities according to
\begin{equation}\label{closure-B1}
 {\bf U}  \, =\, -{\bf P} \, {\bf X}^{-1/2}\, {\bf D}\, {\bf X}^{1/2}\, {\bf P}_{\rm mol} \, \nabla \frac{\boldsymbol{\mu}^{\rm mol}}{RT}.
\end{equation}
In this formulation, where the necessary couplings due to conservation of total mass are accounted for
by the projections, the off-diagonal elements of ${\bf D}$ model what one might call ''true cross-effects''
which are not enforced by \eqref{flux-constraint}.
Concerning these cross-effects, we assume them--similar to interactions of binary type but allowing for dependence on the full composition--to vanish if one of the involved constituent is absent, i.e.
\begin{equation}
D_{ij}\to 0 \;\mbox{for $i\neq j$ whenever }\; x_i x_j \to 0.
\end{equation}
Because of symmetry of ${\bf D}$ and the form of the factors immediately left and right of ${\bf D}$, we incorporate this by assuming
\begin{equation}\label{novel-D}
{\bf D} = {\mathcal D} + {\bf X}^{1/2}\, {\bf K}\, {\bf X}^{1/2},
\end{equation}
where ${\mathcal D}={\rm diag}(d_1, \ldots ,d_N)$, ${\bf K}={\bf K}^{\sf T}$ and ${K}_{ii}=0$ for all $i$.
Insertion of this structure for ${\bf D}$ into \eqref{closure-B1} yields the closure for mass fluxes according to
\begin{equation}\label{closure-B2}
 {\bf J}  \, =\, -{\bf P}^{\sf T} \, {\bf R}\, [{\mathcal D}+ {\bf K}\, {\bf X}]\, {\bf M}\, {\bf P} \, \nabla \frac{\boldsymbol{\mu}}{RT}.
\end{equation}
Notice that the full coefficient matrix is symmetric since ${\bf X}\, {\bf M}={\bf R}/c$.
Equivalently, the fluxes according to \eqref{closure-B2} are of the form
\begin{equation}\label{closure-B3}
 {\bf J}  \, =\, -{\bf P}^{\sf T} \,  [\tilde{\mathcal D} + {\bf Y}\, \tilde{\bf K} ]\, {\bf P}^{\sf T} \, {\bf R}\, \nabla \frac{\boldsymbol{\mu}}{RT}
\end{equation}
with $\tilde{\mathcal D}:={\mathcal D}\, {\bf M}$ and $\tilde{\bf K}:=\frac{\rho}{c} \, {\bf K}$.

Now notice that $\tilde{\mathcal D}+{\bf Y}\, \tilde{\bf K}$ contains $N+(N-1)N/2$ model parameters, while there are
only $(N-1)N/2$ parameters in both the Fick-Onsager and the Maxwell-Stefan model.
This is consistent, since only the restriction of $\hat{\mathcal D}+{\bf Y}\, \tilde{\bf K}$
to the $(N-1)$-dimensional subspace ${\rm im}({\bf P}^{\sf T})$ is relevant.
Hence the system of diffusion fluxes ${\bf J}$ does not uniquely determine the coefficients of $\tilde{\mathcal D}$ and $\tilde{\bf K}$, respectively of ${\mathcal D}$ and ${\bf K}$,
and there are different options on how to make this choice unique.
While we comment on other options later on, we here exploit the fact that
$\tilde{\bf K}$ in \eqref{closure-B3} can be replaced by
\begin{equation}\label{Kmodify}
\hat{\bf K}:=\tilde{\bf K} + {\bf a}\otimes {\bf e} + {\bf e}\otimes {\bf a}
\end{equation}
without changing ${\bf J}$. The latter follows from
\[
{\bf P}^{\sf T}\, {\bf Y}\, [ {\bf a}\otimes {\bf e} + {\bf e}\otimes {\bf a} ] \, {\bf P}^{\sf T}\, = \, 0
\]
for arbitrary ${\bf a}\in \rr^N$, because
\[
{\rm im}({\bf P}^{\sf T})={\rm ker}({\bf P})^\perp=\{ {\bf e} \}^\perp
\;\mbox{ and } \;
{\rm ker}({\bf P}^{\sf T})= {\rm span}({\bf y}).
\]
Now a natural choice for ${\bf a}$, which does not break the symmetry of the components while preserving diagonal diffusion, is the one which yields $\hat{\bf K}_{\rm off}\, {\bf e}=0$ for the off-diagonal part $\hat{\bf K}_{\rm off}$ of the resulting tensor $\hat{\bf K}$; note that this off-diagonal part will
finally replace $\tilde{\bf K}$, while the diagonal part will be incorporated into $\tilde{\mathcal D}$.
We hence aim at
\[
\hat{\bf K}_{\rm off} {\bf e} \, = \, \big( \hat{\bf K} - 2 \, {\rm diag}({\bf a}) \big) {\bf e} = 0.
\]
The latter means
\[
\tilde{\bf K}{\bf e} + N {\bf a} + \langle {\bf a} , {\bf e}  \rangle {\bf e}  - 2 \, {\bf a} =0,
\]
or, in the relevant case $N>2$,
\[
[ \, {\bf I} + \frac{ {\bf e} \otimes {\bf e}}{N-2} \,]\,  {\bf a}   = - \frac{1}{N-2}  \tilde{\bf K}{\bf e}.
\]
This is invertible to the result
\begin{equation}\label{a-correct}
{\bf a}   = - \, \frac{1}{N-2} \, [ \, {\bf I} - \frac{ {\bf e} \otimes {\bf e}}{2 \, (N-1)} \,]\,  \tilde{\bf K}{\bf e}.
\end{equation}
Hence, modifying $\tilde{\bf K}$ according to \eqref{Kmodify} with ${\bf a}$ from \eqref{a-correct}
and shuffling the additional diagonal part to $\tilde{\mathcal D}$,
we can impose the additional condition that the off-diagonal part has zero row-sums.

Summing up, the novel closure scheme yields diffusion
fluxes of the structure as given in \eqref{closure-B2}, i.e.\
\[
 {\bf J}  \, =\, -{\bf P}^{\sf T} \, {\bf R}\, [{\mathcal D}+ {\bf K}\, {\bf X}]\, {\bf M}\, {\bf P} \, \nabla \frac{\boldsymbol{\mu}}{RT},
\]
respectively \eqref{closure-B3}, i.e.\
\[
 {\bf J}  \, =\,
 -{\bf P}^{\sf T} \,  [{\mathcal D}{\bf M} + \frac \rho c {\bf Y}\, {\bf K} ]\, {\bf P}^{\sf T} \, {\bf R}\, \nabla \frac{\boldsymbol{\mu}}{RT},
\]
where ${\mathcal D}={\rm diag}(d_1, \ldots ,d_N)$
%with $d_i>0$ is the diagonal part
and ${\bf K}$ is off-diagonal (all ${K}_{ii}=0$) with ${\bf K}={\bf K}^{\sf T}$ and ${\bf K} {\bf e}= {\bf 0}$.
Moreover, ${\bf D}= {\mathcal D} + {\bf X}^{1/2}\, {\bf K}\, {\bf X}^{1/2}$ from \eqref{novel-D}
is positive definite on $\{ \sqrt{\bf x} \}^\perp$.
This closure contains $(N-1)N/2$ model parameters as in the other approaches.
%\\
%For completeness, we record the corresponding molar mass fluxes which read
%\begin{equation}\label{closure-B4}
% {\bf J}^{\rm mol}  \, =\, -{\bf P}_{\rm mol}^{\sf T} \, {\bf C}\, [{\mathcal D}+ {\bf K}\, {\bf X}]\, {\bf P}_{\rm mol} \, \nabla %\frac{\boldsymbol{\mu}^{\rm mol}}{RT}
%\end{equation}
%or, equivalently,
%begin{equation}\label{closure-B5}
% {\bf J}^{\rm mol}  \, =\, -{\bf P}_{\rm mol}^{\sf T} \, [\tilde{\mathcal D}+ {\bf X}\, \tilde{\bf K} ]\, {\bf P}^{\sf T} \, {\bf C}\, \nabla \frac{\boldsymbol{\mu}^{\rm mol}}{RT}.
%\end{equation}

Due to the appearance of the projections left and right of the inner matrix product, the diffusion matrix from
\eqref{novel-D} is not uniquely determined by the diffusion fluxes. Correspondingly, different options exist in order to eliminate the $N$ superfluous parameters above.
One possibility would be to require ${\mathcal D}=0$, resulting in a fully off-diagonal inner matrix.
This would be somewhat similar to the off-diagonal closure originally introduced in \cite{CB}, but which has later been regarded as disadvantageous as mentioned in the review \cite{CB-review}.

%{\color{red} Which disadvantages? This is related to the same question as in the introduction: How do we stand to  \cite{CB-review}?
%Ich schreibe ja, was in der Quelle steht. Das muss nicht noch vonuns kommentiert werden. Ich würde alles so lassen und nichts weiter schreiben. Der nächste Abschnitt kann mE auf schwarz!}

Another possibility would be to ask for a positive (semi-)definite diffusion matrix ${\bf D}$ in \eqref{novel-D}, for which the additional condition ${\bf D} \, \sqrt{{\bf x}} = \lambda \, \sqrt{{\bf x}}$ with $\lambda \geq 0$ is fulfilled. This choice guarantees that the diagonal part $\mathcal{D}$ has positive entries, but it possesses the severe drawback that diagonal diffusion can occur only in the form $\mathcal{D} = d \, {\bf I}$ for some positive scalar $d$.
In other words, assuming that in \eqref{closure-B2}, the matrix ${\bf D}$ coincides with a diagonal matrix $\mathcal{D}$ on $\{{\bf \sqrt{x}}\}^{\perp}$, the condition ${\bf D} \, \sqrt{{\bf x}} = \lambda \, \sqrt{{\bf x}}$ implies $\mathcal{D} = d \, {\bf I}$. This type of diagonal diffusion has been investigated in Section 7.5 of \cite{Giovan}; see, in particular, Corollary 7.5.6 there.

%
% Note, finally, that ${\bf D} \, \sqrt{{\bf x}} = 0$ is equivalent to ${\rm im} ({\bf D})\subset \{\sqrt{\bf x}\}^\perp$, hence a weaker
% condition would be the invariance of $\{\sqrt{\bf x}\}^\perp$ under ${\bf D}$, i.e.\  ${\bf D}(\{\sqrt{\bf x}\}^\perp) \subset \{\sqrt{\bf x}\}^\perp$. Then $\langle {\bf D} \, {\bf z}, \, \sqrt{\bf x} \rangle = 0$ for all ${\bf z} \in \{\sqrt{\bf x}\}^\perp$. By symmetry of ${\bf D}$, this yields ${\bf D} \, \sqrt{\bf x} = d \, \sqrt{\bf x}$ for some $d \in \mathbb{R}$. Thus,  if ${\bf D}$ is diagonal, it must be of the form $d \, {\bf I}$. In general, this invariance yields $(\mathcal{D} + X^{\frac{1}{2}} \, {\bf K} \,X^{\frac{1}{2}}) \sqrt{\bf x} = d \, \sqrt{\bf x}$, i.e.\ $\mathcal{D}\sqrt{\bf x} + X^{\frac{1}{2}} \, {\bf K} \,{\bf x} = d \, \sqrt{\bf x}$.
% %
%
%
%
\section{Positivity Requirements}\label{positivity}
The phenomenological coefficients which appear in the different closures for continuum thermodynamical diffusion fluxes
need to fulfill certain structural assumptions concerning their dependencies on the composition in order to allow for positive solutions of the resulting partial
differential equations. Such structural properties will also be required in order to prove the equivalence of the different closures.
There are two different approaches to treat the question of positivity of solutions.
Either the models are only formulated for compositions with all constituents present, i.e.\ $y_i >0$ for all $i=1,\ldots ,N$
and all $(t,x)$, or the models are extended to cover the cases in which partial densities may disappear, i.e.\ $y_i=0$ is allowed.
In the first case it is required to show strictly positive lower bounds on the $y_i$, while one needs to show that $y_i \geq 0$
in the second case.
%As will become clear below, both approaches require the same structural property of the fluxes (and the reaction terms).
Extensions to allow constituents to vanish, $y_i=0$, is a topic in itself which is not addressed here;
cf.\ Proposition 7.7.5 in \cite{Giovan}.

We shall employ the Maxwell-Stefan form of the closure to motivate the positivity requirements
since, with non-singular and non-degenerate $f_{ik}$, the Maxwell-Stefan diffusion fluxes
are such that positivity of the partial mass densities is sustained; this has been shown in \cite{DB-MS}, \cite{JP-MS}
for strong solutions.
To understand the structural reason behind, we rewrite \eqref{MS-mass-based} as
\begin{equation}\label{flux-structure}
{\bf j}_i =
- \frac{\rho_i }{\sum_{k\neq i} f_{ik} y_k} \big( \nabla \frac{\mu_i}{RT} - \sum_{k=1}^N y_k \nabla \frac{\mu_k}{RT} \big)
\, +\, y_i\, \frac{\sum_{k\neq i} f_{ik} {\bf j}_k}{\sum_{k\neq i} f_{ik} y_k}
\end{equation}
for $i=1,\ldots ,N$.
Since the chemical potential $\mu_k$ approaches the one for a dilute species  as $x_k\to 0+$,
it holds that $M_k \mu_k$ has the contribution $RT \ln x_k$ as the only singular (as $x_k\to 0+$) term. This condition is obviously rigorous for ideal mixtures in the sense of \eqref{idealchempot}, but it in fact covers a significantly wider class of mixtures. Therefore, the first summand on the right-hand side of \eqref{flux-structure}
is of the type
\begin{equation}\label{flux-structure0}
- d_i (T,\rho,{\bf y}) \nabla y_i + y_i \, {\bf f}_i (T,\rho,  \nabla \rho, {\bf y}, \nabla {\bf y}),
\end{equation}
where $d_i$ is non-degenerate as $y_i \to 0+$. To incorporate the remaining terms on the right-hand side of \eqref{flux-structure},
we plug in the fluxes ${\bf j}_k$ from \eqref{Closure-MSJbbis}.
Exploiting that these fluxes are given by non-singular functions
of $(T,\rho, \nabla \rho, {\bf y}, \nabla {\bf y})$, we get the same structure \eqref{flux-structure0} also
for the full fluxes. Since we always assume $\rho >0$, this can
also be rewritten as
\begin{equation}\label{flux-structure1}
{\bf j}_i \, = \, - d_i (T,\rho,{\bf y}) \nabla \rho_i + \rho_i \, {\bf f}_i (T,\rho,  \nabla \rho, {\bf y}, \nabla {\bf y}),
\end{equation}
where $d_i (T,\rho,{\bf y})\to d_i^0 (T,\rho,y_1,..,y_{i-1},y_{i+1},..,y_N)>0$ as $y_i\to 0+$ and the ${\bf f}_i$ are non-singular.

If fluxes of this structure are plugged into the partial mass balance \eqref{partial-mass2},
the resulting PDE is of the form
\begin{align}\label{PDE-structure}
\pt \rho_i - d_i (T,\rho,{\bf y}) \Delta \rho_i & = \rho_i \, F_i(T,\rho, \nabla \rho, {\bf y}, \nabla {\bf y}, {\bf \nabla v})\\
& + \nabla \rho_i \cdot {\bf G}_i (T,\rho, \nabla \rho, {\bf y}, \nabla {\bf y}, {\bf v})+ r_i.\nonumber
\end{align}
Now, if a sufficiently regular (classical, say)
solution starting from a strictly positive initial value reaches a time $t_0$ where
$\min_\Omega \rho_i (t_0,\cdot)=0$ for the first time and, for simplicity, zero is attained at an interior point $\xi_0$, then $\Delta \rho_i (t_0,\xi_0)\geq 0$, $\rho_i (t_0,\xi_0)=0$ and $\nabla \rho_i (t_0,\xi_0)=0$. Hence
\[
\pt \rho_i (t_0,\xi_0) \geq r_i (T(t_0,\xi_0), \rho(t_0,\xi_0), {\bf y}(t_0,\xi_0));
\]
this argument shows that, in fact, $d_i \geq 0$ as $y_i \rightarrow 0+$ is sufficient.
If we further impose the standard assumption that the reaction rates $r_i$ are quasi-positive, meaning that
\[
r_i (T,\rho, {\bf y})\geq 0 \quad \mbox{whenever } y_i=0,
\]
we obtain
\[
\pt \rho_i (t_0,\xi_0) \geq 0.
\]
%This is the key point in showing non-negativity
With an additional approximation argument, this yields the non-negativity of $\rho_i$ for any given regular solution on its time interval of existence.
Let us note that quasi-positivity holds, in particular, in the realistic case that $r_i=r_i^f-x_i r_i^b$ with $r_i^f, r_i^b \geq 0$.
A mathematical theory which applies to much more general PDE-systems and yields non-negativity for less regular (weak $L^p$-) solutions under (positivity-)con\-ditions which are implied by the structural assumptions above can
be found in \cite{amann89, amann93}.

Since we are not considering cases which allow for finite-time extinction of species, it is interesting to see that, at least for regular solutions, the same structural property leads to a control of the lower bounds of strictly
positive solutions. For this purpose, consider the functions
\[
 m_i (t)=\min \{\rho_i (t,x):x\in \overline{\Omega}\}
\]
and employ the fact that
\begin{equation}\label{miderivative}
m_i'(t) \geq \min \{ \partial_t \rho_i (t,\xi ): \xi \in \overline{\Omega} \mbox{ such that } \rho_i (t,\xi )=m_i (t) \}
\;\mbox{ for a.e. } t
\end{equation}
for sufficiently regular functions and under homogeneous standard boundary conditions, e.g.\ Dirichlet, Neumann or Robin;
see \cite{ConstantinEscher} for a proof of \eqref{miderivative}.
Application of \eqref{miderivative} to classical solutions of \eqref{PDE-structure} leads to
\[
m_i'(t) \geq - \, K(t) \, m_i (t),
\]
where $K(t)$ is a bound for $F_i(T,\rho, \nabla \rho, \, {\bf y}, \nabla {\bf y}, \nabla {\bf  v})$.
In other words, under the structural property \eqref{flux-structure1}, $L^\infty$-bounds on
the $\rho_i$, their gradients and on $\nabla {\bf  v}$ imply strictly positive lower bounds for the $\rho_i$ such that
extinction of individual species is not possible in finite time.

\indent
To sum up, despite the fact that we do not consider the case in which constituents become absent (i.e., $y_i=0$),
positivity requirements lead us to impose \eqref{flux-structure1} to any closure for diffusion fluxes that claim equivalence to Maxwell-Stefan diffusion.
For the Fick-Onsager closure, we hence require $L_{ik}=l_{ik} y_k$ for all $i\neq k$ with coefficients $l_{ik}$ which are regular functions of the state variables, i.e.\ without singularities as $y_j\to 0+$ for any $j=1,\ldots ,N$. Due to symmetry of ${\bf L}$,
together with ${\bf L} {\bf e}=0$, this leads to a structure of the type
\begin{equation}\label{structure-Fick}
L_{ik} = \rho_i \, ( a_i \, \delta_{ik} \, + \, y_k \, S_{ik} ),
\end{equation}
where $\bf S$ is symmetric and where we may assume $S_{ii}=0$ for all $i$. Above we used $\rho_i$ instead of $y_i$ as pre-factor since the $L_{ik}$
appear for mass diffusion fluxes rather than diffusion fluxes of mass fractions.

Let us note in passing that a similar and--concerning positivity requirements--consistent structure has been inferred from
the representation of transport coefficients by velocity correlations based on the Green-Kubo formalism in \cite{Bardow2011}, using ad hoc arguments. There, the structure similar to \eqref{structure-Fick}
was used as a starting point to obtain a multicomponent generalization of the Darken equation. We shall come back to this point in section~\ref{Darken}.

In case of non-degenerate diffusion,
we also assume
$a_i ({\bf y}) \to a_i ({\bf y}_0)>0$ as ${\bf y} \to {\bf y}_0$ with $y_{0,i}=0$.
Note that $a_i ({\bf y}) > 0$ always follows from the positivity properties of ${\bf L}$, since the diagonals
$L_{ii} = \langle \, {\bf L} {\bf e}^i, \, {\bf e}^i \, \rangle$ satisfy
\begin{align}\label{A-pos}
L_{ii} = \langle \, {\bf L} \, ({\bf e}^i - \frac{1}{N} {\bf e} ), \, ({\bf e}^i - \frac{1}{N} {\bf e} ) \, \rangle \geq \ell_0 \, (1-\frac{1}{N})^2
\end{align}
with $\ell_0 = \inf_{{\bf z} \in \{{\bf e}\}^{\perp}} \langle\, {\bf L} {\bf z}, \, {\bf z}\, \rangle/|{\bf z}|^2 >0$. However, the number $\ell_0$ might tend to zero for ${\bf y} \to {\bf y}_0$ with $y_{0,i}=0$.
%UMDREHEN? Starten mit: $a_i >0$ for ${\bf y} \gg 0$?
\\[1ex]

\indent
Next, let check that the novel closure automatically leads to the desired structure. We hence let
${\bf L}:= {\bf P}^{\sf T} \, {\bf R}\, [{\mathcal D}+ {\bf K}\, {\bf X}]\, {\bf M}\, {\bf P}$, which yields
\begin{align*}
 {\bf R}^{-1} \, {\bf L} = {\bf P} \, \mathcal{D} \, {\bf M} \, {\bf P} + \frac{\rho}{c} \, {\bf P} \, {\bf K} \, {\bf P}^{\sf T} \, {\bf Y} \, .
\end{align*}

After a straightforward computation, using
\begin{equation}\label{PDMP}
{\bf P}\,\mathcal{D} \,{\bf M}\,{\bf P}\, = \,\mathcal{D} \,{\bf M}
+ \big( \langle \mathcal{D} \,{\bf M}\,{\bf y}, {\bf e}\rangle ({\bf e}\otimes {\bf e})
- (\mathcal{D} \,{\bf M}\,{\bf e}) \otimes {\bf e}- {\bf e}\otimes (\mathcal{D} \,{\bf M}\,{\bf e})\big) \,{\bf Y},
\end{equation}
we see that \eqref{structure-Fick} indeed holds with
\[
{\bf A} :={\rm diag}(a_i)= [{\bf P}\,\mathcal{D} \,{\bf M}\,{\bf P}]_{\rm diag} + \frac \rho c [{\bf P}\,{\bf K}\,{\bf P}^{\sf T}]_{\rm diag} {\bf Y}\vspace{-0.15in}
\]
and
\[
{\bf S} = [\langle {\bf D}\,{\bf M}\,{\bf y}, {\bf e}\rangle ({\bf e}\otimes {\bf e})
- ({\bf D}\,{\bf M}\,{\bf e}) \otimes {\bf e}- {\bf e}\otimes ({\bf D}\,{\bf M}\,{\bf e})]_{\rm off}
+ \frac \rho c [{\bf P}\,{\bf K}\,{\bf P}^{\sf T}]_{\rm off},
\]
where $[\, \cdot \,]_{\rm diag}$ and $[\, \cdot \,]_{\rm off}$ denote the diagonal and the off-diagonal part, respectively.
Evidently, the $L_{ik}$ as well as the $a_i=A_{ii}$ and $S_{ik}$ have the same regularity as the $d_i$ and $K_{ik}$
and no singularities are introduced. Finally, exploiting again \eqref{PDMP}, we see that ${\bf y}\to {\bf y}_0$
with $y_{0,i}=0$ implies
\[
A_{ii} ({\bf y}) \to M_i d_i({\bf y}_0)
\]
and $d_i({\bf y}_0)>0$ by the assumption of non-degenerate diffusion.\\

\indent
Concerning the  Maxwell-Stefan formulation, a rigorous proof, showing that the inverted Maxwell-Stefan equations lead to a matrix of phenomenological coefficients of the structure \eqref{structure-Fick}, will be postponed to the next section, where we show the equivalence of all three diffusion closures.
\section{Equivalence of the Different Diffusion Closures}
%{\color{red} TODO 8 (P-E): bitte alles noch einmal gruendlich pruefen und dann die Farbmarkierungen entfernen}

In order to prove the equivalence of the above closures for multicomponent diffusion, let us first summarize the different forms together with the assumptions on the phenomenological coefficients.\\[1ex]
\noindent
{\bf Form (A): Fick-Onsager diffusion fluxes.} According to \eqref{Closure-Fick}, these are given as
\[
{\bf J}^{\rm FO} \, =\, - {\bf L} \, \nabla \frac{\boldsymbol\mu}{RT},
\]
where ${\bf L} \in \R^{N\times N}$ is symmetric, positive definite on $\{ {\bf e} \}^\perp$
and ${\bf L} {\bf e}={\bf 0}$. It further possesses the structure
\begin{equation}\label{L-Fick}
{\bf L} \, = \, {\bf R}\, [ \, {\bf A}\, +\, {\bf S}\, {\bf Y} \, ],
\end{equation}
with ${\bf A}={\rm diag}(a_1,\ldots ,a_N)$ and off-diagonal ${\bf S}={\bf S}^{\sf T}$, where the coefficients $a_i$ and $S_{ij}$ (for $i<j$) are regular functions of the state variables $(T,\rho, \, y_1,\ldots ,y_N)$. We require $a_i ({\bf y}) \to a_i ({\bf y}_0)>0$ as ${\bf y} \to {\bf y}_0$ with $y_{0,i}=0$. Note that the weaker inequality $a_i({\bf y}_0) \geq 0$ would already follow from \eqref{L-Fick} and the positivity properties of ${\bf L}$ (cf.\ \eqref{A-pos}) \\[1ex]
\noindent
{\bf Form (B): Maxwell-Stefan diffusion fluxes.} According to \eqref{Closure-MSJbbis}, these are given as
\[
{\bf J}^{\rm MS}= - \,{\bf B}^{\sharp}({\bf y}) \mathbf{R} \, \nabla \frac{\boldsymbol{\mu}}{RT},
\]
where ${\bf B}=B_{ij}\in \rr^{N\times N}$ with
\[
B_{ij}=- y_i f_{ij} \;\mbox{ for } i\neq j, \quad B_{ii}= \sum_{k\neq i} y_k f_{ik},
\]
where $f_{ik}=f_{ki}$ ($i\neq k$) are regular functions of $(T, \, \rho, \, {\bf y})$ such that ${\bf B} {\bf Y}$ is positive definite on $\{ {\bf e} \}^\perp$. \\[1ex]
%Recall also that ${\bf P} \, = \, {\bf I}- {\bf e}\otimes {\bf y}$.
%
\noindent
{\bf Form (C): Novel form of diffusion fluxes.} According to \eqref{closure-B3}, these are given as
\[
 {\bf J}  \, =\, -{\bf P}^{\sf T} \, {\bf R}\, [{\mathcal D}+ {\bf K}\, {\bf X}]\, {\bf M}\, {\bf P} \, \nabla \frac{\boldsymbol{\mu}}{RT},
\]
where ${\mathcal D}={\rm diag}(d_1, \ldots ,d_N)$, ${\bf K}$ is off-diagonal with ${\bf K}={\bf K}^{\sf T}$ and ${\bf K} {\bf e}= {\bf 0}$. The coefficients $d_i$ and $K_{ij}$ (for $i<j$) are regular functions of the state variables $(T,\rho, y_1,\ldots ,y_N)$.
Moreover, ${\mathcal D}\,+\,{\bf X}^{1/2} \, {\bf K} \, {\bf X}^{1/2}$ is positive definite on $\{\sqrt{\bf x}\}^\perp$, and $d_i ({\bf y}) \to d_i ({\bf y}_0)>0$ as ${\bf y} \to {\bf y}_0$ with $y_{0,i}=0$. \\[1.5ex]
\noindent
The following results establish the equivalence of the three different closures. We split the proof into two steps. In the next Theorem \ref{Th-1} we discuss the structural aspects, while the second statement \ref{Th-2} focuses on the regularity of the coefficients and estimates for the eigenvalues of the involved matrices. \\[0.5ex]

\begin{theorem}\label{Th-1}
The forms {\rm (A)}, {\rm (B)} and {\rm (C)} of the diffusion fluxes (cf.\ also \eqref{Closure-Fick}, \eqref{Closure-MSJbbis} and \eqref{closure-B2}) are equivalent.
% \begin{enumerate}
% \item
% %
% \item
% Let the system of diffusion fluxes ${\bf J}$ be given by \eqref{closure-B2}, where the coefficients $d_i$ and $K_{ij}=K_{ji}$ (for $i<j$) are regular functions of the state variables $(T,\rho_1,\ldots ,\rho_N)$.
% Then there are coefficients $f_{ij}=f_{ji}$ (for $i<j$), being regular functions of the state variables $(T,\rho_1,\ldots ,\rho_N)$, such that ${\bf J}={\bf J}^{\rm MS}$, where ${\bf J}^{\rm MS}$ is given by \eqref{Closure-MSJb} with the
% matrix $\boldsymbol \tau$, associated with the $f_{ij}$, satisfying all properties listed in \eqref{fik-assumptions}.
% %
% \item
% $\Diamond$ Voraussetzungen an $f_{ik}$ abschw\"achen und Beweis adaptieren! (PED)...
%
% Let the system of diffusion fluxes ${\bf J}^{\rm MS}$ be given by \eqref{Closure-MSJb}, where the
% coefficients $f_{ij}=f_{ji}>0$ (for $i<j$) are regular functions of the state variables $(T,\rho_1,\ldots ,\rho_N)$.
% Then there are coefficients $a_i$
% and $S_{ij}$ (for $i<j$), being regular functions of the state variables $(T,\rho_1,\ldots ,\rho_N)$, such that
% ${\bf J}^{\rm MS}={\bf J}^{\rm FO}$, where ${\bf J}^{\rm FO}$ is given by \eqref{Closure-Fick} with
% ${\bf L}$ given by \eqref{L-Fick}.
% \end{enumerate}
\end{theorem}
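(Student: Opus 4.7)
The plan is to set up the novel form (C) as a hub and prove the two equivalences $(C) \Leftrightarrow (A)$ and $(C) \Leftrightarrow (B)$ separately. In each direction I verify that the effective matrix ${\bf L}$ multiplying $\nabla \boldsymbol{\mu}/(RT)$ is symmetric, satisfies ${\bf L}{\bf e}=0$, is positive definite on $\{{\bf e}\}^\perp$, and admits the decomposition \eqref{structure-Fick} with regular coefficients and with $a_i({\bf y}_0)>0$ at $y_{0,i}=0$; once this is done in both directions, any flux of one form is also of the other two forms.

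For $(C) \Rightarrow (A)$ I rely on the explicit calculation already sketched at the end of Section~\ref{positivity}: starting from ${\bf L} := {\bf P}^{\sf T}{\bf R}[{\mathcal D}+{\bf K}{\bf X}]{\bf M}{\bf P}$ and using \eqref{PDMP} together with ${\bf X}{\bf M} = {\bf R}/c$, one reads off ${\bf A}$ and ${\bf S}$ directly. Symmetry is immediate, ${\bf L}{\bf e}=0$ follows from ${\bf P}{\bf e}=0$, and positive definiteness on $\{{\bf e}\}^\perp$ reduces, after conjugation by factors involving ${\bf X}^{1/2}$ and the projections, to the assumed positive definiteness of ${\mathcal D}+{\bf X}^{1/2}{\bf K}{\bf X}^{1/2}$ on $\{\sqrt{{\bf x}}\}^\perp$. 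Regularity and the non-degeneracy $a_i({\bf y}_0)>0$ are inherited from the corresponding hypotheses on the $d_i$ via the formula for ${\bf A}$.

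For $(A) \Rightarrow (C)$ I use that symmetry and ${\bf L}{\bf e}=0$ imply ${\bf L} = {\bf P}^{\sf T}{\bf L}{\bf P}$, so it suffices to solve ${\bf R}^{-1}{\bf L}{\bf M}^{-1} = {\mathcal D}+{\bf K}{\bf X}$ modulo terms annihilated by the outer projections. I take ${\mathcal D}$ to be the diagonal of ${\bf R}^{-1}{\bf L}{\bf M}^{-1}$ and extract ${\bf K}$ from its off-diagonal part, then apply the normalization procedure \eqref{Kmodify}--\eqref{a-correct} to enforce ${\bf K}{\bf e}=0$ without altering the flux. The bridge $(C) \Leftrightarrow (B)$ then follows from the key identity \eqref{Bandtau2}, ${\bf B}^{\sharp}({\bf y}){\bf R} = {\bf P}^{\sf T}{\bf R}\boldsymbol{\tau}^{\sharp}{\bf R}{\bf P}$, which already puts the Maxwell--Stefan flux in the outer form of (C) with ${\bf R}\boldsymbol{\tau}^{\sharp}{\bf R}$ in place of ${\bf R}[{\mathcal D}+{\bf K}{\bf X}]{\bf M}$. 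Going from (B) to (C) reduces to computing $\boldsymbol{\tau}^{\sharp}$ via the group-inverse formulas of Appendix~\ref{Drazin} and performing the diagonal/off-diagonal split above; going from (C) to (B) amounts to solving for $\boldsymbol{\tau}$ on $\{{\bf e}\}^\perp$ and reading off $f_{ij}$ via $\tau_{ik}=-\rho f_{ik}y_iy_k$.

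I expect the main obstacle to be the preservation of regularity and boundary non-degeneracy under these translations, since $\boldsymbol{\tau}^{\sharp}$ a priori develops singularities as a mass fraction vanishes. The key observation rescuing this is that the binary-interaction factorization $\tau_{ik} = -\rho f_{ik}y_iy_k$ with regular $f_{ik}$ makes $\boldsymbol{\tau}^{\sharp}{\bf R}$ remain well-behaved on the whole simplex, so that the recovered $d_i$ satisfy $d_i({\bf y}_0)>0$ at $y_{0,i}=0$. This is exactly the structural property \eqref{flux-structure1} underlying the positivity discussion in Section~\ref{positivity}; the quantitative ellipticity bounds that make this control uniform are deferred to Theorem~\ref{Th-2}.
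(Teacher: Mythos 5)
Your proposal is correct and rests on the same technical pillars as the paper's proof — the reduction \eqref{PTLP} showing that symmetry plus ${\bf L}\mathbf{e}=\mathbf{0}$ gives ${\bf L}={\bf P}^{\sf T}{\bf L}{\bf P}$, the rank-one modification by ${\bf b}\otimes{\bf e}+{\bf e}\otimes{\bf b}$ to enforce the zero-row-sum normalization on the off-diagonal part (your \eqref{Kmodify}--\eqref{a-correct}, the paper's \eqref{b-correct}), the group-inverse machinery and identity \eqref{Bandtau2}, and the deferral of quantitative regularity to Theorem~\ref{Th-2}. The organization differs, though: the paper closes a cycle (A)~$\Rightarrow$~(C)~$\Rightarrow$~(B)~$\Rightarrow$~(A), whereas you treat (C) as a hub and prove both equivalences $(A)\Leftrightarrow(C)$ and $(B)\Leftrightarrow(C)$ separately. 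The cycle is arguably leaner because each implication is proved by its most direct route: in particular, the paper's (B)~$\Rightarrow$~(A) is obtained by simply rearranging the Maxwell--Stefan balances themselves (\eqref{AandSfromB}) rather than by first computing $\boldsymbol{\tau}^{\sharp}$ from the Drazin-inverse formula and then applying the diagonal/off-diagonal split, which is effectively (B)$\Rightarrow$(A)$\Rightarrow$(C) in disguise and carries more algebraic overhead. Similarly, the paper's (C)~$\Rightarrow$~(B) works with a purpose-built symmetrization ${\bf D}_0={\bf D}+{\bf b}\otimes\sqrt{\bf x}+\sqrt{\bf x}\otimes{\bf b}$ annihilating $\sqrt{\bf x}$, so that the group inverse ${\bf D}_0^{\sharp}$ can be taken cleanly and ${\bf B}$ read off from \eqref{Bcandidate}; your route via solving for $\boldsymbol{\tau}$ on $\{{\bf e}\}^{\perp}$ and reading off $f_{ij}$ will land in the same place but skips this smoothing step, so you should make it explicit that you first shift ${\bf D}$ into $\{\sqrt{\bf x}\}^{\perp}$ before attempting a group inversion. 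What the hub picture buys is conceptual emphasis on the novel closure as the common normal form; what the cycle buys is that no single implication is harder than the ``natural'' one, avoiding the redundancy hidden in your (B)$\Rightarrow$(C) branch.
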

\begin{proof}
(A) $\Rightarrow$ (C). Let the closure for ${\bf J}^{\rm FO}$ be given with coefficient matrix ${\bf L}$ having all properties as stated above, in particular being of type \eqref{L-Fick}.
Note that ${\bf L} \, {\bf e}={ \bf 0}$ together with symmetry of ${\bf L}$ implies
\begin{equation}\label{PTLP}
{\bf P}^{\sf T} \,{\bf L}\, {\bf P} = {\bf P}^{\sf T} \, [ {\bf L} - ({\bf L} {\bf e})\otimes {\bf y}]=
[{\bf I}-{\bf y}\otimes {\bf e}] \,{\bf L} = {\bf L} \, .
\end{equation}
% and therefore
% \[
% \langle [{\mathcal D}+ {\bf X}^{1/2}\, {\bf K}\, {\bf X}^{1/2}]\, {\bf b}, {\bf b} \rangle
% %= \frac 1 c \langle {\bf P}^{\sf T} \,{\bf L}\, {\bf P}\,{\bf z}, {\bf z} \rangle
% = \frac 1 c \langle {\bf L}\, {\bf z}, {\bf z} \rangle\, > \, 0
% \]
% since ${\bf L}$ is positive definite on $\{\bf e \}^\perp$. Moreover, \eqref{PTLP} together with \eqref{L-DB}
% yields
% \[
% {\bf L} = {\bf P}^{\sf T} \,  {\bf R}\, [\, \mathcal{D} + {\bf K}\, {\bf X}\,] \, {\bf M}    \, {\bf P}.
% \]
% Up to here, it is also clear that $d_i({\bf y})\to d_i({\bf y}_0)=a_i({\bf y}_0)/M_i >0$ as ${\bf y}\to {\bf y}_0$ with
% $y_{0,i}=0$.
% {\color{blue} Up to here, it is also clear that $d_i({\bf y}) =a_i({\bf y})/M_i >0$, and assuming $a_i ({\bf y}) \to a_i ({\bf y}_0)>0$ as ${\bf y} \to {\bf y}_0$ with $y_{0,i}=0$ we obtain that $d_i$ is likewise not degenerate}.
%
% It remains to modify ${\bf K}$ in such a way that, in addition to all properties obtained so far, ${\bf K} {\bf e}=0$ holds.
% This is done in a fully analogous manner as in the construction of the new closure scheme given in section~\ref{sec-novel-scheme}.
% {\color{blue} In order to show the connection with the matrices in \eqref{L-Fick},
Starting with \eqref{L-Fick}, we re-express
\begin{align*}
{\bf L} = & {\bf P}^{\sf T} \,{\bf R}\, [  {\bf A} + {\bf S}\, {\bf Y}\,] \,   \, {\bf P} = \rho \, {\bf P}^{\sf T} \, {\bf Y} \, [\,{\bf Y}^{-1} \, {\bf A} + {\bf S}\, ] \,  {\bf Y} \, {\bf P}\\
 = &  \rho \, {\bf P}^{\sf T} \, {\bf Y} \, [\,{\bf Y}^{-1} \, {\bf A} + {\bf S}\, +   {\bf b} \otimes {\bf e} + {\bf e} \otimes {\bf b}] \,  {\bf Y} \, {\bf P} \, ,
\end{align*}
where ${\bf b}$ is arbitrary. The vector ${\bf b}$ is next computed as to ensure, for the matrix ${\bf T} := {\bf Y}^{-1} \, {\bf A} + {\bf S}\, +   {\bf b} \otimes {\bf e} + {\bf e} \otimes {\bf b}$, that ${\bf T}_{\text{off}} \, {\bf e} = 0$. The latter means nothing else but $([N-2] \, {\bf I} + {\bf e} \otimes {\bf e}) \, {\bf b}  = - {\bf S} \, {\bf e}$ (compare with \eqref{a-correct}), so that
\begin{align}\label{b-correct}
{\bf b} = - \frac{1}{N-2} \, ({\bf I} -\frac{1}{2 \, (N-1)} \, {\bf e} \otimes {\bf e}) \, {\bf S} \, {\bf e} \, .
\end{align}
We observe that the entries of ${\bf b}$ are regular functions of the state variables since the entries of ${\bf S}$ are assumed regular.

To compute the diffusion matrices $\mathcal{D}$ and ${\bf K}$, we decompose ${\bf T} = \text{diag}({\bf T}) + {\bf T}_{\text{off}}$ in which
\begin{align*}
 \text{diag}({\bf T})  =  {\bf Y}^{-1} \, {\bf A} + 2 \, \text{diag} ({\bf b}) \, , \qquad {\bf T}_{\text{off}} =  {\bf S} + {\bf b} \otimes {\bf e} +  {\bf e} \otimes {\bf b} -  2 \, \text{diag}({\bf b})\, .
\end{align*}
We obtain the representations
\begin{align*}
{\bf L} = & {\bf P}^{\sf T} \, {\bf R} \, {\bf T} \,  {\bf Y} \, {\bf P}
= {\bf P}^{\sf T} \, {\bf R} \, [{\bf A} + 2 \, \text{diag}({\bf b}) \,{\bf Y}  + {\bf T}_{\text{off}} \, {\bf Y}  ]\,   {\bf P} \, .
\end{align*}
We define
\begin{align}\label{DfromAandS}
 \mathcal{D} = & {\bf A} \, {\bf M}^{-1} + 2 \,\frac{c}{\rho} \, \text{diag}({\bf b}) \,{\bf X}\, ,\\
 \label{KfromAandS} {\bf K} = & \frac{c}{\rho} \, ({\bf S} + {\bf b} \otimes {\bf e} +  {\bf e} \otimes {\bf b} -  2 \, \text{diag}({\bf b})) \, ,
\end{align}
verifying easily that ${\bf K}$ is off-diagonal, symmetric and satisfies ${\bf K} \, {\bf e} = {\bf 0}$. Moreover, we see that ${\bf L} = {\bf P}^{\sf T} \, {\bf R} \, ( \mathcal{D} +  {\bf K}  \, {\bf X} ) \, {\bf M} \, {\bf P}$ and that the entries of $\mathcal{D}$ and ${\bf K}$ are regular functions of the state variables.
The $i^{\rm th}$ diagonal entry is $d_i = a_i/M_i + 2 \, \frac{c}{\rho} \, b_i \, x_i$. Invoking that the components of ${\bf b}$ are regular functions, i. e. non-degenerate as $y_i \rightarrow 0+$, the components of $d_i$ converges to $a_i({\bf y}_0)/M_i >0$ as $y_i\to 0+$.\\[1ex]
% Since the positivity properties are inherited from $L$, which remains invariant in this procedure, the only new aspect is the condition on $d_i$ as $y_i\to 0+$.
%
%
%
%
%
%
%
% At this point observe that the diagonal part ${\mathcal D}$ is modified by addition of the diagonal part of a matrix of type ${\bf K}_0 {\bf X}$. This does not affect the $i^{\rm th}$ diagonal entry
% of ${\mathcal D}$ in the limit as $y_i\to 0+$. \\[1ex]
%
%
\noindent
(C) $\Rightarrow$ (B).
We rewrite the fluxes from \eqref{closure-B2} as
\begin{equation}\label{fluxes-rewritten}
{\bf J}  \, =\, -{\bf P}^{\sf T} \, {\bf M}\, {\bf X}^{1/2} \, {\bf D}\, {\bf X}^{-1/2} \, {\bf R}\, {\bf P} \, \nabla \frac{\boldsymbol{\mu}}{RT},
\end{equation}
where ${\bf D} = {\mathcal D} + {\bf X}^{1/2}\, {\bf K}\, {\bf X}^{1/2}$, with ${\mathcal D}={\rm diag}(d_1, \ldots ,d_N)$,
${\bf K}={\bf K}^{\sf T}$ and ${K}_{ii}=0$ for all $i$, is positive definite on $\{ \sqrt{\bf x} \}^\perp$.
%

% Recall in the next lines that, in particular, ${\bf D}={\bf D}({\bf x})$ but we suppress the independent variable ${\bf x}$ whenever this is reasonable.

It is possible to replace ${\bf D}$ in \eqref{fluxes-rewritten} with ${\bf D}_0 = {\bf D} + {\bf b} \otimes \sqrt{\bf x}  + \sqrt{\bf x} \otimes {\bf b}$ for arbitrary vector ${\bf b}$. We choose ${\bf b}$ in such a way as to ensure that the modified diffusion matrix ${\bf D}_0$ satisfies ${\bf D}_0 \, \sqrt{\bf x} = 0$. This is true for
\begin{equation*}
{\bf b} := - {\bf D} \, \sqrt{\bf x} + \frac{\langle \, {\bf D} \, \sqrt{\bf x}, \, \sqrt{\bf x} \, \rangle}{2} \, \sqrt{\bf x} \, .
\end{equation*}
Since ${\bf D}_0$ inherits positivity on $\{ \sqrt{\bf x} \}^\perp$ and ${\bf D}_0 \, \sqrt{\bf x} = 0$, this matrix is symmetric and positive semi-definite on $\mathbb{R}^N$, with rank $N-1$. Hence, the remark \ref{GROUP} guarantees that the group inverse ${\bf D}_{0}^{\sharp}$ is well defined, symmetric, positive definite on $\{ \sqrt{\bf x} \}^\perp$ and ${\rm ker}({\bf D}_{0}^{\sharp}) = \{\sqrt{\bf x}\}$. We define
\begin{align}\label{Bcandidate}
 {\bf B} := {\bf X}^{1/2} \, {\bf D}_{0}^{\sharp}\, {\bf X}^{-1/2} \, {\bf M}^{-1} \,  ,
\end{align}
and we apply this ${\bf B}$ to \eqref{fluxes-rewritten} with ${\bf D}$ replaced by ${\bf D}_0$, resulting into
\begin{align}\label{fluxes-rewritten-2}
{\bf B} \, {\bf J}  \, =\, -{\bf X}^{1/2} \, {\bf D}_{0}^{\sharp}\, {\bf X}^{-1/2} \, {\bf M}^{-1} \, {\bf P}^{\sf T} \, {\bf M}\, {\bf X}^{1/2} \, {\bf D}_0\, {\bf X}^{-1/2} \, {\bf R}\, {\bf P} \, \nabla \frac{\boldsymbol{\mu}}{RT},
\end{align}
Observe that
\begin{align}
{\bf X}^{-1/2} \, {\bf M}^{-1} \, {\bf P}^{\sf T} \, {\bf M}\, {\bf X}^{1/2} = {\bf I} - \frac{c}{\rho} \, {\bf \sqrt{x}} \otimes {\bf M}  {\bf \sqrt{x}}  \, .
\end{align}
Since ${\bf D}_0^{\sharp} {\bf \sqrt{x}} =0$, the product of ${\bf D}_0^{\sharp}$ with the latter matrix is again ${\bf D}_0^{\sharp}$, and \eqref{fluxes-rewritten-2} gives
\begin{align*}
 {\bf B} \, {\bf J}  \, = & \, -{\bf X}^{1/2} \, {\bf D}_{0}^{\sharp}\,  {\bf D}_0\, {\bf X}^{-1/2} \, {\bf R}\, {\bf P} \, \nabla \frac{\boldsymbol{\mu}}{RT} \\
 = & -{\bf X}^{1/2} \, ({\bf I} - {\bf \sqrt{x}} \otimes  {\bf \sqrt{x}})\, {\bf X}^{-1/2} \, {\bf R}\, {\bf P} \, \nabla \frac{\boldsymbol{\mu}}{RT} = - {\bf R}\, {\bf P} \, \nabla \frac{\boldsymbol{\mu}}{RT} \, .
\end{align*}
We next check that $\boldsymbol \tau :=\rho {\bf B} {\bf Y}$ possesses the desired properties. In view of \eqref{Bcandidate}
\begin{equation}\label{tau-rep}
\boldsymbol \tau =
c \,{\bf X}^{1/2} \, {\bf D}_0^{\sharp} \, {\bf X}^{1/2} \, .
\end{equation}
The symmetry of $\boldsymbol \tau$ follows from the symmetry of ${\bf D}_0$ and Remark \ref{GROUP}. With this representation it is also clear that $\boldsymbol \tau {\bf e}=0$.
To show that $\boldsymbol \tau$ is positive definite on $\{\bf e \}^\perp$, it suffices to show that ${\bf D}_0^{\sharp}$ is positive definite on $\{{\bf \sqrt{x}}\}^{\perp}$, which again follows from Remark \ref{GROUP}. The regularity of the entries of the matrix \eqref{Bcandidate} is discussed in Theorem \ref{Th-2}.\\[1ex]
\noindent
(B) $\Rightarrow$ (A). Let the diffusion fluxes ${\bf J}^{\rm MS}$ be given by \eqref{Closure-MSJbbis}. Then, evidently,
\[
{\bf J}^{\rm MS}=- {\bf L}\, \nabla \frac{\boldsymbol{\mu}}{RT}
\]
with ${\bf L}:= {\bf B}^{\sharp}({\bf y}) \, {\bf R}$. Recalling \eqref{Bandtau2}
\begin{align*}
{\bf P}^{\sf T} \, {\bf Y} \, ({\bf B}({\bf y}) \, {\bf Y})^{\sharp} \, {\bf R} \, {\bf P} = {\bf B}^{\sharp}({\bf y}) \, {\bf R} = {\bf L} \, ,
\end{align*}
proving that ${\bf L}$ is symmetric an positive definite on $\{{\bf e}\}^{\perp}$.
%
%
%
% Now recall that ${\rm im}({\bf R}\, {\bf P})={\rm im}({\bf P}^{\sf T})={\rm ker}({\bf P})^\perp=\{\bf e\}^\perp$,
% hence ${\rm im}({\bf L})\subset \{\bf e \}^\perp$ by the lines above \eqref{Closure-MSJb}.
% Therefore ${\bf e}^{\sf T} \, {\bf L}=0$ which yields ${\bf P}^{\sf T}\, {\bf L}={\bf L}$.
% Moreover, symmetry of ${\bf B}({\bf y})\, {\bf Y}$ implies symmetry of ${\bf B}_\alpha ({\bf y})\, {\bf Y}$
% and of ${\bf B}_\alpha^{-1} ({\bf y})\, {\bf Y}$, thus
% ${\bf L}^{\sf T}={\bf L}$. Furthermore,
% \[
% \langle {\bf L} \, {\bf z}, {\bf z} \rangle
% = \rho \langle {\bf B}_\alpha ({\bf y})^{-1}\, {\bf Y}\, {\bf P}\, {\bf z}, {\bf P}\, {\bf z} \rangle \geq 0
% \]
% for all ${\bf z}\in \rr^N$. Since ${\bf B}_\alpha ({\bf y})\, {\bf Y}$ is positive definite on $\rr^N$, we also get $\langle {\bf L} \, {\bf z}, {\bf z} \rangle >0$ unless ${\bf P}\, {\bf z}=0$, i.e.\ ${\bf L}$ is positive definite on $\{ \bf e \}^\perp$ which also yields ${\rm ker}\,{\bf L}={\rm span}\{\bf e \}$.

It remains to show that ${\bf L}$ has the structure as given in \eqref{L-Fick}.

We show here only how to choose the matrices ${\bf A}$ and ${\bf S}$. We will prove with the next theorem \ref{Th-2} under which conditions the entries of these matrices are regular functions of the state variables. By assumption, ${\bf B}({\bf y}) \, {\bf Y}$ is symmetric and positive semi-definite on $\mathbb{R}^N$, and even positive definite on $\{ {\bf e} \}^{\perp}$. We denote $c_0 = c_0({\bf y}) > 0$ the smallest singular value of ${\bf B}({\bf y}) \, {\bf Y}$ on $\{ {\bf e} \}^{\perp}$, and we have
$$\langle {\bf B}({\bf y}) \, {\bf Y} \, {\bf e}^i, \, {\bf e}^i \, \rangle \geq c_0 \, |{\bf e}^i - \frac{1}{N} \, {\bf e}|^2 = c_0 \, (1- \frac{1}{N})^2 \, .$$
Thus, the diagonal entries of ${\bf B}({\bf y})$ satisfy $b_{ii} \geq c_0 \, y_i^{-1} \, (1- 1/N)^2 > 0$.

Suppose that the matrix ${\bf B}({\bf y})$ is defined via (B). Then $b_{ii} = \sum_{k \neq i}^N f_{ik} \, y_k$. The assumption (B) (cf.\ also the mass-based Maxwell-Stefan system \eqref{MS-mass-based}) means nothing else but
\begin{align*}
 b_{ii} \, {\bf j}_i - y_i \, \sum_{k\neq i} f_{ik} \, {\bf j}_k = - {\bf d}_i \, .
\end{align*}
Thus, choosing arbitrary $f_{ii} > 0$ ($i=1,\ldots,N$), we obtain that $(b_{ii} + f_{ii} \, y_i)\, {\bf j}_i - y_i \, \sum_{k = 1}^N f_{ik} \, {\bf j}_k = - {\bf d}_i$, and
\begin{align*}
{\bf j}_i = - \frac{1}{b_{ii} + f_{ii} \, y_i}\, {\bf d}_i + \frac{y_i}{b_{ii} + f_{ii} \, y_i} \, \sum_{k = 1}^N f_{ik} \, {\bf j}_k \, .
\end{align*}
We recall at this stage the equivalent form \eqref{Closure-MSJbbis} of the fluxes. With ${\bf F}  := \{f_{ik}\}_{i,k=1,\ldots,N}$ denoting the extended matrix of friction coefficients, it follows that
\begin{align}\label{AandSfromB}
  \mathbf{J} = & - (\text{diag}(\mathbf{F} \, {\bf y}))^{-1} \, {\bf d} - (\text{diag}(\mathbf{F} \, {\bf y}))^{-1} \,  \mathbf{Y} \, {\bf F}  \, \mathbf{B}^{\sharp} \,  \mathbf{R}  \, \nabla \frac{ \boldsymbol{\mu}}{RT} \nonumber\\
 = & -(\text{diag}(\mathbf{F} \, {\bf y}))^{-1}  \, \{  \mathbf{R}  -  \boldsymbol{\rho} \otimes  \mathbf{y} +  \mathbf{Y} \,  \mathbf{F} \,  \mathbf{B}^{\sharp} \,  \mathbf{R} \}\nabla \frac{ \boldsymbol{\mu}}{RT} \, .
\end{align}
We define $ \tilde{\mathbf{A }} = (\text{diag}(\mathbf{F} \, {\bf y}))^{-1}$ and $ \tilde{\mathbf{S}} = (\text{diag}(\mathbf{F} \, {\bf y}))^{-1}\, [ -\mathbf{e} \otimes  \mathbf{e} +  \mathbf{F} \, \mathbf{B}^{\sharp}]$ and obtain that $ \mathbf{J} = -  \mathbf{R} \, ( \tilde{\mathbf{A}}+ \tilde{\mathbf{S}} \,  \mathbf{Y}) \,   \nabla \frac{ \boldsymbol{\mu}}{RT}$. Hence, it also follows that ${\bf L} \, \nabla \frac{ \boldsymbol{\mu}}{RT} = \mathbf{R} \, ( \tilde{\mathbf{A}}+ \tilde{\mathbf{S}} \,  \mathbf{Y}) \, \nabla \frac{ \boldsymbol{\mu}}{RT}$. Note that $\nabla \frac{ \boldsymbol{\mu}}{RT}$ in fact stands for any element of $\mathbb{R}^{N\times 3}$ in the algebraic inversion, so we must have ${\bf L}  = \mathbf{R} \, ( \tilde{\mathbf{A}}+ \tilde{\mathbf{S}} \,  \mathbf{Y})$.

The entries of $\tilde{\mathbf{A }}$ are $\tilde{a}_i = 1/(b_{ii} + y_i \, f_{ii}) > 0$. It remains to shift the diagonal of $\tilde{\mathbf{S}}$. If the entries of $\tilde{{\bf S}}$ are regular functions of the state variables - which we show directly here below - this procedure does not affect the asymptotic properties for $y_i \rightarrow 0+$.

This finishes the proof of theorem \ref{Th-1}.
\end{proof}

\noindent
So far we have treated the question of structural equivalence. Also important is the question, that we want to discuss next, whether the entries of the involved matrices remain regular functions of the state variables when switching from one form of the fluxes to another, equivalent one.

To fix the ideas, let us call a function $f$ of the variables $(T, \, \rho, \, y_1, \ldots,y_N)$ \emph{regular} if it is of class $C^k(]0, \, + \infty[ \times ]0, \, + \infty[ \times \overline{S}^1_+)$ for a fixed $k \in \mathbb{N}_0$. Here $\overline{S}^1_+$ is the closed positive unit sphere of the one-norm, i.e.\
$$\{{\bf y} \in \mathbb{R}^N \, : \, 0 \leq y_i \leq 1, \, \quad \sum_{i=1}^N y_i = 1\} \, .$$

In the next theorem, we prove a quantitative form of the equivalence, where we also provide some composition--independent bounds for the coefficients and eigenvalues of the diffusion matrices. This extends the results in the section 7.7 of \cite{Giovan}, where only the smoothness in the interior of the domain $\rho, \, T, \, {\bf y} > 0$ is verified. Our main assumption is that diffusion is non-degenerate in the composition variable, meaning that diffusion matrices satisfy lower and upper bounds independent of the composition vector.
\begin{theorem}\label{Th-2}
 The following statements are equivalent:
 \begin{enumerate}[(i)]
\item   \label{fickonsager} Form {\rm (A)} of the fluxes, and there exists a strictly positive, regular function $d_0$ depending only on $(T, \, \rho)$ such that $ {\bf L} \geq d_0 \, {\bf P}^{\sf T} \,{\bf  M} \,{\bf  R} \, {\bf P}$;
 \item  \label{maxstef} Form {\rm (B)} of the fluxes, and ${\bf B} \, {\bf Y} \geq d_0 \,  {\bf P}^{\sf T} \, {\bf M}^{-1} \, {\bf  Y} \, {\bf P}$ with a strictly positive, regular function $d_0$ of the variables $(T, \, \rho)$;
 \item  \label{novel} Form {\rm (C)} of the fluxes, and there is a strictly positive, regular function $d_0 = d_0(T, \, \rho)$ such that $$\inf_{{\bf b} \in \{\sqrt{ {\bf x} }\}^{\perp}} \langle (\mathcal{D} + {\bf X}^{\frac{1}{2}} \, {\bf K} \, {\bf X}^{\frac{1}{2}}) \, \frac{{\bf b}}{|{\bf b}|}, \,  \frac{{\bf b}}{|{\bf b}|} \rangle \geq d_0 \, .$$
 \end{enumerate}
\end{theorem}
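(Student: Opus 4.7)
\emph{Strategy.} The structural equivalence among the three closures is established by Theorem~\ref{Th-1}; hence the plan is to show that under the additional composition-independent lower bound in each item, the coefficients of the target form remain regular on $\overline{S}^1_+$ and that the matrix inequality propagates. I will establish the result as two separate equivalences (i) $\Leftrightarrow$ (iii) and (ii) $\Leftrightarrow$ (iii), exploiting that form (C) is the most convenient of the three because it already acts in the natural subspace $\{\sqrt{{\bf x}}\}^{\perp}$.

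\emph{Step 1: (i) $\Leftrightarrow$ (iii).} The plan is to rewrite the Fick-Onsager coefficient matrix using the diagonality of ${\bf X}$, ${\bf M}$, $\mathcal{D}$ and the identity ${\bf R} = c\,{\bf M}\,{\bf X}$ as
\[
{\bf L} \;=\; c\, {\bf P}^{\sf T}\, {\bf M}\, {\bf X}^{1/2}\, \bigl(\mathcal{D} + {\bf X}^{1/2}\, {\bf K}\, {\bf X}^{1/2}\bigr)\, {\bf X}^{1/2}\, {\bf M}\, {\bf P},
\]
and to observe that the linear map ${\bf X}^{1/2}\,{\bf M}\,{\bf P}$ restricts to a bijection $\{{\bf e}\}^{\perp} \to \{\sqrt{{\bf x}}\}^{\perp}$; indeed ${\bf M}\,{\bf x} = (\rho/c)\,{\bf y}$ so ${\bf P}^{\sf T}({\bf M}\,{\bf x}) = 0$, and dimension counting finishes. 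A short computation then yields ${\bf P}^{\sf T}\,{\bf M}\,{\bf R}\,{\bf P} = c\, {\bf P}^{\sf T}\,{\bf M}\,{\bf X}^{1/2}\bigl({\bf I} - \sqrt{{\bf x}}\otimes\sqrt{{\bf x}}\bigr)\,{\bf X}^{1/2}\,{\bf M}\,{\bf P}$, so the inequality in (i) transfers into a lower bound for $\mathcal{D} + {\bf X}^{1/2}\,{\bf K}\,{\bf X}^{1/2}$ on $\{\sqrt{{\bf x}}\}^{\perp}$ with the \emph{same} constant $d_0$, and vice versa. Regularity of the coefficients propagates in the direction (i) $\Rightarrow$ (iii) via the explicit formulas \eqref{b-correct}, \eqref{DfromAandS}, \eqref{KfromAandS}, and in the direction (iii) $\Rightarrow$ (i) by the algebraic combinations defining form (C).

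\emph{Step 2: (iii) $\Leftrightarrow$ (ii).} The plan here is to use the representation $\boldsymbol{\tau} = c\,{\bf X}^{1/2}\,{\bf D}_{0}^{\sharp}\,{\bf X}^{1/2}$ from \eqref{tau-rep} together with $\boldsymbol{\tau} = \rho\,{\bf B}\,{\bf Y}$ and the identity ${\bf M}^{-1}\,{\bf Y} = (c/\rho)\,{\bf X}$. The rank-two correction ${\bf b}\otimes\sqrt{{\bf x}} + \sqrt{{\bf x}}\otimes{\bf b}$ used to pass from ${\bf D}$ to ${\bf D}_0$ lives entirely in the ${\bf b}$-$\sqrt{{\bf x}}$ plane, so the nonzero spectrum of ${\bf D}_0$ coincides with the eigenvalues of ${\bf D}$ on $\{\sqrt{{\bf x}}\}^{\perp}$. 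From (iii) combined with the boundedness of the coefficients on the compact simplex we obtain a composition-independent two-sided spectral bound on ${\bf D}_0$, which inverts to a two-sided bound on ${\bf D}_0^{\sharp}$; via the above representation and the computation $\langle {\bf P}^{\sf T}\,{\bf M}^{-1}\,{\bf Y}\,{\bf P}\,z, z\rangle = (c/\rho)\,|{\bf X}^{1/2}\,{\bf P}\,z|^2$ this yields the quadratic-form inequality (ii). The converse implication is established by the same argument read in reverse, using the regularity of the $f_{ij}$ (hence boundedness of ${\bf B}\,{\bf Y}$ from above on the compact simplex) to supply the complementary bound needed to invert the group inverse relation.

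\emph{Main obstacle.} The delicate point of the proof will be the regularity of the group inverse \emph{up to the boundary} of $\overline{S}^1_+$, where one of the $y_i$ vanishes and hence ${\bf B}$ or ${\bf Y}$ becomes degenerate while ${\bf D}_0$ or ${\bf L}$ does not. My plan is to invoke the standard analytic perturbation theory for the group inverse on the open set of matrices of fixed rank $N-1$ with uniformly bounded-below nonzero spectral gap. The composition-independent lower bound postulated in each hypothesis, combined with the asymptotic condition $d_i({\bf y}) \to d_i({\bf y}_0) > 0$ as ${\bf y} \to {\bf y}_0$ with $y_{0,i}=0$ (and its analogues extracted from (i), (ii)), precisely secures such a uniform gap over the closed simplex, so the group inverse extends $C^k$-smoothly to the boundary and the coefficients of the target form inherit the required regularity.
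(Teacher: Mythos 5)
Your Step 1, the equivalence of (i) and (iii), is essentially the paper's argument: both exploit that the map ${\bf X}^{1/2}{\bf M}{\bf P}$ carries the quadratic form on $\{{\bf e}\}^{\perp}$ isometrically onto the one on $\{\sqrt{{\bf x}}\}^{\perp}$, so the constant $d_0$ is preserved exactly, while the regularity of the coefficients propagates through the explicit formulas \eqref{b-correct}, \eqref{DfromAandS}, \eqref{KfromAandS} and \eqref{PDMP}.

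Your Step 2, the direct route (ii) $\Leftrightarrow$ (iii), has a genuine gap in the regularity argument. The delicate object is not the group inverse ${\bf D}_0^{\sharp}$ but the Maxwell--Stefan \emph{friction coefficients} $f_{ik}$. In the direction (iii) $\Rightarrow$ (ii) the candidate matrix is \eqref{Bcandidate}, ${\bf B} = {\bf X}^{1/2}\,{\bf D}_0^{\sharp}\,{\bf X}^{-1/2}\,{\bf M}^{-1}$, so $b_{ik} = \sqrt{x_i}\,({\bf D}_0^{\sharp})_{ik}/(\sqrt{x_k}\,M_k)$ and $f_{ik} = -b_{ik}/y_i$ carries the explicit singular factor $1/(\sqrt{x_k}\,y_i)$. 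Analytic perturbation theory with a uniform spectral gap gives only that the entries of ${\bf D}_0^{\sharp}$ extend smoothly to the closed simplex; it does not produce the additional algebraic vanishing $({\bf D}_0^{\sharp})_{ik} = O(\sqrt{x_i x_k})$ at $\partial\overline{S}^1_+$ that would be needed to cancel those singularities, and nothing in the spectral-gap hypothesis forces off-diagonal entries of ${\bf D}_0^{\sharp}$ to vanish on the boundary. So the claim that ``the coefficients of the target form inherit the required regularity'' by the invoked perturbation argument is unsupported precisely at the boundary, which is the whole difficulty.

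The paper avoids this by routing through form (A) in both directions. For (i) $\Rightarrow$ (ii) it uses a short observation you do not have: writing ${\bf J} = -({\bf L}{\bf R}^{-1})\,{\bf R}\,{\bf P}\,\nabla\frac{\boldsymbol{\mu}}{RT}$, the matrix $\tilde{\bf B} := {\bf L}{\bf R}^{-1} = {\bf A}+{\bf Y}{\bf S}$ \emph{already} has the Maxwell--Stefan structure with $f_{ik} := -S_{ik}$, thanks to the cancellation relation \eqref{abestimmt}; no group inverse is taken, so no boundary singularity arises. For (ii) $\Rightarrow$ (i) the paper supplies (a) a quantitative ellipticity bound for ${\bf L} = {\bf B}^{\sharp}{\bf R}$ via the auxiliary matrix ${\bf G} = {\bf R}^{1/2}{\bf R}^{-1}{\bf B}_{\alpha}{\bf R}^{1/2}$ and a spectral-radius estimate, and (b) regularity of ${\bf A}$, ${\bf S}$ by bounding $D_0 = \sum_i\det({\bf B}[i|i])$ from below with the Hadamard-type Lemma \ref{Hadamard}; both pieces are absent from your plan. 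To make your Step 2 work you would need either to import the paper's determinant bound, or to prove independently the required vanishing structure of ${\bf D}_0^{\sharp}$ at the boundary of the simplex.
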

\begin{proof}
We prove first that \eqref{fickonsager} and \eqref{novel} are equivalent.

Assume \eqref{fickonsager}. Just as in the proof of Theorem \ref{Th-1} (see \eqref{DfromAandS}, \eqref{KfromAandS}), we obtain from the given ${ \bf A}$ and ${\bf S}$ with regular entries matrices $\mathcal{D}$ and ${\bf K}$ with regular entries. We might represent arbitrary ${\bf b}\in \{\sqrt{\bf x} \}^\perp$, via ${\bf b}={\bf X}^{1/2} {\bf M}{\bf P} {\bf z}$ with $${\bf z } = ({\bf I} - \frac{1}{N} \, {\bf e} \otimes {\bf e}) \, {\bf M}^{-1} \, {\bf X}^{-\frac{1}{2}} \, {\bf b} \in  \{\bf e \}^\perp \, .$$ We then have
\begin{align*}
& \langle [{\mathcal D}+ {\bf X}^{1/2}\, {\bf K}\, {\bf X}^{1/2}]\, {\bf b}, {\bf b} \rangle =  \frac 1 c \langle {\bf L}\, {\bf P}\,{\bf z}, {\bf P}\,{\bf z} \rangle
\geq \frac{d_0}{c} \, \langle {\bf P^T} \, {\bf  R} \,{\bf M}\, {\bf P}\,{\bf P}\,{\bf z}, \,{\bf P}\,{\bf z} \rangle \\
& \qquad =  \frac{d_0}{c} \, \langle  \, {\bf  R} \,{\bf M}\, {\bf P}\,{\bf z}, \,{\bf P}\,{\bf z} \rangle  = \frac{d_0}{c} \, \langle {\bf X}^{-\frac{1}{2}}\, {\bf M^{-1}}  \, {\bf R}  \, {\bf X}^{-\frac{1}{2}}  \,{\bf b}, \,{\bf b} \rangle =  d_0 \, |{\bf b}|^2 \, .
\end{align*}
This proves \eqref{novel}.

Assume now that \eqref{novel} is valid. Then, we obtain the Fick--Onsager form (A) with ${\bf L} = c \, {\bf P}^{\sf T} \, {\bf M}\, {\bf X}^{1/2} \, {\bf D}\, {\bf X}^{1/2} \, {\bf M}\, {\bf P}$. As shown in section \ref{positivity}, relation \eqref{PDMP}, the corresponding matrices ${ \bf A}$ and ${\bf S}$ are regular functions of the state variables. We next easily verify that
\begin{align*}
 {\bf L} \geq & c \, d_0 \, {\bf P}^{\sf T} \, {\bf M}\, {\bf X}^{1/2} \, {\bf X}^{1/2} \, {\bf M}\, {\bf P} =   d_0 \, {\bf P}^{\sf T} \, {\bf R} \, {\bf M}\, {\bf P} \, .
\end{align*}
This achieves to prove that \eqref{fickonsager} and \eqref{novel} are equivalent, and with the same $d_0$.\\[1ex]

We next prove that \eqref{maxstef} implies \eqref{fickonsager}. The proof consists in establishing two points: 1. Finding a lower bound for the ellipticity constant of ${\bf L}$ over $\{ {\bf e} \}^{\perp}$ and, 2. Proving that the entries of ${ \bf A}$ and ${\bf S}$ in the representation ${\bf L} = { \bf R} \, ({ \bf A} + {\bf S} \, {\bf Y})$ are regular functions.

Ad 1. By the assumptions in Form {\rm (B)}, the matrix $ \boldsymbol{\tau} := {\bf B}({\bf y}) \,{\bf R}$ is symmetric and positive semi-definite. Thus, ${ \bf R}^{-1} \, {\bf B}({\bf y}) = {\bf R}^{-1} \, \boldsymbol{\tau} \, {\bf R}^{-1}$ is likewise symmetric and positive semi-definite. The matrix ${\bf B}_{\alpha} = {\bf B}({\bf y}) + \alpha \, {\bf y} \otimes {\bf e}$ is invertible for all $\alpha >0$ and ${\bf R}^{-1} \, {\bf B}_{\alpha} = {\bf R}^{-1} \, {\bf B}({\bf y}) + \frac{\alpha}{\rho} \, {\bf e}\otimes {\bf e}$ is symmetric and positive definite.

We consider an auxiliary matrix ${\bf G} := {\bf R}^{\frac{1}{2}} \, {\bf R}^{-1} \, {\bf B}_{\alpha} \, {\bf R}^{\frac{1}{2}}$ which is again positive definite, symmetric, and possesses the entries
\begin{align*}
 g_{ij} = \begin{cases}
            (-f_{ij}+\alpha) \, \sqrt{y_i \, y_j} & \text{ for } i \neq j\\
            y_i \, (\frac{1}{y_i} \, \sum_{k\neq i} f_{ik} \, y_k + \alpha) & \text{ for } i = j
           \end{cases} \, .
\end{align*}
The latter shows that $\|{\bf G}\|_{\infty} \leq \sup_{i\neq j} |f_{ij}| + \alpha$. Thus the spectral radius of ${\bf G}$ is bounded by the same quantity, and
\begin{align*}
 \lambda_{\min}({\bf G}^{-1}) \geq \frac{1}{ \sup_{i\neq j} |f_{ij}| + \alpha  } =: \lambda(\alpha) \, .
\end{align*}
In other words, the matrix ${\bf G}^{-1} - \lambda(\alpha) \, {\bf I}$ is positive semi-definite, which means nothing else but ${\bf R}^{-\frac{1}{2}} \, {\bf B}_{\alpha}^{-1} \, {\bf R} \, {\bf R}^{-\frac{1}{2}} - \lambda(\alpha) \, {\bf I} \geq 0$.
We multiply from the left with the matrix ${\bf P}^{\sf T} \, {\bf R}^{\frac{1}{2}}$ and from the right with ${\bf R}^{\frac{1}{2}} \, {\bf P}$, preserving the inequality, and thus
\begin{align*}
{\bf P}^{\sf T} \, {\bf B}_{\alpha}^{-1} \, {\bf R} \, {\bf P} \geq & \lambda(\alpha) \, {\bf P}^{\sf T} \, {\bf R} \, {\bf P} \geq \frac{\lambda(\alpha)}{\|{\bf M}\|_{\infty}} \, {\bf P}^{\sf T} \, {\bf R} \, {\bf M} \, {\bf P} \, .
\end{align*}
The appendix, section \ref{Drazin}, establishes that ${\bf B}_{\alpha}^{-1} = {\bf B}(\mathbf{y})^{\sharp} + {\bf y} \otimes {\bf e}/\alpha$. Since ${\bf P}^{\sf T} \, {\bf y} = {\bf 0}$, we see that ${\bf P}^{\sf T} \, {\bf B}({\bf y})^{\sharp} \, {\bf R} \, {\bf P} \geq (\lambda(\alpha)/\|{\bf M}\|_{\infty}) \, {\bf P}^{\sf T} \, {\bf R} \, {\bf M} \, {\bf P}$. Finally, ${\bf B}({\bf y})^{\sharp} \, {\bf R} \, {\bf P} = {\bf B}({\bf y})^{\sharp} \, {\bf R}$ and ${\bf P}^{\sf T} \, {\bf B}({\bf y})^{\sharp} = {\bf B}({\bf y})^{\sharp}$ imply that
\begin{align}\label{inversefrombelow}
{\bf B}({\bf y})^{\sharp} \,{\bf R} \geq \frac{\lambda(\alpha)}{\|{\bf M}\|_{\infty}} \, {\bf P}^{\sf T} \, {\bf R} \, {\bf M} \, {\bf P} \, .
\end{align}
Thus, the Fick-Onsager form holds with ${\bf L} = {\bf B}({\bf y})^{\sharp} \,{\bf R}$ and, letting $\alpha \rightarrow 0$ in \eqref{inversefrombelow}, we prove ${\bf L} \geq d_0 \, {\bf P}^{\sf T} \, {\bf R} \, {\bf M} \, {\bf P}$ with $d_0 := (\sup_{i\neq j} |f_{ij}| \, \|{\bf M}\|_{\infty})^{-1}$. If the coefficients $f_{ij}$ are regular, they are bounded above independently the composition-variable ${\bf y}$, hence this $d_0$ is bounded from below by a function of $(T,\rho)$. This completes the proof of 1.

Ad 2. In the proof of Theorem \ref{Th-1}, relation \eqref{AandSfromB}, we showed that ${\bf L} = { \bf R} \, (\tilde{ \bf A} + \tilde{\bf S} \, {\bf Y})$ with $\tilde{\mathbf{A }}= (\text{diag}(\mathbf{F} {\bf y}))^{-1}$ and $ \tilde{\mathbf{S}} = (\text{diag}(\mathbf{F} {\bf y}))^{-1} \, [- \mathbf{e} \otimes  \mathbf{e} +  \mathbf{F} \, \mathbf{B}^{\sharp}]$. Here $\mathbf{F}$ denotes the matrix of the friction coefficients $f_{ik}$ arbitrarily extended in the diagonal with positive $f_{ii}$. In particular $(\text{diag}(\mathbf{F} {\bf y}))^{-1}$ is diagonal with entries $1/(b_{ii} + f_{ii} \, y_i)$.

The final form of the matrices ${ \bf A}$ and ${\bf S}$ is obtained by shuffling the diagonal of $\tilde{\bf S}$ into $\tilde{{\bf A}}$. Thus, showing that the entries of $\tilde{{\bf A }}$ and $\tilde{\mathbf{S}}$ are regular functions is already sufficient for the regularity of ${ \bf A}$ and ${\bf S}$.

Due to the assumption \eqref{maxstef}, the matrix ${\bf B}\, {\bf Y}$ is symmetric and positive semi-definite on $\mathbb{R}^N$ and satisfies $${\bf B} \, {\bf Y} \geq d_0 \,  {\bf P}^{\sf T} \, {\bf M}^{-1} \, {\bf  Y} \, {\bf P} \geq \frac{d_0}{\|{\bf M}\|_{\infty}} \,  {\bf P}^{\sf T} \, {\bf  Y} \, {\bf P} \, .$$ With $\tilde{d}_0 :=  d_0/\|{\bf M}\|_{\infty}$, it therefore follows that
\begin{align*}
 \langle {\bf B} \, {\bf Y} \, {\bf e}^i, \, {\bf e}^i \, \rangle \geq \tilde{d}_0 \, \sum_{k=1}^N y_k \, (\delta_{ki} - y_i)^2 = \tilde{d}_0 \, y_i \, (1 - y_i) \, .
\end{align*}
This shows that $b_{ii} \geq \tilde{d}_0  \, (1-y_i)$. Choosing $f_{ii} := \tilde{d}_0$, we obtain that $b_{ii} + f_{ii} \, y_i \geq \tilde{d}_0$, and it follows that
\begin{align*}
 \|(\text{diag}(\mathbf{F} {\bf y}))^{-1}\|_{\infty} = \frac{1}{\min_i b_{ii} + f_{ii} \, y_i}  \leq \tilde{d}_0^{-1} = \frac{\|{\bf M}\|_{\infty}}{d_0} \, .
\end{align*}
Thus, the matrix $\tilde{ {\bf A}}$ of \eqref{AandSfromB} consists of regular functions of the state variables. In order to show the same for $\tilde{ {\bf S}}$, it is now sufficient to prove that the generalized inverse $\mathbf{B}^{\sharp}$ consists of regular functions. Here we invoke the appendix, section \ref{Drazin}, and we see, in turn, that it is sufficient to find a positive lower bound from below for the quantity
\begin{align}\label{defdet}
D_0 := \sum_{i=1}^N \text{det} (\mathbf{B}[i|i]) \, ,
\end{align}
where $\mathbf{B}[i|i] \in \mathbb{R}^{(N-1) \times (N-1)}$ is the matrix obtained by canceling the row and column with index $i$. In order to derive the lower bound, note first that $(\mathbf{B} \, {\bf R} )[i|i] = \mathbf{B}[i|i]  \, {\bf R}[i|i]$ due to the fact that ${\bf R}$ is diagonal. This means that we first have
\begin{align}\label{premineq}
  \text{det} (\mathbf{B}[i|i]) = \text{det} \big( (\mathbf{B} \, {\bf R})[i|i]\big) \, \, \, \Pi_{j\neq i} \frac{1}{\rho_j} \, .
\end{align}
Next we exploit that $(\mathbf{B} \, {\bf R} )[i|i] = {\bf Q}^{\sf T} \, \mathbf{B} \, {\bf R} \, {\bf Q}$ with the rectangular projector ${\bf Q} \in \mathbb{R}^{N \times (N-1)}$ obtained by canceling from the identity matrix in $\mathbb{R}^N$ the $i^{\rm th}$ column and replacing the $i^{\rm th}$ line by the zero vector. In view of the assumptions in \eqref{maxstef}, we then have
\begin{align*}
 (\mathbf{B} \, {\bf R} )[i|i] \geq & \tilde{d}_0 \, {\bf Q}^{\sf T} \, {\bf P}^{\sf T} \, {\bf R} \,  {\bf P}\,  {\bf Q}\\
  = & \tilde{d}_0 \, \rho \, {\bf Q}^{\sf T} \, {\bf P}^{\sf T} \, {\bf Y} \,  {\bf P}\,  {\bf Q} = \tilde{d}_0 \, \rho \, {\bf Q}^{\sf T} \, ({\bf Y} - {\bf y} \otimes {\bf y})\,  {\bf Q} \\
  = & \tilde{d}_0 \, \rho \, ({\bf Y} - {\bf y} \otimes {\bf y})[i|i] \, .
\end{align*}
Making use of the Lemma \ref{Hadamard} just hereafter, we find $$\text{det} \big( (\mathbf{B} \, {\bf R} )[i|i] \big) \geq (\tilde{d}_0 \, \rho)^{N-1} \, \text{det} \big({\bf Y} - {\bf y} \otimes {\bf y})[i|i]\big) \, .$$
We express $({\bf Y} - {\bf y} \otimes {\bf y})[i|i] = {\bf Y}^{\frac{1}{2}}[i|i] \, ({\bf I} - \sqrt{{\bf y}} \otimes \sqrt{{\bf y}})[i|i] \, {\bf Y}^{\frac{1}{2}}[i|i]$. The eigenvalues of $({\bf I} - \sqrt{{\bf y}} \otimes \sqrt{{\bf y}})[i|i]$ are $\lambda =1$ with multiplicity $N-2$ and $\lambda = y_i$. It follows that
\begin{align*}
 \text{det} \big(({\bf Y} - {\bf y} \otimes {\bf y})[i|i]\big) = \text{det}\big({\bf Y}^{\frac{1}{2}}[i|i]\big) \, y_i \, \text{det}\big({\bf Y}^{\frac{1}{2}}[i|i]\big) = y_i \, \Pi_{j \neq i} y_j \, .
\end{align*}
We have therefore shown that
\begin{align*}
 \text{det}  (\mathbf{B} \, {\bf R} )[i|i] \geq (\tilde{d}_0 \, \rho)^{N-1} \, \Pi_{j =1}^N y_j = \frac{\tilde{d}_0^{N-1}}{\rho}  \, \Pi_{j =1}^N \rho_j \, .
\end{align*}
Thus, \eqref{premineq} yields $  \text{det} (\mathbf{B}[i|i]) \geq  \tilde{d}_0^{N-1} \, y_i$ while \eqref{defdet} implies that $D_0 \geq \tilde{d}_0^{N-1} = (d_0/\|{\bf M}\|_{\infty})^{N-1}$. This completes the proof of 2, and altogether, \eqref{maxstef} implies \eqref{fickonsager}.\\[1ex]

Finally, we  to prove that \eqref{fickonsager} implies \eqref{maxstef}. We thus assume that the diffusion fluxes are given as ${\bf J} = - {\bf L}  \, \frac{\nabla \boldsymbol{\mu}}{R \, T} = - ({\bf L} \, {\bf R}^{-1}) \, {\bf R} \, {\bf P} \,  \frac{\nabla \boldsymbol{\mu}}{R \, T} $. We observe that the matrix ${\bf L} \, {\bf R}^{-1}$ possesses the structure ${\bf A} + {\bf Y} \, {\bf S}$, in which ${\bf A}$ is a diagonal matrix with positive entries being regular functions of the state variables, and ${\bf S}$ is off-diagonal with, as well, regular entries. Owing to the property that ${\bf L}^{\sf T}\, {\bf e} = 0$, we compute that
\begin{align}\label{abestimmt}
 a_i + \sum_{k\neq i} s_{ik} \, y_k = 0 \text{ for all } i = 1,\ldots,N \, .
\end{align}
This means that ${\bf L} \, {\bf R}^{-1}$ is of exactly the same form of the Maxwell-Stefan matrix ${\bf B}$ with $f_{ik} := - s_{ik}$. We denote $\tilde{\bf B} :={\bf L} \, {\bf R}^{-1}$. By assumption we have $ \tilde{\bf B} \, {\bf R} = {\bf L} \geq d_0 \, {\bf P}^{\sf T} \, {\bf M} \, {\bf R} \, {\bf P}$, which implies that $\tilde{\bf B} \, {\bf R} \geq \tilde{d}_1 \, {\bf P}^{\sf T} \, {\bf M}^{-1} \, {\bf R} \, {\bf P}$ with $\tilde{d}_1 = d_0 \, (\inf_i M_i)^2$.

Now, we want to invert ${\bf J} = - \tilde{\bf B} \, {\bf R} \, {\bf P} \,  \frac{\nabla \boldsymbol{\mu}}{R \, T}$, for which we simply might apply the implication \eqref{maxstef} $\Rightarrow$ \eqref{fickonsager} exchanging the roles of ${\bf J}$ and ${\bf d}$ therein. We are done.
\end{proof}
Now the auxiliary Lemma, which probably is obvious.
\begin{lemma}\label{Hadamard}
Suppose that ${\bf A}$ and ${\bf B}$ are positive semi-definite and symmetric, and that ${\bf A} \geq {\bf B}$. Then $\text{det}({\bf A}) \geq \text{det}({\bf B})$.
\end{lemma}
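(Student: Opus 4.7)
The plan is to reduce to the case where $B$ is strictly positive definite via a perturbation argument, and then exploit simultaneous congruence diagonalization (or equivalently, conjugation by $B^{-1/2}$).

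First I would treat the non-degenerate case: assume $B > 0$. Then $B^{-1/2}$ is well defined, symmetric and positive definite. Conjugating the inequality ${\bf A} \geq {\bf B}$ yields
\begin{equation*}
B^{-1/2} A B^{-1/2} \geq B^{-1/2} B B^{-1/2} = I.
\end{equation*}
The matrix $C := B^{-1/2} A B^{-1/2}$ is symmetric and positive semi-definite, and the inequality $C \geq I$ is equivalent to saying that every eigenvalue $\lambda_i(C)$ satisfies $\lambda_i(C) \geq 1$. Therefore $\det(C) = \prod_i \lambda_i(C) \geq 1$, and since $\det(C) = \det(A)/\det(B)$, this gives $\det(A) \geq \det(B)$.

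Next I would pass to the general case by approximation. Given symmetric positive semi-definite ${\bf A} \geq {\bf B} \geq 0$, for every $\epsilon > 0$ we have ${\bf A} + \epsilon I \geq {\bf B} + \epsilon I > 0$. Applying the previous step yields $\det({\bf A} + \epsilon I) \geq \det({\bf B} + \epsilon I)$. The determinant is a polynomial in the entries of the matrix, hence continuous, so letting $\epsilon \to 0+$ delivers $\det({\bf A}) \geq \det({\bf B})$.

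There is essentially no obstacle here — the only minor point is that one must be careful to apply the congruence argument only after ensuring $B$ is invertible, which is exactly what the $\epsilon$-regularization guarantees. An equivalent route would be to write ${\bf A} = {\bf B} + {\bf C}$ with ${\bf C} \geq 0$ and use the fact that for any symmetric positive semi-definite ${\bf B}$ and ${\bf C}$, the function $t \mapsto \det({\bf B} + t {\bf C})$ is nondecreasing on $[0,\infty)$ (its derivative at $t$ equals $\operatorname{tr}(\operatorname{adj}({\bf B} + t{\bf C}) \, {\bf C}) \geq 0$), but the conjugation argument above is the cleanest.
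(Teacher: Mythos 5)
Your proof is correct, but it takes a genuinely different route from the paper's. You reduce to the case of a strictly positive definite ${\bf B}$, conjugate the inequality ${\bf A}\geq{\bf B}$ by ${\bf B}^{-1/2}$ to obtain ${\bf C}:={\bf B}^{-1/2}{\bf A}{\bf B}^{-1/2}\geq{\bf I}$, read off $\det({\bf C})\geq 1$ from the eigenvalue bound, and then handle the singular case by an $\epsilon$-regularization and continuity of the determinant. The paper instead diagonalizes ${\bf A}$ by an orthogonal matrix ${\bf O}$, notes that ${\bf D}\geq{\bf O}^{\sf T}{\bf B}{\bf O}$ forces each eigenvalue $\lambda_i$ of ${\bf A}$ to dominate the corresponding diagonal entry of ${\bf O}^{\sf T}{\bf B}{\bf O}$, and then invokes the Hadamard determinant theorem (the determinant of a PSD matrix is at most the product of its diagonal entries) to close the argument. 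The paper's proof works uniformly in the degenerate case without any perturbation, at the cost of relying on Hadamard's inequality; your congruence argument is more elementary (spectral theorem only) but needs the $\epsilon$-limit. Both are standard and clean, and your third sketched alternative via $\frac{d}{dt}\det({\bf B}+t{\bf C})=\operatorname{tr}(\operatorname{adj}({\bf B}+t{\bf C})\,{\bf C})\geq 0$ is in fact closest in spirit to the machinery the paper already uses in Lemma~\ref{lemdrazin} (Jacobi's formula), so it would have fit the surrounding text naturally as well.
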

\begin{proof}
Denote ${\bf D}$ the diagonal matrix consisting of the eigenvalues $\lambda_1,\ldots,\lambda_N$ of ${\bf A}$, and ${\bf O}$ a unitary matrix such that ${\bf O}^{\sf T} \, {\bf A} \, {\bf O} = {\bf D}$. Then ${\bf D} \geq {\bf O}^{\sf T} \, {\bf B} \, {\bf O}$ and, in particular, $\lambda_i \geq \langle {\bf O}^{\sf T} \, {\bf B} \, {\bf O}\, {\bf e}^i, \,{\bf e}^i \rangle$ for $i=1,\ldots,N$. According to the Hadamard determinant theorem, the determinant of a positive semi-definite matrix is bounded above by the product of the diagonal entries. Therefore
\begin{align*}
\text{det}({\bf A}) = & \Pi_{i=1}^N \lambda_i \geq \Pi_{i=1}^N\langle {\bf O}^{\sf T} \, {\bf B} \, {\bf O}\, {\bf e}^i, \, {\bf e}^i \rangle
\stackrel{\text{Hadamard th.}}{\geq} \text{det}({\bf O}^{\sf T} \, {\bf B} \, {\bf O}) = \text{det}({\bf B}) \, .
\end{align*}
\end{proof}

\noindent

\section{The Diagonal Case: Multicomponent Darken Equation}\label{Darken}
The Fick-Onsager phenomenological coefficients ${\bf L} =[L_{ij}]$ form a non-diagonal, typically dense matrix, where the
non-diagonality immediately follows from ${\bf L} {\bf e}=0$. Also the matrix ${\bf L}'$, which originally appeared in the closure
in section~\ref{Fick-Onsager}, cannot reasonably be assumed as diagonal, since a certain constituent $A_N$ was singled out for elimination of the respective
diffusion flux. Only in case of a dilute (w.r.\ to the $N$-th species, which hence acts as the solvent) system, the reduced system
of the first $N-1$ species equations might be approximated by a system with diagonal diffusion matrix. Evidently, cross-effects are
not expected in the dilute case.

In the Maxwell-Stefan approach, the MS-diffusivities can be interpreted as inverse friction coefficients, hence
the off-diagonal entries are all what counts and a diagonal matrix of Maxwell-Stefan diffusivities makes no sense at all.

Consequently, from the three equivalent approaches to model multicomponent diffusion within the continuum thermodynamical
framework, the novel closure has the unique feature to allow for a non-trivial diagonal closure,
i.e.\ a closure in which the diffusion matrix from \eqref{novel-D} is of the type ${\bf D}={\rm diag}(d_1, \ldots ,d_N)$
with $d_i >0$ that can all be different from each other.
It is interesting to study this closure in more detail and to understand its physical meaning.
For this purpose, we hence employ \eqref{closure-B2} with ${\bf K}:={\bf 0}$ and obtain
\begin{equation}\label{j-FO-diag}
{\bf J}  \, =\, -{\bf P}^{\sf T} \, {\bf R}\, {\mathcal D}\, {\bf M}\, {\bf P} \, \nabla \frac{\boldsymbol{\mu}}{RT}
\end{equation}
with diagonal ${\mathcal D}={\rm diag}(d_1, \ldots ,d_N)$, $d_i>0$.
For comparison with diffusion models from experiments or molecular dynamics, where the Maxwell-Stefan diffusivities
$\D_{ik}=\frac{1}{f_{ik}^{\rm mol}}$ (see Appendix~\ref{MS-ChemEng})
are modeled rather than the mass-based coefficients $f_{ik}$, we change for molar diffusion fluxes to the result
 \[
{\bf J}^{\rm mol}  \, =\, {\bf C} \,{\bf R}^{-1} \,{\bf J} \, =\, -{\bf C}  \,{\bf P} \, {\mathcal D}\, {\bf M}\, {\bf P} \, \nabla \frac{\boldsymbol{\mu}}{RT},
\]
where
\[
{\bf J}^{\rm mol} := [\, {\bf j}_1^{\rm mol} | \cdots |\, {\bf j}_N^{\rm mol} ]^{\sf T}.
\]
We further rewrite the molar fluxes as
\begin{equation}\label{jmol-diag}
{\bf J}^{\rm mol}= - \, [{\bf I}-\frac c  \rho {\bf x} \otimes {\bf m}]\, {\mathcal D}\, {\bf R}\, {\bf P} \, \nabla \frac{\boldsymbol{\mu}}{RT}
\end{equation}
with ${\bf m}:=(M_1,\ldots ,M_N)^{\sf T}$ being the vector of the molar masses of the $A_i$.
Let us note in passing that a completely molar version reads as
\[
{\bf J}^{\rm mol}= -\, [{\bf I}-\frac c  \rho {\bf x} \otimes {\bf m}]\, {\mathcal D}\, {\bf P}^{\sf T} {\bf C}\, \, \nabla \frac{\boldsymbol{\mu}^{\rm mol}}{RT},
\]
but will not be needed further.

We are now going to compute the Maxwell-Stefan diffusivities which correspond to this core-diagonal case.
For this purpose, we employ the molar-based analog of \eqref{Closure-MSa} and \eqref{MS-matrix} from Appendix~\ref{MS-ChemEng},
viz.\
\begin{equation}
- {\bf B}^{\rm mol} {\bf J}^{\rm mol} = {\bf R}\, {\bf P} \, \nabla \frac{\boldsymbol{\mu}}{RT}
\end{equation}
with
\begin{equation}\label{Bij-molar}
B_{ij}^{\rm mol}=- \frac{x_i}{\D_{ij}} \;\mbox{ for } i\neq j, \quad B_{ii}^{\rm mol}= \sum_{k\neq i} \frac{x_k}{\D_{ik}},
\end{equation}
in order to replace $- {\bf R}\, {\bf P} \, \nabla \frac{\boldsymbol{\mu}}{RT}$ by ${\bf B}^{\rm mol} {\bf J}^{\rm mol}$
in \eqref{jmol-diag}.
This yields
\[
{\bf J}^{\rm mol}\, =  \, [{\bf I}-\frac c  \rho {\bf x} \otimes {\bf m}]\, {\mathcal D}\, {\bf B}^{\rm mol} {\bf J}^{\rm mol},
\]
hence ${\mathcal D}\, {\bf B}^{\rm mol} {\bf J}^{\rm mol}$ is of the form
\[
{\mathcal D}\, {\bf B}^{\rm mol} {\bf J}^{\rm mol} = {\bf J}^{\rm mol} + {\bf x} \otimes {\bf a}
\]
for some vector field ${\bf a}$. To compute ${\bf a}$, we exploit the fact that ${\bf e}^{\sf T} {\bf B}^{\rm mol}=0$
and multiply the last equation with $({\mathcal D}^{-1} {\bf e})^{\sf T}$ from the left to obtain
\[
{\bf a} = -\, \frac{({\mathcal D}^{-1} {\bf e})^{\sf T} {\bf J}^{\rm mol}}{\langle {\mathcal D}^{-1} {\bf e}, {\bf x} \rangle}.
\]
Consequently,
\[
{\bf B}^{\rm mol} {\bf J}^{\rm mol} = {\mathcal D}^{-1} \Big( {\bf I} - \frac{ {\bf x} \otimes {\mathcal D}^{-1} {\bf e}}
{ \langle  {\bf x}, {\mathcal D}^{-1} {\bf e}  \rangle} \Big) {\bf J}^{\rm mol}.
\]
For $i\neq j$ we hence obtain
\[
B^{\rm mol}_{ij} = - \, \frac{x_i \, /\, d_i d_j}{\sum_{k=1}^N x_k \, / \, d_k}.
%B^{\rm mol}_{ij} = - \, \frac{\frac{x_i}{d_i} \frac{1}{d_j}}{\sum_{k=1}^N \frac{x_k}{d_k}}.
\]
Therefore, we  obtain
\begin{equation}\label{MS-from-diag}
\frac{1}{\D_{ij}} = \frac{d_i^{-1} \, d_j^{-1}}{\sum_{k=1}^N x_k d_k^{-1}}\qquad (i\neq j)
\end{equation}
for the Maxwell-Stefan diffusivities in case the novel closure is applied with diagonal diffusion
${\mathcal D}={\rm diag}(d_1,\ldots, d_N)$. Recall that the $d_i$ are functions of the thermodynamic state,
say $(T,c,{\bf x})$ in the molar-based variant. Hence, we finally write \eqref{MS-from-diag} as
\begin{equation}\label{MC-Darken}
\D_{ij}(T,c,{\bf x}) = d_i (T,c,{\bf x})\, d_j(T,c,{\bf x}) \, \sum_{k=1}^N  \frac{x_k}{d_k (T,c,{\bf x})}\qquad (i\neq j).
%\frac{\D_{ij}(T,c,{\bf x})}{d_i (T,c,{\bf x})\, d_j(T,c,{\bf x})} = \sum_{k=1}^N  \frac{x_k}{d_k (T,c,{\bf x})},
\end{equation}
Note that in the binary case $N=2$, \eqref{MC-Darken} reduces to the classical Darken equation
\[
\D_{12}=x_1 d_2 + x_2 d_1,
\]
introduced in \cite{Darken} for diffusion in solid metals.

We refer to equation \eqref{MC-Darken} as the \emph{multicomponent Darken equation}. It has been introduced in \cite{Bardow2011}
as a means to predict Maxwell-Stefan diffusivities, generalizing the classical Darken equation from binary to multicomponent systems.
In the notation of \cite{Bardow2011}, the $\D_{ij}$ are given by the relations
\begin{equation}
\D_{ij} = \frac{D_{i,\rm self}\, D_{j,\rm self}}{D_{\rm mix}}
\quad \mbox{ with } \quad
\frac{1}{D_{\rm mix}}= \sum_{k=1}^N \frac{x_k}{D_{k,\rm self}},
\end{equation}
i.e.\ the $d_i >0$ from the core-diagonal closure coincide with the so-called self-diffusion or (since the latter is a bit misleading)
tracer diffusion coefficients of the substance $A_i$ in the mixture. Note that \eqref{MC-Darken} yields $\D_{ij}(T,c,{\bf x})=d_i (T,c,{\bf x})$ in the limit as $x_j\to 1-$ (hence $x_i \to 0+$).
To obtain a complete description, the dependence of $d_i$ on $(T,c,{\bf x})$, in particular on the composition,  needs to be given.
There a several different empirical relationships employed in the literature.
In \cite{Bardow2011}, the self-diffusion coefficients are modeled via
\begin{equation}\label{Diself}
\frac{1}{D_{i,\rm self}}= \sum_{k=1}^N \frac{x_k}{D_{i,\rm self}^{x_k\to 1}},
\end{equation}
where the constant coefficients $D_{i,\rm self}^{x_k\to 1} >0$ denote the diffusivity of $A_i$ as a diluted species in a binary mixture of $A_i$ and $A_k$.
For other models see, e.g., \cite{TK-book}, \cite{KvB2005} and the recent publications cited in the introduction above.

The identification of the multicomponent Darken equation as the core-diagonal special case of the novel closure above
proves the consistency of this generalization. At this point it is interesting to observe that in \cite{Bardow2011},
the multicomponent Darken equation has been obtained via an ad hoc argument to infer a certain structure concerning the
dependence of the Fick-Onsager coefficients on the molar fractions (i.e., on the composition) and then to compute
the corresponding Maxwell-diffusivities for $N=2,3,4$, but keeping only the leading terms in the molar fractions.
While motivating the form of the multicomponent Darken equation, this is not a strictly consistent derivation:
the latter approximation, if done directly for the Fick-Onsager coefficients $L_{ij}$, would
lead to a diagonal matrix ${\bf L}$, but then the diagonal entries must vanish, too, due to the constraint ${\bf L}{\bf e}= {\bf 0}$.
Indeed, the core-diagonal closure \eqref{j-FO-diag} gives the corresponding Fick-Onsager coefficients as
\begin{equation}\label{FO-coeff-diag}
L_{ij} = \rho_i \big( \lambda_i \delta_{ij} - y_j (\lambda_i + \lambda_j - \sum_{k=1}^N y_k \lambda_k ) \big)
\end{equation}
with $\lambda_i := d_i M_i$ and the Kronecker symbol $\delta_{ij}$.
This is a dense matrix and there is no rational in order for ${\bf L}$ being close to a diagonal matrix.

The sound modeling of Maxwell-Stefan diffusivities is an important ongoing research question;
cf.\ \cite{KvB2005}, \cite{KvB2016},
\cite{Wolf2018}, \cite{G-Vrabec2020}, \cite{Peters2020}, the review article \cite{Krishna2019}
and further references given in these works.

\noindent

\section{\textcolor{black}{On the Sign of Diffusion Coefficients}}\label{limits}

% {\bf Limitations of the equivalence.}\\[0.5ex]

So far we have proved the thermodynamic equivalence of the three proposed forms of the diffusion fluxes. A glance at the relevant literature shall however lead to amend this picture. Indeed, each of the three different closure approaches might also convey some specific information, mostly in the form of additional assumptions on the phenomenological coefficients involved. For instance, in any of the three closure schemes, we might assume that the diffusivities are constant or that certain coefficients possess a definite sign. Or, more generally, we might even ask which diffusion coefficients are really \emph{phenomenological diffusivities} such that we might prescribe or measure them.

The viewpoint of thermodynamical consistency does not suffice to answer these questions, and even the requirement of positivity for smooth solutions of PDEs for multicomponent diffusion does not exhaust this subject.

\subsection{Additional Sign Conditions on Phenomenological Coefficients}

In this paragraph we discuss issues concerning phenomenological coefficients that play a role in the practical use of multicomponent diffusion models. \\[0.2ex]

\noindent {\bf 1.\ Positivity of the MS-diffusivities.} It is common to impose to the Maxwell-Stefan form of the fluxes the additional condition that all friction coefficients $f_{ik}$ are strictly positive functions of the state variables. As recalled in the section \ref{Drazin}, this condition is always sufficient for the invertibility of the MS-equations. Whether negative Maxwell-Stefan diffusivities might occur or not is however a recurrent question in the applied literature, for instance in \cite{chakrawangeapen}, \cite{kraaijeveldwesselingh}. In the latter paper, it is proved that constant $f_{ik}$ are compatible with the thermodynamic requirement \eqref{entropy-MSc} only if $f_{ik} \geq 0$ for all $i \neq k$. This observation can be slightly generalized as follows.
\begin{lemma}\label{posMSdiffusivities}
 For all $1 \leq i < k \leq N$, assume that $f_{ik}$ is a regular function of the state variables defined at each $T > 0$ for all $\rho_1,\ldots,\rho_N \geq 0$. Assume moreover that for arbitrary $\rho_1,\ldots,\rho_N$, the inequality \eqref{entropy-MSc} is valid for all ${\bf z} \in \{{\bf e}\}^{\perp}$. In order that for all $i < k$, the function $f_{ik}(T, \cdot)$ is strictly positive at all $\rho_1,\ldots,\rho_N >0$, it is sufficient that all $f_{ik}$ describe binary interactions, that is $f_{ik} = f_{ik}(T,\rho_i,\rho_k)$ for $T>0$, all $i < k$ and all $\rho_1,\ldots,\rho_N \geq 0$.

Moreover, it is in general false that the conditions stated in {\rm (B)} for the matrix ${\bf B}$ imply that $f_{ik}$ is positive for all $i\neq k$.
\end{lemma}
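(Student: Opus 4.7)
The plan has two parts that mirror the two assertions of the lemma.

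\emph{Part 1 (non-negativity under binary interaction).} Fix $T$ and $\rho_i^0, \rho_k^0 > 0$. By the binary-interaction hypothesis, $f_{ik}(T, \rho_i^0, \rho_k^0)$ is independent of the remaining densities, so its value coincides with the one taken at the limit state $\rho_j = 0$ for all $j \neq i, k$ (the $\geq 0$ form of \eqref{entropy-MSc} extends to this boundary state by continuity of the $f_{lm}$). At that state every product $y_l y_m$ with $\{l, m\} \neq \{i, k\}$ vanishes and the right-hand side of \eqref{entropy-MSc} collapses to the single term
\[
\langle \boldsymbol{\tau}\, {\bf z},\, {\bf z}\rangle \;=\; \rho\, f_{ik}(T, \rho_i^0, \rho_k^0)\, y_i^0\, y_k^0\, (z_i - z_k)^2 \, .
\]
Testing against ${\bf z} = {\bf e}^i - {\bf e}^k \in \{{\bf e}\}^\perp$ yields $4 \rho f_{ik} y_i^0 y_k^0 \geq 0$, whence $f_{ik}(T, \rho_i^0, \rho_k^0) \geq 0$. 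To upgrade to the strict inequality stated by the lemma, I would perturb back into the interior with $\rho_j = \varepsilon > 0$ small for $j \neq i, k$: the strict positive definiteness of $\boldsymbol{\tau}$ on $\{{\bf e}\}^\perp$ there, combined with the already-established non-negativity of the other $f_{lm}$ and the regularity (hence boundedness) of all coefficients, rules out $f_{ik} = 0$ at interior compositions.

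\emph{Part 2 (counterexample when only (B) is assumed).} I would construct an explicit $N = 3$ example where ${\bf B}\,{\bf Y}$ is positive definite on $\{{\bf e}\}^\perp$ yet some $f_{ik} < 0$ at some composition. The shortest route exploits the equivalence (A) $\Leftrightarrow$ (B) of Theorem \ref{Th-1}: given any admissible Fick-Onsager pair $({\bf A}, {\bf S})$ in the form \eqref{L-Fick}, the corresponding friction coefficients can be read off via the inversion formula \eqref{AandSfromB} as $f_{ik} = -s_{ik}$ (after the diagonal shift used in the proof of Theorem \ref{Th-1}). It therefore suffices to design ${\bf L}$ in form (A) with a strongly diagonal-dominant ${\bf A}$ plus a small perturbation of ${\bf S}$ carrying one strictly positive off-diagonal entry at a chosen composition. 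A Gershgorin/Schur-complement estimate then preserves the positive definiteness of ${\bf L}$ on $\{{\bf e}\}^\perp$, while the chosen $f_{ik}$ is strictly negative there by construction.

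\emph{Main obstacle.} The delicate step is the strict positivity in Part 1: the boundary-limit argument only returns $\geq 0$, and the strict sign requires a quantitative perturbative analysis exploiting the interior strict definiteness of $\boldsymbol{\tau}$ together with the smooth composition-dependence of the coefficients. Part 2, by contrast, is essentially a bookkeeping exercise once the freedom of choice in the Fick-Onsager representation is invoked.
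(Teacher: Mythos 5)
Your proposal diverges from the paper in both halves, and each half has a genuine gap.

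\textbf{Part 1 (strict positivity).} The paper's own argument applies the strict positivity requirement $\langle\boldsymbol\tau\,{\bf z},{\bf z}\rangle>0$ directly at the boundary state $\rho_j=0$ for $j\notin\{\alpha,\beta\}$, with ${\bf z}={\bf e}^\alpha-{\bf e}^\beta$. There the quadratic form collapses to the single summand $4\rho\,f_{\alpha\beta}y_\alpha y_\beta z_\alpha^2$, and strict positivity of $f_{\alpha\beta}$ is read off immediately. Your reading only carries the non-strict version $\geq 0$ to that boundary, and your proposed interior perturbation does not close the gap: if $f_{\alpha\beta}(T,\rho_\alpha^0,\rho_\beta^0)=0$ (with $f_{\alpha\beta}$ independent of the $\rho_j$, $j\neq\alpha,\beta$, by the binary hypothesis), then at a nearby interior state $\rho_j=\varepsilon$ the test with ${\bf z}={\bf e}^\alpha-{\bf e}^\beta$ gives
\[
4\rho\, f_{\alpha\beta}\,y_\alpha y_\beta\,z_\alpha^2
+\rho\sum_{j\neq\alpha,\beta}\bigl(f_{\alpha j}y_\alpha+f_{\beta j}y_\beta\bigr)y_j\,z_\alpha^2,
\]
and since you have already established $f_{lm}\geq 0$, the second group of terms alone keeps the form non-negative and, for generic coefficients, strictly positive. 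No contradiction with $f_{\alpha\beta}=0$ arises. So the perturbation step, which you flag as the main obstacle, indeed fails as sketched; the paper avoids it entirely by invoking the hypothesis at the two-species boundary composition directly.

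\textbf{Part 2 (counterexample).} Your plan to pass through the Fick--Onsager form and read off the friction coefficients as $f_{ik}=-s_{ik}$ is incorrect. The off-diagonal entries $-s_{ik}$ are the coefficients of the ``quasi-Maxwell--Stefan'' matrix $\tilde{\bf B}:={\bf L}\,{\bf R}^{-1}$, which is the matrix obtained by \emph{factoring} ${\bf L}$, not by inverting the flux relation. The actual Maxwell--Stefan matrix associated with a given ${\bf L}$ is the group inverse $({\bf L}\,{\bf R}^{-1})^\sharp$ (see Lemma~\ref{Z}), and the friction coefficients are $f_{ik}=-b_{ik}/y_i$ with $b_{ik}$ the entries of that inverse; these are not obtained by simply flipping the sign of $s_{ik}$. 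Your Gershgorin/Schur reasoning would control the positivity of ${\bf L}$ but says nothing about the sign pattern of $({\bf L}\,{\bf R}^{-1})^\sharp$. The paper instead builds the counterexample on the Maxwell--Stefan side directly: it fixes $N=3$, constructs two positive-semidefinite matrices $\boldsymbol\tau^1$, $\boldsymbol\tau^2$ on $\{{\bf e}\}^\perp$, sets $\boldsymbol\tau:=|{\bf y}-{\bf y}^0|^2\boldsymbol\tau^1+\boldsymbol\tau^2$, and reads the $f_{ik}$ off from the defining relation \eqref{fric-coeff}, exhibiting $f_{13}<0$ near ${\bf y}^0$ while all the structural conditions in (B) remain valid. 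That direct construction sidesteps any inversion.
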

\begin{proof}
 We first show that the condition of binary interactions ensures the strict positivity of $f_{ik}$. We thus fix $T>0$ and we suppose that, for all $i<k$, we have $f_{ik} = f_{ik}(T,\rho_i,\rho_k)$ for all $\rho_1,\ldots,\rho_N \geq 0$. Then for arbitrarily chosen $1 \leq \alpha < \beta \leq N$, ${\bf z} \in \{{\bf e}\}^{\perp}$ and $\rho_1, \ldots,\rho_N \geq 0$ such that $\rho_i = 0$ for $i \not\in \{\alpha, \beta\}$, the assumption \eqref{entropy-MSc} yields
 \begin{equation*}%\label{entropy-MSc-Auswertung1}
0 <
 \,\sum_{1\leq i<k \leq N} f_{ik}(T,\rho_i,\rho_k) \, y_i \, y_k \, (z_i - z_k)^2 = f_{\alpha\beta}(T, \, \rho_{\alpha}, \, \rho_{\beta}) \, y_{\alpha} \, y_{\beta} \, (z_{\alpha} - z_{\beta})^2 \, .
\end{equation*}
Hence, constructing ${\bf z}$ via $z_i = 0$ for $i \not\in \{\alpha,\beta\}$ and $z_{\alpha} = - z_{\beta}$, it is obvious that $f_{\alpha\beta}(T,\rho_{\alpha},\rho_{\beta}) > 0$ for all $\rho_{\alpha}, \, \rho_{\beta} > 0$. This proves that $f_{\alpha\beta}$ is strictly positive.

Next we show the second claim by constructing a counterexample. We assume $N = 3$, and we show that, if $f_{13}$ is allowed to depend on $\rho_2$ it will assume negative values for some choices of $\rho_1,\rho_2,\rho_3$. It is easy to generalize the arguments to arbitrary large a number of components. We first let $\tau^1_{ik}(\rho_1,\rho_2,\rho_3) :=  \rho \, y_i \, y_k$ for all $1 \leq i < k \leq 3$ and fill the diagonal of the matrix ${\boldsymbol \tau}^1$ according to \eqref{tau-extension} in order to satisfy all conditions \eqref{fik-assumptions}. We introduce a second matrix ${\boldsymbol \tau}^2$ via
\begin{align*}
 {\boldsymbol \tau}^2(\rho_1,\rho_2,\rho_3) = \rho \, y_1 y_2 y_3 \, {\bf A}, \quad {\bf A} :=
\left[
\begin{array}{ccc}
a-1   & 1 & - a  \\[1ex]
1   & a-1 & - a \\[1ex]
 - a   &  - a & 2a
\end{array}
\right]
 \text{ with } a>2 \, .
\end{align*}
The matrix ${\boldsymbol \tau}^1$ results from the choice $f_{ik} \equiv 1$ in \eqref{fric-coeff}, hence it is symmetric and strictly positive definite on $\{{\bf e}\}^{\perp}$ for all ${\bf y}$ with $y_1,y_2,y_3>0$. The constant matrix ${\bf A}$ is symmetric, satisfies ${\bf A} {\bf e} = {\bf 0}$ and, for $a > 2$, it can be verified that the non-zero eigenvalues are strictly positive. Next we fix any state $\rho^0_1, \rho^0_2,\rho^0_3 > 0$, and we let
\begin{align*}
{\boldsymbol \tau} := |{\bf y} - {\bf y}^0|^2 \, {\boldsymbol \tau}^1 +  {\boldsymbol \tau}^2 \, ,
\end{align*}
which can be verified to be symmetric. Using that both ${\boldsymbol \tau}^1$ and ${\bf A}$ are positive definite on $\{{\bf e}\}^{\perp}$, the same follows for ${\boldsymbol \tau}$ and similarly we see that ${\boldsymbol \tau}{\bf e} = {\bf 0}$. Corresponding Maxwell-Stefan friction coefficients are next defined according to \eqref{fric-coeff}, meaning that
\begin{align*}
 f_{ik} = \frac{\tau_{ik}}{\rho \, y_i\, y_k} = |{\bf y} - {\bf y}^0|^2 + y_{[ik]} \, a_{ik} \, ,
\end{align*}
where $[ik] \neq i,k$ is the complementary index. These $f_{ik}$ are obviously regular functions of the state variables. Now, due to the special choice of ${\bf A}$, we among others obtain that
\begin{align*}
 f_{13} = |{\bf y} - {\bf y}^0|^2 - y_{2} \, a \,
\end{align*}
and, depending on the choice of $a > 2$, a neighborhood of the point ${\bf \rho}^0$ is readily constructed such that $f_{13} < 0$.
\end{proof}
Hence, thermodynamics does not require the positivity of the Maxwell-Stefan diffusivities. Note, moreover, that in a mixture like an electrolyte, where the species cannot vanish independently of each other, the arguments of the preceding Lemma do not apply. Even binary or constant diffusivities might then turn negative, hence such cases must be investigated separately.

From the viewpoint of the thermodynamic equivalence, the next natural question is to ask which kind of Fick-Onsager closure of form {\rm (A)} yields an equivalent Maxwell-Stefan form with positive friction coefficients. The next Lemma relates this problem to the interesting algebraic question of identifying the inverse of strict {\em Z}-matrices. Recall that a real square-matrix is a {\em Z}-matrix iff all off-diagonal elements are non-positive.
\begin{lemma}\label{Z}
 Let $\mathbf{L} \in \mathbb{R}^{N\times N}$ be symmetric, positive definite on $\{\mathbf{e}\}^{\perp}$ and $\mathbf{L}\, \mathbf{e} = {\bf 0}$. Let $\mathbf{J} = -  \mathbf{L} \, \nabla \frac{\boldsymbol{\mu}}{RT}$ and $\mathbf{L}^{\sharp}$ be the generalized inverse of $\mathbf{L}$. Then, the two following statements are equivalent:
 \begin{enumerate}[(1)]
  \item The matrix $\mathbf{B} := \mathbf{R} \, \mathbf{P} \, \mathbf{L}^{\sharp} \, \mathbf{P}^{\sf T} = (\mathbf{L} \, \mathbf{R}^{-1})^{\sharp}$ is a strict {\em Z}-matrix;
 \item $\mathbf{J}$ obeys the equations (4.15) with coefficients $f_{ik} := - b_{ik}/y_i > 0$ for all $i \neq k$.
 \end{enumerate}
 \end{lemma}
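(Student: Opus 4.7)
The plan is to first establish the algebraic identity claimed in the lemma, namely that the matrix $\mathbf{B} := \mathbf{R}\,\mathbf{P}\,\mathbf{L}^{\sharp}\,\mathbf{P}^{\sf T}$ coincides with the group inverse $(\mathbf{L}\,\mathbf{R}^{-1})^{\sharp}$, and then to read off the equivalence (1) $\Leftrightarrow$ (2) from the structural form of the resulting Maxwell--Stefan matrix.

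First, I would verify $\mathbf{B} = (\mathbf{L}\,\mathbf{R}^{-1})^{\sharp}$ by checking the three defining relations of the group inverse directly. The matrix $\mathbf{L}\,\mathbf{R}^{-1}$ has kernel $\mathbb{R}\,\mathbf{y}$ (since $\mathbf{L}\,\mathbf{e}=\mathbf{0}$ and $\mathbf{R}\,\mathbf{e}=\rho\,\mathbf{y}$) and left-kernel $\mathbb{R}\,\mathbf{e}$ (by symmetry of $\mathbf{L}$), so the existence of the group inverse is guaranteed by Remark \ref{GROUP}. The key auxiliary identities are $\mathbf{P}^{\sf T}\,\mathbf{L} = \mathbf{L} = \mathbf{L}\,\mathbf{P}$ (from $\mathbf{L}\,\mathbf{e}=\mathbf{0}$), $\mathbf{P}^2 = \mathbf{P}$, $\mathbf{L}^{\sharp}\,\mathbf{L} = \mathbf{I} - N^{-1}\,\mathbf{e}\otimes\mathbf{e}$ (the orthogonal projector onto $\{\mathbf{e}\}^{\perp}$, since $\mathbf{L}$ is symmetric), and $\mathbf{R}\,\mathbf{P}\,\mathbf{R}^{-1} = \mathbf{P}^{\sf T}$ (which follows upon writing $\mathbf{R} = \rho\,\mathbf{Y}$ and using $\mathbf{Y}\,\mathbf{e} = \mathbf{y}$, $\mathbf{Y}^{-1}\,\mathbf{y} = \mathbf{e}$). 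These reduce the commutativity to $(\mathbf{L}\,\mathbf{R}^{-1})\,\mathbf{B} = \mathbf{P}^{\sf T} = \mathbf{B}\,(\mathbf{L}\,\mathbf{R}^{-1})$, and the two reflexive identities then follow in a few lines.

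Second, I apply $\mathbf{B}$ to both sides of the Fick--Onsager closure $\mathbf{J} = -\mathbf{L}\,\nabla\boldsymbol{\mu}/(RT)$. Using $\mathbf{P}^{\sf T}\,\mathbf{L}=\mathbf{L}$ and $\mathbf{R}\,\mathbf{P}\,\mathbf{L}^{\sharp}\,\mathbf{L} = \mathbf{R}\,\mathbf{P}\,(\mathbf{I} - N^{-1}\,\mathbf{e}\otimes\mathbf{e}) = \mathbf{R}\,\mathbf{P}$ (since $\mathbf{R}\,\mathbf{P}\,\mathbf{e}=\mathbf{0}$), one finds $\mathbf{B}\,\mathbf{J} = -\mathbf{R}\,\mathbf{P}\,\nabla\boldsymbol{\mu}/(RT)$, which is exactly the Maxwell--Stefan system \eqref{Closure-MSa}. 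In addition, $\mathbf{B}\,\mathbf{y} = \mathbf{0}$ and $\mathbf{e}^{\sf T}\,\mathbf{B} = \mathbf{0}$ follow immediately from $\mathbf{P}^{\sf T}\,\mathbf{y}=\mathbf{0}$ and $\mathbf{e}^{\sf T}\,\mathbf{P}=\mathbf{0}$, so $\mathbf{B}$ inherits the precise structural features of the Maxwell--Stefan matrix in form (B).

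With this identification in place, the equivalence (1) $\Leftrightarrow$ (2) becomes a direct reading-off. The template \eqref{MS-matrix} dictates that whenever the fluxes obey the Maxwell--Stefan system with matrix $\mathbf{B}$, the off-diagonal entries satisfy $b_{ik} = -y_i\,f_{ik}$, while the diagonal entries are fixed by $\mathbf{B}\,\mathbf{y}=\mathbf{0}$. Since $y_i > 0$ throughout, the signs of $b_{ik}$ and $f_{ik}$ are opposite for $i\neq k$, so $\mathbf{B}$ being a strict \emph{Z}-matrix is equivalent to $f_{ik}>0$ for all $i\neq k$. The main obstacle is the bookkeeping in the first step: because $\mathbf{L}\,\mathbf{R}^{-1}$ is not symmetric, one cannot simply appeal to the spectral theorem, and establishing commutativity of $\mathbf{B}$ with $\mathbf{L}\,\mathbf{R}^{-1}$ requires the intertwining identity $\mathbf{R}\,\mathbf{P}\,\mathbf{R}^{-1} = \mathbf{P}^{\sf T}$. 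Once this identity is secured, everything else is straightforward algebra.
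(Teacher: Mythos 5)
Your proposal is correct and follows essentially the same route as the paper's proof: establish $\mathbf{B}=\mathbf{R}\,\mathbf{P}\,\mathbf{L}^{\sharp}\,\mathbf{P}^{\sf T}=(\mathbf{L}\,\mathbf{R}^{-1})^{\sharp}$ via the intertwining identity $\mathbf{R}\,\mathbf{P}=\mathbf{P}^{\sf T}\,\mathbf{R}$ together with $\mathbf{P}^{\sf T}\mathbf{L}=\mathbf{L}=\mathbf{L}\,\mathbf{P}$ and $\mathbf{L}^{\sharp}\mathbf{L}=\mathbf{I}-N^{-1}\,\mathbf{e}\otimes\mathbf{e}$, apply $\mathbf{B}$ to the Fick--Onsager closure to recover the Maxwell--Stefan system with $b_{ik}=-y_i f_{ik}$, and read off the sign equivalence. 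The only cosmetic difference is that you check all three defining relations of the group inverse while the paper records only the two products $\mathbf{A}\mathbf{X}=\mathbf{X}\mathbf{A}=\mathbf{I}-\mathbf{y}\otimes\mathbf{e}$; this does not change the substance of the argument.
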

\begin{proof}
As seen, $\mathbf{J} = - \mathbf{L} \, \nabla \frac{\boldsymbol{\mu}}{RT}$ implies that $\mathbf{J} = - \mathbf{L} \, \mathbf{R}^{-1} \, \mathbf{R} \, \mathbf{P} \, \nabla \frac{\boldsymbol{\mu}}{RT}$. We apply the generalized inverse $\mathbf{B} = (\mathbf{L} \, \mathbf{R}^{-1})^{\sharp}$. Then $\mathbf{B} \, \mathbf{J} = - \mathbf{R} \, \mathbf{P} \, \nabla \frac{\boldsymbol{\mu}}{RT}$. We may compute that
\begin{align*}
 {\bf R} \, {\bf P} \, {\bf L}^{\sharp} \, {\bf P}^{\sf T} \, {\bf L} \, {\bf R}^{-1} = & {\bf R} \, {\bf P} \, {\bf L}^{\sharp} \, {\bf L} \, {\bf R}^{-1}
 = {\bf R} \, {\bf P} \,( {\bf I} - \frac{1}{N} \, {\bf e} \otimes {\bf e} ) \, {\bf R}^{-1} \\
 = & {\bf R} \, {\bf P}  \, {\bf R}^{-1} =  {\bf I} - {\bf y} \otimes {\bf e} \,.
\end{align*}
Similarly ${\bf L} \, {\bf R}^{-1} , {\bf R} \, {\bf P} \, {\bf L}^{\sharp} \, {\bf P}^{\sf T} = {\bf I} - {\bf y} \otimes {\bf e}$, showing that $ {\bf B} = ({\bf L} \, {\bf R}^{-1})^{\sharp} =  {\bf R} \, {\bf P} \, {\bf L}^{\sharp} \, {\bf P}^{\sf T}$.

We define $f_{ik} := -b_{ik}/y_i$, and we obtain that the friction coefficients are strictly positive iff $\mathbf{B}$ is a strict {\em Z}-matrix or, in other words, iff ${\bf L}^{\sharp} - {\bf L}^{\sharp} {\bf y} \otimes {\bf e} -{\bf e} \otimes {\bf L}^{\sharp} {\bf y} - \langle {\bf L}^{\sharp} {\bf y}, \, {\bf y} \rangle \,{\bf e}  \otimes {\bf e}$ has negative off-diagonals.
\end{proof}
The positivity of the Maxwell-Stefan diffusivities is thus equivalent to $\mathbf{L}\, \mathbf{R}^{-1}$ being the generalized inverse of a {\em Z}-matrix. Unfortunately, there is apparently no handy characterization for the generalized inverse of (strict) {\em Z}-matrices available in the literature (even in the regular case).
Therefore, computation of one of the matrices $(\mathbf{L} \, \mathbf{R}^{-1})^{\sharp}$ or $\mathbf{R} \,  \mathbf{P}\, \mathbf{L}^{\sharp} \, \mathbf{P}^{\sf T}$ seems the only way to verify the positivity of the corresponding Maxwell-Stefan coefficients.

Which subclass of Fick-Onsager coefficients leads to constant or binary friction coefficients in the Maxwell-Stefan form would be a question of the same quality.

Note however that the occurrence of Maxwell-Stefan diffusivities which are \emph{irregular (infinite) functions of the state-variables} is prohibited by our statement of the equivalence in Theorem \ref{Th-2}, unless the corresponding Fick-Onsager coefficients are already degenerated (enhanced diffusion); according to \cite{chakrawangeapen}, this phenomenon occurs in certain simulations.\\[0.2ex]

\noindent {\bf 2.\ {\em M}-Matrix property in the Fick-Onsager scheme.} If in the closure of type (A), the off-diagonal matrix ${\bf S}$ is elementwise negative, then ${\bf L}$ is a strict {\em Z}-Matrix of rank $N-1$ with kernel $\{{\bf e}\}$. Recalling that ${\bf L}$ is positive semi-definite and symmetric, we call $\lambda$ the largest eigenvalue of ${\bf L}$, implying the inequality $L_{ii} \leq \lambda$. Hence, ${\bf L}$ can also be written as $\lambda \, {\bf I} - {\bf G}$ with
\begin{equation*}
{\bf G} = \lambda \, {\bf I} - {\rm diag}\,({\bf L}) - {\bf L}_{\text{off}} \, .
\end{equation*}
If now $ S_{ij} < 0$, then $ - {\bf L}_{\text{off}} = {\bf R} \, {\bf S} \, {\bf Y}$ and this ${\bf G}$ are elementwise nonnegative which charaterises ${\bf L}$ as a so-called singular {\em M}-Matrix. The case of identical Maxwell-Stefan coefficients $f_{ik} = \bar{d}^{-1}$, leading to ${\bf B}^{\sharp} = \bar{d} \, {\bf P}^{\sf T}$ and to ${\bf L} = \bar{d} \, {\bf R} \, {\bf P}$, shows that this property can sometimes--in rare cases--be expected.

Here we restrict ourselves to the following simple observation: Constant $S_{ij}$ in (A) are possible only if $S_{ij} < 0$ for all $i\neq j$. To see this, it suffices to recall that $a_i = - \sum_{j\neq i} S_{ij} \, y_j$ for all $i$ (cf.\ \eqref{abestimmt}). We let $y_i \rightarrow 0+$. Since $a_i \rightarrow a_i({\bf y}_0)>0$ is required, we obtain that $\sum_{j\neq i} S_{ij} \, y_j <0$. This is now valid for all admissible ${\bf y}$ with $y_i = 0$, and clearly $S_{ij} < 0$ follows for all $j\neq i$.

As seen in the preceding paragraph, it remains in general open to determine which subclass of Maxwell-Stefan coefficients yield strictly negative $S_{ij}$ in the Fick-Onsager representation. This is, in fact, the same question as in Lemma \ref{Z}.\\[0.2ex]

\noindent {\bf 3.\ Elementwise diagonal positivity for the novel closure scheme.} The third example occurs if we ask that the novel form (C) of the diffusion fluxes be valid, where the diagonal part $\mathcal{D}$ is elementwise positive. We shall restrict our observations to the core-diagonal case.
\begin{lemma}
Let $\mathcal{D} = {\bf \rm diag}(d_1, \ldots,d_N)$.
 For all $1 \leq i\leq N$, assume that $d_{i}$ is a regular function of the state variables defined at each $T > 0$ for all $\rho_1,\ldots,\rho_N \geq 0$. Assume moreover that for arbitrary $\rho_1,\ldots,\rho_N$, the matrix ${\bf P}^{\sf T} \, {\bf R} \, \mathcal{D} \, {\bf M} \, {\bf P}$ is positive definite on $\{{\bf e}\}^{\perp}$. In order that for all $i$, the function $d_{i}(T, \cdot)$ is strictly positive at all $\rho_1,\ldots,\rho_N >0$, it is sufficient that $d_{i}$ is independent of $\rho_i$.

 Moreover, it is in general false that, if $\mathcal{D}$ satisfies all conditions stated in {\rm (C)} with ${\bf K} = {\bf 0}$, the functions $d_i$ are positive.
\end{lemma}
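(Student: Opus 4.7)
The plan for the first assertion is to exploit the rank-one decomposition
\[
{\bf L} \;:=\; {\bf P}^{\sf T}\,{\bf R}\,\mathcal{D}\,{\bf M}\,{\bf P} \;=\; \sum_{k=1}^N \rho_k\,d_k\,M_k\,({\bf e}^k - {\bf y})\otimes ({\bf e}^k - {\bf y}),
\]
which follows from ${\bf P}^{\sf T}{\bf e}^k = {\bf e}^k - {\bf y}$, combined with a test vector that isolates the $d_i$-contribution in the limit $\rho_i \to 0^+$. I would pick ${\bf v}\in \{{\bf e}\}^\perp$ with $v_i = -(N-1)$ and $v_k = 1$ for $k\neq i$; a direct computation of $v_k - \langle {\bf y}, {\bf v}\rangle$ yields
\[
\langle {\bf L}(\rho)\,{\bf v},\,{\bf v}\rangle \;=\; \frac{N^2\rho_i}{(\rho_i+\rho_*)^2}\Bigl(d_i\,M_i\,\rho_*^2 \,+\, \rho_i\sum_{k\neq i} d_k(\rho)\,\rho_k\,M_k\Bigr),
\]
with $\rho_* := \sum_{k\neq i}\rho_k$. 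Dividing by $\rho_i$ and letting $\rho_i\to 0^+$ with the remaining $\rho_k$ held fixed, while using that $d_i$ stays constant by the independence hypothesis and that the $d_k$ for $k\neq i$ have well-defined limits by regularity, I obtain the non-strict bound $d_i \geq 0$.

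\textbf{Part 1 (upgrading).} To promote $d_i\geq 0$ to the strict inequality $d_i>0$ I would invoke the standing Form~(C) non-degeneracy assumption that $d_i({\bf y})\to d_i({\bf y}_0)>0$ as $y_i\to 0^+$. Independence of $d_i$ from $\rho_i$ means that the value of $d_i$ at any interior state coincides with the value at the corresponding state obtained by setting $\rho_i=0$; by continuity of $d_i$ as a regular function of the state variables, that value is precisely the strictly positive boundary limit $d_i({\bf y}_0)$. This is the place where the independence hypothesis enters in an essential way.

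\textbf{Part 2 (counterexample).} For the second claim I would construct an explicit example with $N=3$. Take $d_2\equiv d_3\equiv C$ for a large constant $C>0$ and
\[
d_1({\bf y}) \;:=\; A \,-\, B\,y_1(1-y_1),\qquad \text{with } B > 4A > 0.
\]
At $y_1 = 0$ one has $d_1 = A > 0$, so the Form~(C) boundary condition holds, whereas at $y_1 = 1/2$ one has $d_1 = A - B/4 < 0$. Positive definiteness of $\mathcal{D}$ on $\{\sqrt{\bf x}\}^\perp$ is verified by parameterizing $z_1 = -(z_2\sqrt{x_2}+z_3\sqrt{x_3})/\sqrt{x_1}$, which turns the quadratic form into $d_1\,(z_2\sqrt{x_2}+z_3\sqrt{x_3})^2/x_1 \,+\, C\,(z_2^2+z_3^2)$. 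Since $d_1<0$ occurs only on a compact subset of the interior simplex bounded away from the face $x_1=0$, both $d_1$ and $1/x_1$ remain bounded there, so taking $C$ sufficiently large makes the form strictly positive on all of $\{\sqrt{\bf x}\}^\perp$, while $d_1$ remains genuinely negative on an open region.

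\textbf{Main obstacle.} The principal difficulty is the strict inequality in Part~1: the positive-definiteness argument by itself only delivers $d_i \geq 0$. Indeed, in dimension $N = 3$ the degenerate choice $d_1 \equiv 0$ with $d_2,d_3$ positive constants already yields a matrix ${\bf L}$ positive definite on $\{{\bf e}\}^\perp$ at every state with $y_1 > 0$, without giving $d_1 > 0$. Excluding this degenerate situation requires the non-degeneracy input from Form~(C), and it is exactly the combination of independence and this boundary positivity that is decisive. A secondary challenge is the quantitative balance of $A$, $B$, $C$ in the counterexample, ensuring that global positive definiteness on $\{\sqrt{\bf x}\}^\perp$ can coexist with strict negativity of $d_1$ in the interior.
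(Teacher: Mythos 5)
Your argument is correct. For the first assertion you and the paper rely on exactly the same logic: the positive-definiteness hypothesis by itself only delivers $d_i\geq 0$ in the boundary limit, and the strict inequality is obtained by combining the Form~(C) non-degeneracy assumption $d_i({\bf y})\to d_i({\bf y}_0)>0$ as $y_i\to 0^+$ with the independence hypothesis, which identifies the interior value of $d_i$ with its boundary value. Your explicit test-vector computation (the rank-one decomposition of ${\bf L}$ and the choice $v_i=-(N-1)$, $v_k=1$ for $k\neq i$) makes the intermediate bound $d_i\geq 0$ transparent and is consistent with what the paper records in \eqref{A-pos}, though the paper's one-line proof skips it; you are also right, in your \emph{Main obstacle} note, that positive definiteness alone cannot give strict positivity (e.g.\ $d_1\equiv 0$ with $d_2,d_3>0$ at $N=3$), which is precisely why the Form~(C) non-degeneracy must be invoked. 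For the second assertion you take a genuinely different, self-contained route: you prescribe the explicit family $d_1({\bf y})=A-B\,y_1(1-y_1)$ with $B>4A>0$ and $d_2=d_3=C$, and verify positive definiteness on $\{\sqrt{\bf x}\}^\perp$ directly through the parametrization $z_1=-(z_2\sqrt{x_2}+z_3\sqrt{x_3})/\sqrt{x_1}$ plus Cauchy--Schwarz. The paper instead reaches back to the matrices $\boldsymbol\tau^1,\boldsymbol\tau^2$ built in the proof of Lemma~\ref{posMSdiffusivities}, where a negative friction coefficient $f_{13}$ is produced, and transports that sign into the core-diagonal coefficient via the ternary identity $d_{[ik]}=f_{ik}/D_0$ from Appendix~\ref{ternarysection}. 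Both constructions are valid; yours is more elementary and does not require the ternary appendix, while the paper's emphasizes that the phenomenon already arises within the Maxwell--Stefan family of closures. One cosmetic point: the set where $d_1<0$ is $\{y_1\in(a,1-a)\}$, which is not itself compact, but its closure is and one has $x_1\geq a\min_k M_k/M_1>0$ there, so your boundedness estimate for $1/x_1$ indeed goes through and the choice of a sufficiently large $C$ closes the argument.
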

\begin{proof}
% Assume that the matrix ${\bf P}^{\sf T} \, {\bf R} \, \mathcal{D} \, {\bf M} \, {\bf P}$ is positive definite on $\{{\bf e}\}^{\perp}$. This implies for all ${\bf z} \in \mathbb{R}^N$, $z \not\in \{{\bf e}\}$ that
% \begin{align*}
%  \sum_{i=1}^N M_i \, d_i \, \rho_i \, (z_i - \langle {\bf y}, \, {\bf z} \rangle)^2 > 0 \, .
% \end{align*}
% For $1 \leq \alpha \leq N$, consider now a state such that $\rho_i = 0$ for all $i \neq \alpha$. If $d_{\alpha} = d_{\alpha}(T,\rho_{\alpha})$, we obtain that $M_{\alpha} \, d_{\alpha}(T,\rho_{\alpha}) \, z_{\alpha}
%
%

If $d_i$ does not depend on $\rho_i$, then the consequence of (C) that $d_i \rightarrow d_i({\bf y}_0) > 0$ for $y_i \rightarrow 0+$ implies that $d_i$ must be strictly positive over all compositions.

In order to prove the second claim, a counterexample is sufficient. We exploit the result of appendix~\ref{ternarysection} showing that, for $N = 3$, every Maxwell-Stefan closure satisfying (B) is core-diagonal with
\begin{align*}
 d_{[ik]} = f_{ik}/D_0, \quad D_0 = {\rm trace} \, ({\rm adj} ({\bf B})) > 0
\end{align*}
with the Maxwell-Stefan matrix ${\bf B}$ as in \eqref{MS-matrix} and $[ik] \neq i,k$ the complementary index. Now it suffices to apply the counterexample of Lemma \ref{posMSdiffusivities}, and we obtain in particular that $\mathcal{D}_0 \, d_2 = f_{13}$ is negative whenever $f_{13}$ is negative.
%
% binary example is sufficient. For $N = 2$, we easily show that ${\bf P}^{\sf T} \, {\bf R} \, \mathcal{D} \, {\bf M} \, {\bf P}$ is positive semi-definite of rank one if and only if $M_1\, d_1 \, y_2 + M_2 \, d_2 \, y_1 > 0$. We choose any two constants $c_1, \, c_2$ of opposite sign such that $c_1 \, M_1 + c_2 \, M_2 > 0$, and we define $d_1 := c_1 \, y_1$, $d_2 := c_2 \, y_2$ which are regular functions satisfying $M_1\, d_1 \, y_2 + M_2 \, d_2 \, y_1 > 0$, but at least one of them is strictly negative.
\end{proof}
More generally, it is not possible to show that every Fick-Onsager closure obeying (A) yields an equivalent novel form with elementwise positive $\mathcal{D}$. Thus, the latter condition introduces a new strict subclass among the thermodynamic consistent closures of type (A).
\begin{lemma}
 Let $\mathbf{L}= {\bf R} \, ({\bf A} + {\bf S} \, {\bf Y}) \in \mathbb{R}^{N\times N}$ be symmetric, positive definite on $\{\mathbf{e}\}^{\perp}$ with $\mathbf{L}\, \mathbf{e} = 0$. Let $\mathbf{J} = -  \mathbf{L} \, \nabla \frac{\boldsymbol{\mu}}{RT}$. Then, in order that $\mathbf{J}$ satisfies the novel closure equations {\rm (C)} with $\mathcal{D}$ elementwise positive, the following condition for $\mathbf{S}$ is necessary and sufficient: For $i = 1,\ldots,N$,
 \begin{align}\label{posdiag}
  \langle\, \mathbf{S} {\bf y}, \, {\bf e}^i \, \rangle + \frac{2}{N-2} \,  \langle \, \mathbf{S} {\bf e}, \, {\bf e}^i \, \rangle \,  y_i\leq \frac{\langle \mathbf{S} {\bf e}, \, {\bf e} \rangle }{(N-2) \, (N-1)} \,  y_i  .
  \end{align}
%  \item $\mathbf{J}$ satisfies the novel closure equations (5.15) where the diagonal matrix $\mathcal{D}$ is elementwise positive.
%  \end{enumerate}
\end{lemma}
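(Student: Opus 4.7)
The plan is to obtain the stated inequality by explicit computation, tracing through the canonical $\mathbf{A},\mathbf{S} \mapsto \mathcal{D},\mathbf{K}$ construction already carried out in the (A)$\Rightarrow$(C) part of the proof of Theorem~\ref{Th-1}. Since the normalization $\mathbf{K}=\mathbf{K}^{\sf T}$, $K_{ii}=0$, $\mathbf{K}\mathbf{e}=\mathbf{0}$ singles out a unique pair $(\mathcal{D},\mathbf{K})$ representing $\mathbf{J}$, asking that this canonical $\mathcal{D}$ be elementwise positive is a well-posed condition on the given $\mathbf{A},\mathbf{S}$, and the inequality in the statement should be literally this condition written out.

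First I would recall formulas \eqref{DfromAandS} and \eqref{b-correct}, which yield
\[
 d_i = \frac{a_i}{M_i} + 2\,\frac{c}{\rho}\, b_i\, x_i, \qquad
 b_i = -\frac{1}{N-2}\Big(\langle \mathbf{S}\mathbf{e},\mathbf{e}^i\rangle - \frac{1}{2(N-1)}\langle \mathbf{S}\mathbf{e},\mathbf{e}\rangle\Big).
\]
Next, the constraint $\mathbf{L}\mathbf{e}=\mathbf{0}$, combined with $\mathbf{L}=\mathbf{R}(\mathbf{A}+\mathbf{S}\mathbf{Y})$ and the invertibility of $\mathbf{R}$, forces $\mathbf{A}\mathbf{e}+\mathbf{S}\mathbf{y}=\mathbf{0}$, hence
\[
 a_i = -\langle \mathbf{S}\mathbf{y},\mathbf{e}^i\rangle.
\]
Using the identity $c x_i / \rho = y_i / M_i$, substitution gives
\[
 M_i\, d_i = -\langle \mathbf{S}\mathbf{y},\mathbf{e}^i\rangle - \frac{2 y_i}{N-2}\langle \mathbf{S}\mathbf{e},\mathbf{e}^i\rangle + \frac{y_i}{(N-1)(N-2)}\langle \mathbf{S}\mathbf{e},\mathbf{e}\rangle,
\]
and since $M_i>0$, the condition $d_i \geq 0$ is exactly \eqref{posdiag}. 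Sufficiency is then immediate: if the inequality holds for every $i$, the canonical construction of Theorem~\ref{Th-1} produces a representation in form~(C) with $\mathcal{D}$ elementwise nonnegative.

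The only subtle point—and what I would treat as the main obstacle—is necessity, since a priori the decomposition $\mathbf{D}=\mathcal{D}+\mathbf{X}^{1/2}\mathbf{K}\mathbf{X}^{1/2}$ is not determined by the flux alone (recall that the shift \eqref{Kmodify} leaves $\mathbf{J}$ invariant while simultaneously altering the diagonal part $\mathcal{D}$). Here I would rely on the discussion around \eqref{a-correct}, which shows that once the normalization $\mathbf{K}\mathbf{e}=\mathbf{0}$ (together with symmetry and $K_{ii}=0$) is imposed, the vector $\mathbf{a}$ is uniquely prescribed by $\tilde{\mathbf{K}}$, and hence $\mathcal{D}$ and $\mathbf{K}$ are uniquely determined from $\mathbf{A},\mathbf{S}$. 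Consequently, demanding that some (necessarily: the canonical) representation of type (C) have $\mathcal{D}>0$ elementwise is equivalent to demanding $d_i>0$ in the explicit formula above, i.e.~to the stated inequality on $\mathbf{S}$. A brief remark would stress that the inequality involves $\mathbf{S}$ only, all dependence on $\mathbf{A}$ having been eliminated by the constraint $\mathbf{L}\mathbf{e}=\mathbf{0}$.
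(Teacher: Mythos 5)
Your proposal is correct and reproduces the paper's own argument: recall the formulas \eqref{DfromAandS}, \eqref{b-correct} for $\mathcal{D}$ and $\mathbf{b}$, use $\mathbf{L}\mathbf{e}=\mathbf{0}$ to obtain $a_i=-\langle\mathbf{S}\mathbf{y},\mathbf{e}^i\rangle$, substitute to write $M_i d_i$ explicitly, and read off the sign condition. Your added remark on why the normalization $\mathbf{K}=\mathbf{K}^{\sf T}$, $K_{ii}=0$, $\mathbf{K}\mathbf{e}=\mathbf{0}$ makes the pair $(\mathcal{D},\mathbf{K})$ unique — so that positivity of the canonical $\mathcal{D}$ is genuinely necessary, not merely sufficient — is something the paper leaves implicit, and it is a worthwhile clarification rather than a deviation.
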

\begin{proof}
We have already computed in Theorem \ref{Th-1} (see \eqref{DfromAandS}, \eqref{KfromAandS}) that the matrix $\mathcal{D}$ is related to ${\bf A}$ and ${\bf S}$ via
\begin{align*}
 \mathcal{D} = & {\bf A} \, {\bf M}^{-1} + 2 \,\frac{c}{\rho} \, \text{diag}({\bf b}) \,{\bf X}
 \end{align*}
 with ${\bf b} = - \frac{1}{N-2} \, ({\bf I} -\frac{1}{2 \, (N-1)} \, {\bf e} \otimes {\bf e}) \, {\bf S} \, {\bf e}$.

 On the other hand, we can make use of ${\bf L} {\bf e} = {\bf 0}$ to compute that
 ${\bf A} \, {\bf e}  = - {\bf S} {\bf y}$ (cf.\ \eqref{abestimmt}). This allows to also compute
 \begin{align*}
  d_i = \frac{1}{M_i} \, \left( - \langle\, \mathbf{S} {\bf y}, \, {\bf e}^i \, \rangle - \frac{2}{N-2} \,  \langle \, \mathbf{S} {\bf e}, \, {\bf e}^i \, \rangle \,  y_i  + \frac{\langle \mathbf{S} {\bf e}, \, {\bf e} \rangle }{(N-2) \, (N-1)}  \,  y_i \right) \, .
 \end{align*}
Hence, $\mathcal{D}$ is positive iff the condition \eqref{posdiag} is valid.
\end{proof}
The core-diagonal new closure with elementwise positive matrix $\mathcal{D}$ induces a Maxwell-Stefan form with strictly positive coefficients. In this case, in fact, we have $f_{ik} = (\sum_{j} d_jy_j)/(d_i\, d_k)$. We refer to the section \ref{Darken} for details.\\[0.2ex]

\noindent

\subsection{Diffusion Matrices}

While the picture of thermodynamic equivalence seems clear and complete, the concept of a diffusion coefficient is essentially plural. This question is discussed for instance in \cite{Miller} and in the second paragraph of \cite{CB-review}.

In the literature, several different objects have been called \emph{diffusion matrix}. Following a classification proposed in \cite{Miller}, we distinguish between thermodynamic diffusion coefficients and Fickian diffusion coefficients. The thermodynamic diffusion coefficients describe proportionality relations between fluxes and driving forces, while the Fickian coefficients are proportionality factors between fluxes and gradients of concentrations or fractions.\\[0.2ex]

\noindent {\bf A. Fickian diffusivities.} The coefficient matrix ${\bf D}^{\rm Fick} = [D_{ik}]$ is associated with the representation \eqref{Fick-MC} of the diffusion fluxes. In the language of the present paper, we obtain these Fickian diffusivities starting from the Fick-Onsager representation ${\bf j}^{\text{mol}} = - {\bf M}^{-1} \, {\bf L} \, \nabla \frac{{\boldsymbol \mu}}{RT}$. Recalling that $\mu_i = \partial_{\rho_i}(\rho\psi)$, we introduce the Hessian ${\bf H} := {\boldsymbol D^2}_{{\boldsymbol\rho},{\boldsymbol \rho}}(\rho\psi)$ of the free energy. In the isothermal case, we obtain that
 \begin{align*}
  {\bf j}^{\text{mol}} = - \frac{1}{RT} \, {\bf M}^{-1} \, {\bf L} \, {\bf H} \, {\bf M} \, \nabla {\bf c} \, .
 \end{align*}
This means that ${\bf D}^{\rm Fick} = {\bf M}^{-1} \, {\bf L} \, {\bf H}\, {\bf M}/(RT)$.

The matrix of Fickian diffusivities is the one playing a role in the PDE analysis of diffusion and reaction--diffusion systems. If we follow \cite{Miller}, it is also the relevant matrix in measurements and experiments. From our general viewpoint, this matrix can be written as the product of ${\bf M}^{-1} \, {\bf L} \, {\bf M}^{-1}$, which is positive semi-definite, and of ${\bf M} \, {\bf H} \, {\bf M}/(RT)$, which is positive definite if, as required by the second law of thermodynamics, the free energy is a (strict) convex function of the partial mass densities. Hence, as shown in \cite{Miller} and \cite{DB-MS}, ${\bf D}^{\rm Fick}$ possesses only real positive eigenvalues. It always generates a normal elliptic operator in the PDEs (see \cite{JP-MS}, \cite{bothepruess}).

It is well known that the matrix ${\bf D}^{\rm Fick}$ is in general not symmetric.
%{\color{black} This cannot be essentially influenced by a change of the reference frame for the velocity}.
\begin{lemma} For an ideal mixture, the matrix ${\bf D}^{\rm Fick}$ is non-symmetric.
\end{lemma}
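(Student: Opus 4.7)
The plan is to derive a simple necessary algebraic condition for symmetry of ${\bf D}^{\rm Fick}$ and then to violate it explicitly at an ideal mixture state.

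Since both ${\bf L}$ and ${\bf H}$ are symmetric, transposing ${\bf D}^{\rm Fick} = (RT)^{-1}\, {\bf M}^{-1}\, {\bf L}\, {\bf H}\, {\bf M}$ yields $({\bf D}^{\rm Fick})^{\sf T} = (RT)^{-1}\, {\bf M}\, {\bf H}\, {\bf L}\, {\bf M}^{-1}$, so that symmetry of ${\bf D}^{\rm Fick}$ is equivalent to the identity ${\bf L}\, {\bf H}\, {\bf M}^2 = {\bf M}^2\, {\bf H}\, {\bf L}$. Applying both sides to the vector ${\bf e}$ and invoking the universal property ${\bf L}\, {\bf e} = {\bf 0}$ of any Fick--Onsager matrix of form (A), together with $\ker({\bf L}) = \mathrm{span}\{{\bf e}\}$ (which follows from positive definiteness on $\{{\bf e}\}^{\perp}$), I obtain the necessary condition
\begin{equation*}
 {\bf H}\, {\bf M}^2\, {\bf e} \, \in \, \mathrm{span}\{{\bf e}\}.
\end{equation*}

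The second step is to compute ${\bf H} = [\partial_{\rho_j}\mu_i]$ for an ideal mixture. Starting from \eqref{idealchempot} and using the implicit relation $\sum_{k} \rho_k/\hat\rho_k(T,p) = 1$ that arises from volume-additivity \eqref{volumeadditive} to derive $\partial_{\rho_j} p = 1/(\Sigma\, \hat{\rho}_j)$ with $\Sigma := \sum_{k} \rho_k\, \partial_p \hat\rho_k/\hat{\rho}_k^2 > 0$, a direct calculation gives
\begin{equation*}
 H_{ij} \; = \; \frac{1}{\Sigma\, \hat{\rho}_i\, \hat{\rho}_j} \; + \; \frac{c(T)}{M_i\, \rho_i}\, \delta_{ij} \; - \; \frac{c(T)}{c\, M_i\, M_j}.
\end{equation*}
Consequently $({\bf H}\, {\bf M}^2\, {\bf e})_i = \alpha(T,\rho)/\hat{\rho}_i + c(T)\, M_i/\rho_i - \beta(T,\rho)/M_i$ is an affine combination with state-dependent but $i$-independent coefficients, of three vectors that are generically linearly independent and not collinear with ${\bf e}$.

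To obtain an unambiguous violation, I would specialize to the ideal gas case $\hat{\rho}_i = p\, M_i/(RT)$, for which $\Sigma = 1/(RTc)$ and the first and third contributions cancel, leaving $({\bf H}\, {\bf M}^2\, {\bf e})_i = RT/c_i$. This vector lies in $\mathrm{span}\{{\bf e}\}$ if and only if all molar concentrations $c_i$ coincide; at every non-equi-molar composition the necessary condition above fails, so ${\bf D}^{\rm Fick}$ cannot be symmetric. The only delicate ingredient is the implicit-function bookkeeping for $\partial_{\rho_j} p$ through the volume-additivity constraint; the rest of the argument is an elementary algebraic observation, namely that the non-scalar conjugation by ${\bf M}$ destroys the symmetry of the product ${\bf L}\, {\bf H}$ of two symmetric but generically non-commuting matrices, and the kernel of ${\bf L}$ converts this obstruction into a concrete non-membership statement that is easy to test.
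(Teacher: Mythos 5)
Your opening step coincides with the paper's: using symmetry of ${\bf L}$ and ${\bf H}$, the constraint ${\bf L}\,{\bf e}={\bf 0}$ and $\ker{\bf L}=\mathrm{span}\{{\bf e}\}$, one reduces symmetry of ${\bf D}^{\rm Fick}$ to the necessary condition ${\bf H}\,{\bf M}^2\,{\bf e}\in\mathrm{span}\{{\bf e}\}$, i.e.\ that $\sum_j H_{ij}M_j^2$ be independent of $i$. (The paper phrases this as ${\bf m}$ being a left null-vector of ${\bf D}^{\rm Fick}$ which would then also have to be a right null-vector.) From here the two arguments diverge. The paper identifies $\sum_j H_{ij}M_j^2=\partial_{\rho_i}f$ with $f=\sum_j\mu_j M_j^2$, concludes by a gradient/mixed-partials argument that the common value would have to be a function of $(T,\rho)$ alone, and then contradicts this in a dilute limit. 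You compute the ideal Hessian explicitly instead; your formula $H_{ij}=\frac{1}{\Sigma\,\hat\rho_i\hat\rho_j}+\frac{c(T)}{M_i\rho_i}\,\delta_{ij}-\frac{c(T)}{c\,M_iM_j}$ is correct (implicit differentiation of $\sum_k\rho_k/\hat\rho_k(T,p)=1$ does give $\partial_{\rho_j}p=1/(\Sigma\,\hat\rho_j)$), and the ideal-gas cancellation leading to $({\bf H}\,{\bf M}^2\,{\bf e})_i=RT/c_i$ checks out.

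However, as written your argument only settles the ideal-\emph{gas} case, not the Lemma as stated, which concerns all ideal mixtures. The phrase ``three vectors that are generically linearly independent'' is not a proof, and the ideal-gas case itself shows how fragile that claim is: there the first and third contributions cancel exactly, so the alleged generic independence fails. The good news is that your own formula already contains the general argument. The middle term $c(T)\,M_i/\rho_i$ diverges as $\rho_i\to 0^+$ (other densities fixed away from zero), while the remaining contributions to $({\bf H}\,{\bf M}^2\,{\bf e})_i$ and the whole of $({\bf H}\,{\bf M}^2\,{\bf e})_j$ for $j\neq i$ stay bounded, since $\hat\rho_i$, $\Sigma$, $c$ and $\rho$ all remain bounded away from zero. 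Hence $\sum_j H_{ij}M_j^2$ cannot be $i$-independent near any dilute state --- exactly the paper's dilute-limit contradiction, now read off directly from your explicit ${\bf H}$. Replacing the ideal-gas specialization by this single observation closes the gap and yields a clean pointwise argument which, unlike the paper's gradient step, does not even require assuming symmetry on an open set of compositions.
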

\begin{proof}
Due to the properties of the Onsager matrix ${\bf L}$, the vector ${\bf m}$ of molar weights is a left eigenvector with trivial eigenvalue for ${\bf D}^{\rm Fick}$. Assuming ${\bf D}^{\rm Fick}$ symmetric, hence ${\bf D}^{\rm Fick} \, {\bf m} = ({\bf D}^{\rm Fick})^{\sf T} \, {\bf m} = 0$, the vector ${\bf H} \, {\bf M} \, {\bf m}$ must belong to the span of the vector ${\bf e}$.

It follows that $\sum_{j=1}^N H_{ij} \, M_j^2 = \alpha$ for some scalar function $\alpha$ of the state-variables and for all $i$. But note that the equations $\sum_{j=1}^N H_{ij} \, M_j^2 = \alpha$ also characterize $\alpha = \partial_{\rho_i} f$ for all $i$ with $f := \sum_{j=1}^N \mu_j \, M_j^2$. Hence, $\alpha \, {\bf e}$ is a gradient vector for the function ${\bf \rho} \mapsto f(T, \, {\bf \rho})$, implying that $\alpha = \alpha(T,\rho)$ is a function of the temperature and the total mass density only. Now, for ideal chemical potentials, it is particularly easy to show a contradiction.
We let $x_j \rightarrow 0+$ for some component, and we see that $\sum_{j=1}^N H_{ij} \, M_j^2$ must explode like $M_j/(M_ix_j)$ while $\alpha$, being independent on composition, would remain bounded.

%If we change the reference frame for the velocity, then there is a positive constant vector ${\bf u}$ in the kernel of $({\bf D}^{\rm Fick})^{\sf T}$, and if it would be in the kernel of ${\bf D}^{\rm Fick}$, we obtain a contradiction in exactly the same way.
%
% For the property that ${\bf D}^{\rm Fick}$ might possess non-positive diagonal elements, we refer to \cite{chenengstromagren}
\end{proof}
It has been widely discussed in the literature (see \cite{Miller}, \cite{chenengstromagren}, \cite{mutorufirooz}, \cite{Vrabec2019}) that ${\bf D}^{\rm Fick}$ can also not be expected to be diagonally positive. This is true for the most general case,
but this statement must be tempered. There are many interesting cases where diagonal positivity is to expect.

A so-called \emph{simple} mixture is characterized by the structure $\rho\psi = \sum_{i=1}^N f_i(T, \, c_i)$ of the free energy, which possesses a diagonal Hessian. In such cases, $D_{ii} = L_{ii} \, f_{i}^{\prime\prime}(T, \, c_i)/(RTM_i^2)$ is always strictly positive.

Another important point is that Fickian diffusion is often considered in isobaric, isochoric or related contexts, where some additional function of the densities - typically the concentration $c$, the pressure $p$, the volume or the specific volume - is assumed constant. In such cases, not the full Hessian ${\bf H}$ is relevant for the computation of ${\bf D}^{\rm Fick}$. Considering for instance an isothermal, ideal mixture which is moreover isobaric, the Hessian ${\bf H}$ reduces to $RT \, ({\bf M}^{-1} \, {\bf R}^{-1} - \frac{1}{c} \, {\bf M}^{-1} \, ({\bf e} \otimes {\bf e}) \, {\bf M}^{-1}])$ and for the matrix of Fickian diffusivities we obtain that
\begin{align*}
 {\bf D}^{\rm Fick} = & {\bf M}^{-1} \, ({\bf L} \,  {\bf R}^{-1} - \frac{1}{c} \, {\bf L} \, {\bf M}^{-1} \, {\bf e} \otimes {\bf e})\\
 = & {\bf M}^{-1} \, ({\bf B}^{\sharp} -  {\bf B}^{\sharp} \, {\bf X} \, {\bf e} \otimes {\bf e}) \, .
\end{align*}
For the last identity, we assumed the Maxwell-Stefan form of the Fick-Onsager matrix.
Hence, at least for isobaric, isothermal, ideal systems with Maxwell-Stefan diffusion, the matrix $ {\bf D}^{\rm Fick} $ consists of regular functions of the state variables. Further, the latter equations show obviously that the diagonal elements satisfy $D_{ii} \rightarrow d_i^+ > 0$ for $x_i \rightarrow 0+$.
Diagonal dominance can reasonably be expected, as shown by the example of identical Maxwell-Stefan coefficients $f_{ik} = \bar{d}^{-1}$, which yields ${\bf B}^{\sharp} = \bar{d} \, {\bf P}^{\sf T}$ and ${\bf L} = \bar{d} \, {\bf R} \, {\bf P}$. Hence
\begin{align*}
 D_{ik} = \bar{d} \, (\delta_{ik} - y_i - x_i \, (1 - \frac{c}{\rho} \, M_i ) ) =\bar{d} \,  (\delta_{ik} - x_i) \, ,
\end{align*}
which is a singular {\em M}-Matrix with positive diagonals. Whether similar properties can be expected for certain subclasses of MS-diffusivities or matrices ${\bf L}$ is an open question that we cannot exhaust in the context of this investigation.

From the point of view of the novel closure scheme, we wish to point out the following interesting property of Fickian diffusion in the case of core-diagonal closure.
\begin{lemma}
Consider an ideal isothermal and isobaric system subject to the core-diagonal closure relation ${\bf L} = {\bf P}^{\sf T} \, {\bf M} \, \mathcal{D} \, {\bf R} \, {\bf P}$. If $\mathcal{D} = {\rm diag}(d_1, \ldots,d_N)$ with $d_i \geq 0$, then the diagonal entries of ${\bf D}^{\rm Fick}$ satisfy
 \begin{align*}
  D_{ii} \geq d_i \, (1-y_i) \, (1-x_i) \text{ for } i = 1,\ldots,N \, .
 \end{align*}
\end{lemma}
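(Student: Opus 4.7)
The plan is to compute $D_{ii}^{\rm Fick}$ essentially explicitly from the formula ${\bf D}^{\rm Fick} = \frac{1}{RT}{\bf M}^{-1}\,{\bf L}\,{\bf H}\,{\bf M}$ and then read off the required inequality from a trivial monotonicity bound on $\sum_j \rho_j d_j$. Inserting the hypotheses, namely ${\bf L} = {\bf P}^{\sf T}{\bf M}\mathcal{D}{\bf R}{\bf P} = {\bf P}^{\sf T}{\bf R}\mathcal{D}{\bf M}{\bf P}$ (diagonal matrices commute) together with the isobaric, ideal Hessian ${\bf H} = RT\bigl({\bf M}^{-1}{\bf R}^{-1} - \tfrac{1}{c}{\bf M}^{-1}({\bf e}\otimes{\bf e}){\bf M}^{-1}\bigr)$ already recalled in the excerpt, I obtain
\[
{\bf D}^{\rm Fick} = {\bf M}^{-1}\,{\bf P}^{\sf T}\,{\bf R}\,\mathcal{D}\,{\bf M}\,{\bf P}\,\bigl({\bf M}^{-1}{\bf R}^{-1} - \tfrac{1}{c}{\bf M}^{-1}({\bf e}\otimes{\bf e}){\bf M}^{-1}\bigr)\,{\bf M}.
\]

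First I would simplify the factor ${\bf P}\bigl({\bf M}^{-1}{\bf R}^{-1} - \tfrac{1}{c}{\bf M}^{-1}{\bf e}\otimes{\bf e}\,{\bf M}^{-1}\bigr){\bf M}$. Using ${\bf P} = {\bf I} - {\bf e}\otimes{\bf y}$ together with the two identities ${\bf y}^{\sf T}{\bf M}^{-1}{\bf R}^{-1} = \tfrac{1}{\rho}{\bf M}^{-1}{\bf e}$ (row vector) and $\langle {\bf y},{\bf M}^{-1}{\bf e}\rangle = c/\rho$, the two correction terms produced by ${\bf P}$ cancel exactly, leaving the factor unchanged and then multiplication by ${\bf M}$ on the right gives ${\bf R}^{-1} - \tfrac{1}{c}{\bf M}^{-1}{\bf e}\otimes{\bf e}$.

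Next I would apply ${\bf R}\mathcal{D}{\bf M}$ to this and then ${\bf P}^{\sf T} = {\bf I} - {\bf y}\otimes{\bf e}$ from the left. Using that ${\bf R}\mathcal{D}{\bf M}$ is diagonal with entries $\rho_i d_i M_i$, this yields
\[
{\bf P}^{\sf T}{\bf R}\mathcal{D}{\bf M}\bigl({\bf R}^{-1}-\tfrac{1}{c}{\bf M}^{-1}{\bf e}\otimes{\bf e}\bigr) = \mathcal{D}{\bf M} - {\bf y}\otimes \mathcal{D}{\bf M}{\bf e} - \tfrac{1}{c}{\bf R}\mathcal{D}{\bf e}\otimes {\bf e} + \tfrac{1}{c}\langle {\bf e},{\bf R}\mathcal{D}{\bf e}\rangle\,{\bf y}\otimes{\bf e}.
\]
Reading off the $(i,i)$-entry and dividing by $M_i$, and using the identity $\rho_i/(cM_i) = x_i$, I arrive at the clean formula
\[
D_{ii}^{\rm Fick} = d_i\,(1-y_i-x_i) + \frac{y_i}{c\,M_i}\sum_{j=1}^N \rho_j d_j.
\]

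Finally I would compare with the target bound. A direct computation gives
\[
D_{ii}^{\rm Fick} - d_i(1-y_i)(1-x_i) = \frac{y_i}{c\,M_i}\sum_{j=1}^N \rho_j d_j - d_i x_i y_i = \frac{y_i}{c\,M_i}\Bigl(\sum_{j=1}^N \rho_j d_j - \rho_i d_i\Bigr),
\]
where I used once more $\rho_i = c\,M_i\,x_i$. Since $d_j \geq 0$ for every $j$, the bracket on the right is nonnegative, and $y_i \geq 0$ finishes the proof (with equality holding when $y_i = 0$, consistently with the fact that the lower bound then equals $d_i$ as well). There is no genuine obstacle: the only slightly delicate step is checking that the projection ${\bf P}$ acts trivially on the reduced inverse Hessian, which is an exercise in the two cancellation identities noted above, and after that the claim is an immediate consequence of $\sum_j \rho_j d_j \geq \rho_i d_i$.
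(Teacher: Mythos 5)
Your proposal is correct and follows essentially the same route as the paper's proof: both compute the diagonal entries $D_{ii}^{\rm Fick}$ explicitly from ${\bf D}^{\rm Fick} = \tfrac{1}{RT}{\bf M}^{-1}{\bf L}\,{\bf H}\,{\bf M}$ using the ideal isobaric Hessian, arriving at the same formula (your $d_i(1-y_i-x_i) + \tfrac{y_i}{cM_i}\sum_j \rho_j d_j$ equals the paper's $d_i(1-x_i) - d_i y_i + x_i\sum_k d_k y_k$), and then close by dropping the nonnegative cross terms in the sum. Your intermediate observation that ${\bf P}$ acts trivially on the factor ${\bf H}_0{\bf M}$ is a slightly more explicit way of organizing the same algebra, but the key estimate is identical.
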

\begin{proof}
By assumption, ${\bf H} = RT \,  \big({\bf M}^{-1} \, {\bf R}^{-1} - \frac{1}{c} \, {\bf M}^{-1} \, ({\bf e} \otimes {\bf e}) \, {\bf M}^{-1} \big)$. We calculate that
\begin{align*}
 {\bf D}^{\text{Fick}} = & \frac{1}{RT} \, {\bf M}^{-1} \, {\bf L} \, {\bf H} \, {\bf M} =  \mathcal{D} - \mathcal{D} \, {\bf x} \otimes {\bf e} - {\bf M}^{-1} \, { \bf y} \otimes (\mathcal{D} \, {\bf M} \, {\bf e} - \frac{\rho}{c} \, \sum_{k=1}^N d_k \, y_k \, {\bf e})  \, .
\end{align*}
Hence
\begin{align*}
 D_{ii} = & \, d_i \, (1-x_i) - d_i \, y_i + \frac{y_i}{M_i} \, \frac{\rho}{c} \, \sum_{k=1}^N d_k \, y_k\\
 = & \, d_i \, (1-x_i) - d_i \, y_i + x_i \, \sum_{k=1}^N d_k \, y_k \,
\end{align*}
and, estimating $\sum_{k=1}^N d_k \, y_k \geq d_i \, y_i$, the claim follows.
\end{proof}
Let us also refer to the interesting recent work \cite{Vrabec2019}, showing that diagonal positivity of the Fickian diffusion matrix could, in the ternary case, be a question also associated with the proper choice of the frame of reference for the diffusion velocity.

Let us remark that, still in the isothermal and isobaric (or similar) context, another usual representation of the diffusion flux is
\begin{align}\label{Fick-2}
j^{\rm mol}_i = - c \, \tilde{D}_{ik} \, \nabla x_k \, .
\end{align}
This representation is also equivalent, since an additional assumption of the type $p(c, \, x_1, \ldots,x_N) = p_0$ (isobaric case) is available, and allows to compute the complete thermodynamic driving forces in terms of the gradients $\nabla x_k$ only. The form of the diffusion matrix $\tilde{{\bf D}}^{\rm Fick}$ then depends on this equation too. Let us restrict to giving an example for the ideal, isobaric context. As $\nabla x_k = RT \, x_k \, \nabla (\mu_k/(RT))$ in this case, it is readily seen that the matrix $\tilde{{\bf D}}^{\rm Fick}$ is related to ${\bf L}$ via $c \, \tilde{{\bf D}}^{\rm Fick} = {\bf L} \, {\bf X}^{-1}$ and to ${\bf B}$ via $\tilde{{\bf D}}^{\rm Fick} = {\bf B}^{\sharp} \, {\bf M}$  which, again, guarantees diagonal positivity.

As a partial conclusion on the question of diagonal positivity for the Fickian diffusion matrix, let us point at two different aspects. On the one hand, from the viewpoint of the second law of thermodynamics, the Fickian diffusion matrix is the product of two positive (semi-)definite matrices, hence diagonal positivity is not to be expected--although all eigenvalues are non-negatice. On the other hand, representing a major difference between the concept of thermodynamic respectively of phenomenological diffusivity, all properties of ${\bf D}^{\rm Fick}$, resp.\ of $\tilde{{\bf D}}^{\rm Fick}$ depend strongly on the specific underlying free energy model and/or on additional constraint being possibly valid for the physical system. Hence, case by case, diagonal positivity might be observed for certain -- even possibly large -- classes of relevant systems, while it is certainly not a generic feature.\\[0.25ex]

\noindent {\bf B. Thermodynamic diffusivities.}

\begin{enumerate}[(1)]
%  \item \label{diffus1} The coefficient matrix ${\bf D}^{\rm Fick} = \{D_{ik}\}$ which is associated with the Fickian representation \eqref{Fick-MC} of the diffusion fluxes (For this interpretation see for instance the recent \cite{chenengstromagren}). In the language of the present paper, we obtain these Fickian diffusivities starting from the Fick-Onsager representation ${\bf j}^{\text{mol}} = - {\bf M}^{-1} \, {\bf L} \, \nabla \frac{{\boldsymbol \mu}}{RT}$. Recalling that $\mu_i = \partial_{\rho_i}(\rho\psi)$, we introduce the Hessian ${\bf H} := {\boldsymbol D^2}_{{\boldsymbol\rho},{\boldsymbol \rho}}(\rho\psi)$ of the free energy. In the isothermal case, we obtain that
%  \begin{align*}
%   {\bf j}^{\text{mol}} = - \frac{1}{RT} \, {\bf M}^{-1} \, {\bf L} \, {\bf H} \, {\bf M} \, \nabla {\bf c} \, .
%  \end{align*}
% This means that ${\bf D}^{\rm Fick} = \frac{1}{RT} \, {\bf M}^{-1} \, {\bf L} \, {\bf H}\, {\bf M}$;
 \item \label{diffus2} In eq.\ (2.6) of \cite{CB-review} and paragraph 2.5.2 of the book \cite{Giovan}, the symmetric diffusion matrix is defined to be ${\bf D}^{\rm CB} := c \, {\bf R}^{-1} \, {\bf L} \, {\bf R}^{-1}$ or, equivalently, ${\bf D}^{\rm CB} := c \, {\bf R}^{-1} \, {\bf B}^{\sharp}$ according to whether one starts from the Fick-Onsager or the Maxwell-Stefan form of the fluxes;
 \item \label{diffus3} In \cite{CB-review}, eq.\ (2.1), Bird and Curtiss define the non-symmetric, off-diagonal diffusion matrix $\hat{{\bf D}}^{\rm CB} := \frac{c}{\rho} \, [{\bf L} \, {\bf R}^{-1} + {\bf R}^{-1} \, {\bf L}_{\text{off}} {\bf e} \otimes {\bf e}]$;
 \item \label{diffus4} Our novel scheme introduces symmetric diffusivities ${\mathcal D}$, ${\bf K}$ (see \eqref{closure-B2} and (C)) and in particular a diffusion matrix $\mathcal{D} + {\bf X}^{\frac{1}{2}} \, {\bf K} \,  {\bf X}^{\frac{1}{2}}$ (see \eqref{novel-D}).
\end{enumerate}
In this section we restrict ourselves to a few remarks of general character concerning this plurality.

Ad \eqref{diffus2}. The matrix ${\bf D}^{\rm CB}$ is symmetric but, as the proportionality factor between \emph{diffusion velocities} and driving forces, it does not consist of regular functions of the state variables. As noted in \cite{Giovan}, Lemma 7.3.1, the definition ${\bf D}^{\rm CB} := c \, {\bf R}^{-1} \, {\bf B}^{\sharp}$ implies the blow-up of these coefficients for vanishing densities if Maxwell-Stefan closure with regular binary coefficients is assumed. {\color{black} Hence, this matrix might exhibit real drawbacks in practice.}

Ad \eqref{diffus3}. The matrix $\hat{{\bf D}}^{\rm CB}$ can be shown, using eq.\ (2.2) of \cite{CB-review}, to satisfy ${\bf L} \, {\bf R}^{-1} = \frac{\rho}{c} \, \hat{{\bf D}}^{\rm CB} \, {\bf P}^{\sf T}$. Hence also
\begin{align*}
 {\bf L} = \frac{\rho}{c} \, {\bf P}^{\sf T} \, \hat{{\bf D}}^{\rm CB} \, {\bf R} \, {\bf P} \, .
\end{align*}
Thus $\hat{{\bf D}}^{\rm CB}$ can be re-interpreted as introducing the diffusivities in a somewhat similar way as our novel closure scheme, choosing however another constraint--namely, the constraint of vanishing diagonal--in order to eliminate $N$ parameters. In this respect, the Ansatz of the novel closure scheme apparently exhibit some advantages, as it remains symmetric and preserves what we have called 'core-diagonal' diffusion.

\section{Concluding Remarks}
We conclude with some additional comments, mainly related to the new closure scheme.
The introduction of an undetermined velocity to avoid the constraint \eqref{flux-constraint}
can also be understood as the use of a Lagrange parameter to incorporate the dual constraint, i.e.\ \eqref{d-sum}.
Indeed, it is known that instead of evaluating $-\sum_{i=1}^N {\bf u}_i \cdot {\bf d}_i$ under the constraint
$\sum_{i=1}^N  {\bf d}_i =0$, one can equivalently evaluate $-\sum_{i=1}^N ({\bf u}_i + {\bf w})\cdot {\bf d}_i$
with a Lagrange parameter ${\bf w}$; cf.\ \cite{Liu} for more details.
Now, closing for ${\bf u}_i + {\bf w}$ instead of ${\bf d}_i$, this leads to
\[
{\bf U}=- {\bf P}^{\sf T} \, {\bf L} \, \vec{\bf d}
=- {\bf P}^{\sf T} \, {\bf L} \, {\bf R} {\bf P} \nabla \frac{\boldsymbol{\mu}}{RT},
\]
which resembles the first step in section~\ref{sec-novel-scheme}.

The advantage of a formulation being independent of a specific reference frame has been observed long before;
see \cite{Snell}.
In the present setting, the diffusion fluxes w.r.\ to a reference velocity of the type
\[
\hat{\bf v} = \sum_{i=1}^N \omega_i {\bf v}_i
\quad \mbox{ with } \omega_i\geq 0,\; \sum_{i=1}^N \omega_i =1
\]
is straightforward to express. Indeed,
\[
\hat{\bf u}_i = {\bf v}_i -\hat{\bf v} = {\bf u}_i^\ast- \sum_{k=1}^N \omega_k {\bf u}_k^\ast,
\]
hence
\[
\hat{\bf U} = {\bf P}_{\boldsymbol{\omega}} {\bf U}^\ast
\quad \mbox{ with }
{\bf P}_{\boldsymbol{\omega}} = {\bf I}-{\bf e}\otimes \boldsymbol{\omega}
\]
are the diffusion velocities in the reference frame corresponding to $\hat{\bf v}$.

In numerical simulations, the inversion of the Maxwell-Stefan system in every time step and in every mesh cell, resp.\ for every cell face is computationally expensive and, hence,
iterative schemes for approximate solutions have been developed; see \cite{Giovan} and the references given there.
Since the new closure scheme avoids such an inversion, it can provide an interesting alternative.
For the same reason, in many simulations of (reactive) multicomponent flows, the simple Fickian closure, i.e.\ ${\bf j}_i^\ast = - d_i \nabla c_i$, is employed.
In order to enforce the constraint \eqref{flux-constraint}, a correction velocity ${\bf w}$ is introduced such that $\sum_{i=1}^N ({\bf j}_i^\ast+\rho_i {\bf w})=0$; cf.\ \cite{Datta}. This leads to
\[
{\bf j}_i = {\bf j}_i^\ast - y_i \sum_{k=1}^N  {\bf j}_k^\ast,
\]
i.e.\ ${\bf J}={\bf P}^{\sf T} {\bf J}^\ast$ with ${\bf P}^{\sf T}={\bf I} - {\bf y}\otimes {\bf e}$.
While there hence is a relationship to the new closure scheme,
thermodynamic consistency can only be achieved if this correction by projection is already incorporated into the closure scheme as it has been introduced above. In other words, while this correction restores consistency with the continuity equation,
it is not consistent with the second law of thermodynamics.

The new approach also sheds additional light on the Maxwell-Stefan closure. Note that
\begin{equation}
\langle {\bf U},  {\bf R} \, \nabla \frac{\boldsymbol{\mu}}{RT} \rangle
%=\langle P^2 \,  \vec{U}^\ast ,  \mathcal{R} \, \nabla \frac{\vec \mu}{RT}\rangle
= \langle  {\bf U}^\ast , {\bf P}^{\sf T} \, {\bf R} \, \nabla \frac{\boldsymbol{\mu}}{RT} \rangle
= \langle  {\bf U} , {\bf P}^{\sf T} \, {\bf R} \, \nabla \frac{\boldsymbol{\mu}}{RT} \rangle
= \langle  {\bf U} , \vec{\bf d} \rangle
\end{equation}
with ${\bf d}_i$ from \eqref{di},
since ${\bf P}^2={\bf P}$ and ${\bf U}={\bf P} \,{\bf U}^\ast$ with the unconstrained diffusion velocities ${\bf U}^\ast$.
While the second term allows for a direct closure of the (unconstrained) diffusion velocities,
the third term cannot be used for a direct (unconstrained) closure for ${\bf U}$.
Instead, the Maxwell-Stefan approach exploits the constraint on the ${\bf d}_i$ to get the implicit
relations for the ${\bf u}_i$ as shown above.

Out of the three closure schemes discussed above, the novel scheme is the only one
for which a diagonal coefficient matrix leads to sensible, actually realistic diffusion fluxes.
Of course all three closure schemes are equivalent in the sense that with fully occupied coefficient matrix with
entries depending on the primitive thermodynamic variables, the same classes of diffusion fluxes are admissible.
But the different approaches of course lead to different functional dependencies.
In this context, the novel closure scheme yields a better understanding of the multicomponent Darken equation,
showing that the main cross-effect is introduced by the projection ${\bf P}$ which is needed to account for the
constraint \eqref{flux-constraint}. Consequently, there is no ''true'' cross-diffusion in this case, but rather
a cross-coupling because of the continuity equation plus a weak cross-effect since the diagonal diffusivities
dependent on the composition.
Nevertheless, this link between the novel closure and the multicomponent Darken equation, together with
the fact that the latter describes simulated diffusivities rather accurately in several case,
indicates that the novel closure might constitute a more appropriate way to represent the intrinsic structure of the
continuum thermodynamical mass diffusion fluxes.
It might, therefore, also be employed for a more efficient and accurate fitting of diffusion coefficients
obtained from experiments or molecular dynamics simulations.
In particular, it would be very interesting to see how far the core-diagonal case from section~\ref{Darken}
with general diagonal entries $d_i=d_i(T,c,{\bf x})$ is already able to model cross-diffusivities obtained from
MD simulations for non-ideal, complex mixtures.

As a final remark, let us note that--surprisingly--it turns out that ternary systems are always core-diagonal as it is shown in the appendix. In terms of the Maxwell-Stefan diffusivities $\D_{ik}$,
the reciprocals $1/d_i({\bf y})$ of the diagonal elements $d_i({\bf y})$ of
${\mathcal D}={\rm diag}(d_1({\bf y}),d_2({\bf y}),d_3({\bf y}))$ are
\begin{equation}
\frac{y_1 \D_{23}}{\D_{12} \D_{13}} \!+\! \frac{y_2}{\D_{12}} \!+\! \frac{y_3}{\D_{13}},\quad
\frac{y_1}{\D_{12}} \!+\!  \frac{y_2 \D_{13}}{\D_{12} \D_{23}} \!+\! \frac{y_3}{\D_{23}},\quad
\frac{y_1}{\D_{13}} \!+\! \frac{y_2}{\D_{23}} \!+\!  \frac{y_3 \D_{12}}{\D_{13} \D_{23}}.
\end{equation}
So, for example, if $y_1\to 0+$, then
\[
\frac{1}{d_1({\bf y})}\to \frac{y_2}{\D_{12}} \!+\! \frac{y_3}{\D_{13}},
\]
apparently a reasonable expression for the diffusivity of the diluted component $A_1$ against the mixture of $A_2$ and $A_3$.
Note that the $\D_{ij}$ are themselves functions of the composition, so that the full dependence on ${\bf y}$ can be more complex,
but also more simple.
It would be desirable to understand why cross-diffusion in ternary systems is solely due to the constraint \eqref{flux-constraint}. More generally, it would be very interesting to understand the meaning of additional -- {\it true\,?!} -- cross-diffusion effects in
multicomponent mixtures with $N>3$ constituents.

If large classes of multicomponent diffusion systems turn out to be core-diagonal--a question to be studied especially
by means of molecular dynamics simulations--it would be very natural to also try benefiting from this structure for a rigorous mathematical wellposedness analysis\\[2ex]

{\bf Acknowledgment.}
The authors cordially thank Wolfgang Dreyer (WIAS) for fruitful and intense scientific exchange over several years.
They are also grateful to Jadran Vrabec (TU Berlin) for helpful discussions on the sign of Fickian diffusivities.

$\mbox{ }$\\[-3ex]
\newpage
\appendix
\section{The Molar-Based Maxwell-Stefan Equations}\label{MS-ChemEng}
In the Chemical Engineering literature, the molar-based variant of the Maxwell-Stefan equations is the common choice.
This variant follows if one uses molar fractions $x_i, x_k$ instead of $y_i, y_k$ in \eqref{fric-coeff}, i.e.\
\begin{equation}\label{fric-coeff-mol}
\tau_{ik} = - c f_{ik}^{\rm mol} x_i x_k \quad \mbox{ for } i,k=1,\ldots ,N
\end{equation}
with molar-based friction coefficients $f_{ik}^{\rm mol}$.
Since the $f_{ik}$ and the $f_{ik}^{\rm mol}$ are functions of the state variables $(T,\rho_1, \ldots , \rho_N)$ anyhow,
this equivalent form just means to let
\[
f_{ik}^{\rm mol} = f_{ik} M_i M_k  \frac c \rho  = f_{ik} M_i M_k  \frac{\sum_l \rho_l/M_l}{\sum_l \rho_l}.
\]
Now, noticing that the $f_{ik}^{\rm mol}$ have the physical dimension of reciprocal diffusivities, one introduces the so-called Maxwell-Stefan diffusivities as
\begin{equation}\label{MS-diffusivities}
\D_{ik}=\frac{1}{f_{ik}^{\rm mol}}.
\end{equation}
Then the so-called generalized Maxwell-Stefan equations result which read as
\begin{equation}\label{Closure-MS-iii}
- \sum_{k=1}^{N} \frac{x_k {\bf j}_i^{\rm mol} - x_i {\bf j}_k^{\rm mol}}{\D_{ik}}   \, =\,
c_i  \nabla \frac{\mu_i^{\rm mol}}{RT}  - y_i \sum_{k=1}^N c_k \nabla \frac{\mu_k^{\rm mol}}{RT}  \quad \mbox{ for } \; i=1,\ldots ,N,
\end{equation}
where ${\bf j}_i^{\rm mol}:={\bf j}_i/M_i=c_i {\bf u}_i$ denote the molar mass fluxes and
${\mu_i^{\rm mol}}=M_i \mu_i$ are the molar-based chemical potentials.
In condensed tensor notation, employing also the notation introduced in the main text, this reads as
\begin{equation}
- {\bf B}^{\rm mol} {\bf J}^{\rm mol} = {\bf P}^{\sf T} \,{\bf C}\,  \nabla \frac{\boldsymbol{\mu}^{\rm mol}}{RT}
\end{equation}
with
\begin{equation}\label{Bij-molar-app}
B_{ij}^{\rm mol}=- \frac{x_i}{\D_{ij}} \;\mbox{ for } i\neq j, \quad B_{ii}^{\rm mol}= \sum_{k\neq i} \frac{x_k}{\D_{ik}}.
\end{equation}
The system \eqref{Closure-MS-iii} is to be complemented by \eqref{flux-constraint}, i.e.\
by the constraint
\begin{equation}\label{molflux-constraint}
\sum_{i=1}^N M_i \, {\bf j}_i^{\rm mol} =0.
\end{equation}
Employing \eqref{GD-grad} in the form
\begin{equation}\label{GD-grad-mol}
\sum_{k=1}^N  c_k \nabla \frac{\mu_k^{\rm mol}}{T} = \frac 1 {R T} \nabla p + \rho h \nabla \frac 1 {RT},
\end{equation}
we obtain the equivalent version
\begin{equation}\label{Closure-MS-iv}
- \sum_{k=1}^{N} \frac{x_k {\bf j}_i^{\rm mol} - x_i {\bf j}_k^{\rm mol}}{\D_{ik}}   \, =\,
c_i \nabla \frac{\mu_i^{\rm mol}}{RT}  - \frac{y_i}{RT} \nabla p - \rho_i h \nabla \frac 1 {RT}
\quad \mbox{ for } \; i=1,\ldots ,N.
\end{equation}
Being mainly interested in the diffusion velocities, this yields
\begin{equation}\label{Closure-MS-v}
- \sum_{k=1}^{N} \frac{x_k ({\bf u}_i -  {\bf u}_k)}{\D_{ik}}   \, =\,
\nabla \frac{\mu_i^{\rm mol}}{RT}   - \frac{M_i}{\rho RT} \nabla p - M_i h \nabla \frac 1 {RT}
\quad \mbox{ for } \; i=1,\ldots ,N.\vspace{0.1in}
\end{equation}
The formulations \eqref{Closure-MS-iv} and \eqref{Closure-MS-v} of the Maxwell-Stefan equations,
but with the partial enthalpy $h_i$ instead of $h$ for the
reason explained above, are those which are common in the chemical engineering literature; see, e.g., \cite{Bird}.

\section{The Ternary Case}\label{ternarysection}
Elimination of ${\bf j}_N$ by means of (\ref{flux-constraint}) leads to the reduced system
\begin{equation}
\label{reduced-system}
-\,\tilde{{\bf B}} \, [\, {\bf j}_1 | \cdots | {\bf j}_{N-1} ]^{\sf T}
=
[\, {\bf d}_1 | \cdots | {\bf d}_{N-1}]^{\sf T}
\end{equation}
with ${\bf d}_i$ from \eqref{di}, where the $(N-1)\times (N-1)$-matrix $\tilde{{\bf B}}$ is given by
\begin{equation}
\label{B-def}
\tilde{B}_{ij}=
\left \{
\begin{array}{ll}
\ds
y_i (f_{iN} - f_{ij} ) & \mbox{for } i\neq j,\\[1ex]
\ds
y_i f_{iN} + \sum_{k\neq i} y_k f_{ik} & \mbox{for } i= j \; \mbox{ (with $\ds y_N=1-\sum_{m<N} y_m$)}.
\end{array}
\right .
\end{equation}
In the ternary case ($N=3$) this corresponds to
\begin{equation}
\label{B-ternary}
\tilde{{\bf B}}=
\left[
\begin{array}{cc}
(1-y_2) f_{13} + y_2 f_{12}
  & y_1 (f_{13} - f_{12} ) \\[1ex]
 y_2 ( f_{23}  - f_{12} )
 & (1-y_1) f_{23} + y_1 f_{12}
\end{array}
\right].
\end{equation}
It is easy to check that
\begin{equation}
\label{detB-ternary}
{\rm det}\, \tilde{{\bf B}}=
y_1 f_{12} f_{13}  + y_2 f_{12} f_{23} + y_3 f_{13} f_{23} = {\rm trace}\, ({\rm adj}({\bf B})) \, ,
\end{equation}
in which ${\bf B}$ is the original $3\times 3$ Maxwell-Stefan matrix of \eqref{MS-matrix} or condition (B). We thus see that
${\rm det}\, \tilde{{\bf B}} >0$: In the case of strictly positive $f_{ik}$, we clearly obtain that ${\rm det}\, \tilde{{\bf B}} \geq \min \{f_{12} f_{13}, f_{12} f_{23}, f_{13} f_{23}\} > 0$ while, if we start from the assumption that ${\bf B} \, {\bf Y} \geq d_0 \, {\bf P}^{\sf T} \, {\bf M}^{-1} \, {\bf Y} \, {\bf P}$ is strictly positive definite on $\{{\bf e}\}^{\perp}$, the techniques of Theorem \ref{Th-2} imply that ${\rm trace}\, ({\rm adj}({\bf B})) \geq (d_0/\|{\bf M}\|_{\infty})^{N-1}$.
Hence
\begin{equation}
\label{B-ternary-inverse}
\tilde{{\bf B}}^{-1}=
\frac{1}{{\rm det}\, \tilde{{\bf B}}}
\left[
\begin{array}{cc}
(1-y_1) f_{23} + y_1 f_{12}  & - y_1 (f_{13} - f_{12} ) \\[1ex]
 - y_2 ( f_{23}  - f_{12} )  & (1-y_2) f_{13} + y_2 f_{12}
\end{array}
\right].
\end{equation}
After some straightforward manipulations, this yields
\begin{equation}
\label{flux-ternary}
[\, {\bf j}_1 | {\bf j}_2 | {\bf j}_3 ]^{\sf T}=
\frac{1}{{\rm det}\, \tilde{{\bf B}}}
\left[
\begin{array}{ccc}
(1-y_1) f_{23}  & - y_1 f_{13} & - y_1 f_{12}  \\[1ex]
 - y_2 f_{23}   & (1-y_2) f_{13} & - y_2 f_{12} \\[1ex]
 - y_3 f_{23}   &  - y_3 f_{13} & (1-y_3) f_{12}
\end{array}
\right]
\, [\, {\bf d}_1 | {\bf d}_2 | {\bf d}_3 ]^{\sf T}.
\end{equation}
Interestingly, the diffusion fluxes are hence of the form \eqref{closure-B3}
without off-diagonal terms. Indeed,
\[
 {\bf J}  \, =\, -{\bf P}^{\sf T} \,  [\tilde{\mathcal D} + {\bf Y}\, \tilde{\bf K} ]\, {\bf P}^{\sf T} \, {\bf R}\, \nabla \frac{\boldsymbol{\mu}}{RT}
\]
with
\begin{equation}
\tilde{\mathcal D} \, = \,
\frac{1}{{\rm det}\, \tilde{{\bf B}}}
\left[
\begin{array}{ccc}
 f_{23}  & 0 & 0  \\[1ex]
0   &  f_{13} & 0 \\[1ex]
0   &  0 & f_{12}
\end{array}
\right]
\quad \mbox{ and } \quad
\tilde{\bf K} = {\bf 0}\, .
\end{equation}
At a first glance, the diagonal entries look somewhat strange, but notice the pre-factor. Rewritten in terms of Maxwell-Stefan diffusivities, the reciprocals $d_i({\bf y})^{-1}$ of the diagonal elements of $\tilde{\mathcal D}={\rm diag}(d_1({\bf y}),d_2({\bf y}),d_3({\bf y}))$ are
\begin{equation}
\frac{y_1 \D_{23}}{\D_{12} \D_{13}} \!+\! \frac{y_2}{\D_{12}} \!+\! \frac{y_3}{\D_{13}},\quad
\frac{y_1}{\D_{12}} \!+\!  \frac{y_2 \D_{13}}{\D_{12} \D_{23}} \!+\! \frac{y_3}{\D_{23}},\quad
\frac{y_1}{\D_{13}} \!+\! \frac{y_2}{\D_{23}} \!+\!  \frac{y_3 \D_{12}}{\D_{13} \D_{23}}.
\end{equation}
So, for example, if $y_1\to 0+$, then
\[
\frac{1}{d_1({\bf y})}\to \frac{y_2}{\D_{12}} \!+\! \frac{y_3}{\D_{13}},
\]
apparently a reasonable expression for the diffusivity of the diluted component $A_1$ against the mixture of $A_2$ and $A_3$.
Note that the $\D_{ij}$ are themselves functions of the composition, so that the full dependence on ${\bf y}$ could be different.

This result has a remarkable implication: the cross-diffusion in a ternary system are solely due to the constraint \eqref{flux-constraint}.
\section{The Group Inverse}\label{Drazin}
The application of generalized inverses in the context of Maxwell-Stefan closure equations was initiated in the book \cite{Giovan}. Here we apply the concept of \emph{group inverse} or \emph{Drazin inverse} of a matrix. For the definition and more background information, we refer to the book \cite{BenIsrael}, Chapter 4, or to \cite{Giovan}, Section 7.3.4. We will also use some properties exposed in the paper \cite{MS}. Here we recall only a few preliminaries directly needed in our proofs for the theorems \ref{Th-1} and \ref{Th-2}.

Let $\mathbf{A} \in \mathbb{R}^{N\times N}$. The index $\text{Ind}\, ({\bf A})$ of the matrix ${\bf A}$ is the smallest positive integer $k$ such that $\text{Dim}\, {\rm im}({\bf A}^{k+1}) = \text{Dim}\, {\rm im}({\bf A}^{k})$. The system of equations
\begin{align}\label{GROUPDEF}
{\bf A} \, {\bf X} \, {\bf A} = {\bf A}, \quad \, {\bf X} \, {\bf A} \, {\bf X} = {\bf X}, \quad {\bf A} \, {\bf X} = {\bf X} \, {\bf A} \,  ,
\end{align}
possesses a unique solution ${\bf X}$ if and only if $\text{Ind}\, ({\bf A}) = 1$ (see \cite{BenIsrael}, Ch. 4, Th. 2). The solution ${\bf X}$ is called the group inverse of ${\bf A}$, denoted by ${\bf A}^{\sharp}$. The \emph{Drazin inverse}, denoted by ${\bf A}^D$, is a generalization of the group inverse which is not needed in the present context. We afore mention some straightforward properties, to compare with Proposition 7.3.6 of \cite{Giovan}.
\begin{remark}\label{GROUP}
 Suppose that $\mathbf{A} \in \mathbb{R}^{N\times N}$ has index one.
\begin{enumerate}[(1)]
 \item \label{symm} If ${\bf A}$ is symmetric, then ${\bf A}^{\sharp}$ is symmetric;
 \item \label{pos} If ${\bf A}$ is moreover positive semi-definite, so is ${\bf A}^{\sharp}$;
 \item \label{kern} ${\rm ker}({\bf A}) = {\rm ker} ({\bf A^{\sharp}})$ and ${ \rm im} ({\bf A}) = {\rm im}({\bf A}^{\sharp})$.
\end{enumerate}
\end{remark}
Suppose that, (i), $\mathbf{A} \in \mathbb{R}^{N\times N}$ is a matrix of rank $N-1$, and that, (ii), there are two eigenvectors $\mathbf{b}, \, \mathbf{c} \in \mathbb{R}^N$ with strictly positive components such that $\mathbf{A}^T \mathbf{c}= 0 = \mathbf{A} \, \mathbf{b}$. Then, zero is a simple eigenvalue of $\mathbf{A}$. Since there are positive left and right eigenvectors, we might follow the argument of Lemma 1 in \cite{MS} showing that the index of ${\bf A}$ is equal to $1$, and that the group inverse $\mathbf{A}^{\sharp}$ also satisfies (i), (ii), together with the identities
\begin{align*}
 \mathbf{A}^{\sharp} \, \mathbf{A} = \mathbf{A} \, \mathbf{A}^{\sharp} = \mathbf{I} - \mathbf{b} \otimes \mathbf{c} \, ,
\end{align*}
where we assume that $\mathbf{b}, \, \mathbf{c}$ are normalized such that $\langle \mathbf{b}, \,  \mathbf{c} \rangle = 1$. The statements in the paper \cite{MS} concern the so-called \emph{Drazin inverse} of the matrix ${\bf A}$. For a matrix of index $1$, we have ${\bf A}^D = {\bf A}^{\sharp}$.

For all $ t \neq 0$, the matrix $\mathbf{A} + t \,\mathbf{ b} \otimes \mathbf{c}$ is invertible, and $(\mathbf{A} + t \, \mathbf{b} \otimes \mathbf{c})^{-1} = \mathbf{A}^{\sharp} + \mathbf{b} \otimes \mathbf{c}/t$ (see \cite{MS}, page 150). Thus, for $0 < t$ sufficiently small, the matrix $\mathbf{A} + t \, \mathbf{b} \otimes \mathbf{c}$ is always inverse positive (see \cite{MS}, Theorem 2).
%Moreover, if $\mathbf{A}$ is positive definite on $\{\mathbf{b}\}^{\perp}$ or $\{\mathbf{c}\}^{\perp}$, so does $\mathbf{A}^D$.
We denote by ${\rm adj} ({\bf A})$ the adjugate of ${\bf A}$.
\begin{lemma}\label{lemdrazin}
Let ${\bf A}$ satisfy (i), (ii). Then for $t > 0$, $ {\rm det}(\mathbf{A} + t \, \mathbf{b} \otimes \mathbf{c}) = D_0  \, t$ with $D_0 = {\rm trace} ( {\rm adj} ({\bf A})) \neq 0$. Moreover
\begin{align}\label{reprdrazin}
 \mathbf{A}^{\sharp} = \frac{1}{t \,  D_0} \, ({\rm adj}(\mathbf{A} + t \, \mathbf{b} \otimes \mathbf{c}) - D_0 \, \mathbf{b} \otimes \mathbf{c}) \, .
\end{align}
\end{lemma}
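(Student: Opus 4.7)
The plan is to treat the two claims in sequence, using only the rank--one perturbation identity for determinants, the basic properties of the adjugate, and the representation of $({\bf A}+t\,{\bf b}\otimes{\bf c})^{-1}$ already recalled just before the statement.

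For the determinant formula, I would invoke the standard rank--one update identity valid for any square matrix,
\begin{equation*}
\det({\bf A}+t\,{\bf b}\otimes{\bf c})=\det({\bf A})+t\,\langle\,\mathrm{adj}({\bf A})\,{\bf b},\,{\bf c}\,\rangle .
\end{equation*}
Since ${\bf A}$ has rank $N-1$ we have $\det({\bf A})=0$, which already reduces the formula to $t\,\langle\,\mathrm{adj}({\bf A})\,{\bf b},\,{\bf c}\,\rangle$. The next step is to identify the structure of $\mathrm{adj}({\bf A})$: from ${\bf A}\,\mathrm{adj}({\bf A})=\mathrm{adj}({\bf A})\,{\bf A}=\det({\bf A})\,{\bf I}={\bf 0}$ combined with the assumption that $\ker({\bf A})=\mathrm{span}\{{\bf b}\}$ and $\ker({\bf A}^{\sf T})=\mathrm{span}\{{\bf c}\}$, the columns of $\mathrm{adj}({\bf A})$ lie in $\mathrm{span}\{{\bf b}\}$ and its rows in $\mathrm{span}\{{\bf c}\}$. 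Hence there is $\alpha\in\mathbb{R}$ with $\mathrm{adj}({\bf A})=\alpha\,{\bf b}\otimes{\bf c}$. Using the normalization $\langle{\bf b},{\bf c}\rangle=1$ I then compute on the one hand $\mathrm{trace}(\mathrm{adj}({\bf A}))=\alpha\,\langle{\bf b},{\bf c}\rangle=\alpha$, and on the other hand $\langle\,\mathrm{adj}({\bf A})\,{\bf b},{\bf c}\,\rangle=\alpha\,\langle{\bf b},{\bf c}\rangle^{2}=\alpha$. Both numbers agree and equal $D_{0}$, giving $\det({\bf A}+t\,{\bf b}\otimes{\bf c})=D_{0}\,t$. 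Nonvanishing of $D_{0}$ is automatic because this matrix was already recalled to be invertible for every $t\ne 0$.

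For the representation \eqref{reprdrazin} I would combine the adjugate formula for the inverse with the already recalled identity $({\bf A}+t\,{\bf b}\otimes{\bf c})^{-1}={\bf A}^{\sharp}+t^{-1}\,{\bf b}\otimes{\bf c}$. Namely,
\begin{equation*}
{\bf A}^{\sharp}+\frac{1}{t}\,{\bf b}\otimes{\bf c}
=({\bf A}+t\,{\bf b}\otimes{\bf c})^{-1}
=\frac{\mathrm{adj}({\bf A}+t\,{\bf b}\otimes{\bf c})}{\det({\bf A}+t\,{\bf b}\otimes{\bf c})}
=\frac{1}{t\,D_{0}}\,\mathrm{adj}({\bf A}+t\,{\bf b}\otimes{\bf c}),
\end{equation*}
and solving for ${\bf A}^{\sharp}$ yields precisely \eqref{reprdrazin}.

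There is no real obstacle here; the entire argument rests on two well-known algebraic facts (the matrix determinant lemma and the rank-one structure of the adjugate of a rank $N-1$ matrix). The only point that deserves a little care is the observation that $\mathrm{adj}({\bf A})$ is proportional to ${\bf b}\otimes{\bf c}$ with the \emph{same} proportionality constant $\alpha$ that appears both in $\mathrm{trace}(\mathrm{adj}({\bf A}))$ and in $\langle\,\mathrm{adj}({\bf A})\,{\bf b},{\bf c}\,\rangle$, which is what makes the identification $D_{0}=\mathrm{trace}(\mathrm{adj}({\bf A}))$ possible and reconciles the two expressions for $\det({\bf A}+t\,{\bf b}\otimes{\bf c})$.
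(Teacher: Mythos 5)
Your proof is correct, but it follows a genuinely different route from the paper's. The paper establishes $\det(\mathbf{A}+t\,\mathbf{b}\otimes\mathbf{c})=D_0\,t$ by applying Jacobi's differential formula to $g(t)=\det(\mathbf{A}+t\,\mathbf{b}\otimes\mathbf{c})$, deriving the ODE $g'=g/t$ from $g(0)=0$, and then identifying $D_0$ as $\mathrm{trace}(\mathrm{adj}(\mathbf{A}))$ only at the end, by passing to the limit $t\to 0$ in the identity $\mathrm{adj}(\mathbf{A}+t\,\mathbf{b}\otimes\mathbf{c})=D_0\,(t\,\mathbf{A}^{\sharp}+\mathbf{b}\otimes\mathbf{c})$. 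You instead invoke the matrix determinant lemma $\det(\mathbf{A}+t\,\mathbf{b}\otimes\mathbf{c})=\det(\mathbf{A})+t\,\langle\mathrm{adj}(\mathbf{A})\mathbf{b},\mathbf{c}\rangle$ and then use the rank-one structure $\mathrm{adj}(\mathbf{A})=\alpha\,\mathbf{b}\otimes\mathbf{c}$ (forced by $\mathbf{A}\,\mathrm{adj}(\mathbf{A})=\mathrm{adj}(\mathbf{A})\,\mathbf{A}=0$ together with the one-dimensional kernels) to identify $\alpha=\langle\mathrm{adj}(\mathbf{A})\mathbf{b},\mathbf{c}\rangle=\mathrm{trace}(\mathrm{adj}(\mathbf{A}))=D_0$ in one stroke. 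Your argument is purely algebraic, avoids the differential-equation detour and the limiting step, and makes the identification of $D_0$ immediate; the paper's argument is slightly more self-contained in that it only relies on Jacobi's formula, whereas you additionally use the rank-one update identity for determinants. The second half of the proof — combining $(\mathbf{A}+t\,\mathbf{b}\otimes\mathbf{c})^{-1}=\mathbf{A}^{\sharp}+t^{-1}\,\mathbf{b}\otimes\mathbf{c}$ with the adjugate formula for the inverse — is identical in both.
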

\begin{proof}
 We note that $(\mathbf{A} + t \, \mathbf{b} \otimes \mathbf{c})^{-1} \,  \mathbf{b} = (\mathbf{A}^{\sharp} + \frac{1}{t} \, \mathbf{b} \otimes \mathbf{c}) \, {\bf b} = \frac{1}{t} \, \mathbf{b}$. Using Jacobi's differential formula for the determinant, we have
\begin{align*}
\frac{d}{dt} \text{det}(\mathbf{A} + t \, \mathbf{b} \otimes \mathbf{c}) = & \text{trace}\big(\text{adj}(\mathbf{A}+t \, \mathbf{b} \otimes \mathbf{c}) \cdot  \mathbf{b} \otimes \mathbf{c}\big)\\
= & \text{det}(\mathbf{A} + t \, \mathbf{b} \otimes \mathbf{c}) \, \text{trace}\big((\mathbf{A}+t \, \mathbf{b} \otimes \mathbf{c})^{-1} \cdot  \mathbf{b} \otimes \mathbf{c}\big)\\
= & \frac{\text{det}(\mathbf{A} + t \, \mathbf{b} \otimes \mathbf{c})}{t} \, \text{trace}(\mathbf{b} \otimes \mathbf{c}) \, .
\end{align*}
Since ${\rm trace}(\mathbf{b} \otimes \mathbf{c}) = 1$, the function $\text{det}(\mathbf{A} + t \, \mathbf{b} \otimes \mathbf{c}) =: g(t)$ satisfies the ordinary differential equation $g^{\prime} = g/t$. Moreover, $g(0) = \text{det}(\mathbf{A}) = 0$ implies that $g(t) = D_0 \, t$ for some constant $D_0 \neq 0$.

Since further $(\mathbf{A} + t \, \mathbf{b} \otimes \mathbf{c})^{-1} = \mathbf{A}^{\sharp} + \frac{1}{t} \, \mathbf{b} \otimes \mathbf{c}$, we have also
\begin{align*}
 \text{adj}(\mathbf{A} + t \, \mathbf{b} \otimes \mathbf{c}) = \text{det}(\mathbf{A} + t \, \mathbf{b} \otimes \mathbf{c}) \, (\mathbf{A}^{\sharp} + \frac{1}{t} \, \mathbf{b} \otimes \mathbf{c})
 = D_0 \, (t \, \mathbf{A}^{\sharp} + \mathbf{b} \otimes \mathbf{c}) \, .
\end{align*}
We directly obtain the representation \eqref{reprdrazin} of $\mathbf{A}^{\sharp}$.
Letting moreover $t \rightarrow 0$, we find that $ \text{adj}(\mathbf{A}) = D_0 \, (\mathbf{b} \otimes \mathbf{c})$. Thus, it also follows that $D_0 = \text{trace} (\text{adj}({\bf A}))$.
\end{proof}
Notice that Lemma \ref{lemdrazin} characterizes $D_0$ as the sum of the principal minors of leading order of ${\bf A}$: $D_0 = \sum_{i=1}^N \text{det}({\bf A}[i|i])$, where ${\bf A}[i|i]$ is the $(N-1) \times (N-1)$ matrix obtained by canceling rows $i$ and column $i$ of ${\bf A}$.

If the function $1/D_0$ is bounded, \eqref{reprdrazin} shows that the entries of $\mathbf{A}^{\sharp}$ are regular functions (polynomials) of the entries of $\mathbf{A}$, $\mathbf{b}$ and $\mathbf{c}$.

In the case that ${\bf A}$ is a singular \emph{M}-matrix (for instance ${\bf A} = \mathbf{B}({\bf y})$ is the Maxwell-Stefan matrix with positive friction coefficients), a bound for $1/D_0$ is obtained using the strict diagonal dominance by columns of each matrix by ${\bf A}[i|i]$. For the sake of completeness concerning the analysis of the Maxwell-Stefan equations, we sketch this application of the group inverse just hereafter.

Consider the Maxwell-Stefan equation $\mathbf{B}(\mathbf{y}) \, \mathbf{J} = - \mathbf{d}$ (see \eqref{MS-mass-based}) in which
\begin{align*}
 b_{ik}(\mathbf{y}) =  - y_i \, f_{ik}(\mathbf{y}) \text{ for } k \neq i, \qquad b_{ii}(\mathbf{y}) := \sum_{k\neq i} f_{ik}(\mathbf{y}) \, y_k \, ,
\end{align*}
in which $\mathbf{y}$ are the mass fractions and $\mathbf{d} := \mathbf{R} \, \mathbf{P}(\mathbf{y}) \, \nabla \frac{\boldsymbol{\mu}}{RT}$. For strictly positive $f_{ik}$ which are regular functions of $\mathbf{y}$, the rank of $\mathbf{B}(\mathbf{y})$ is $N-1$ and $\mathbf{e}^{\sf T} \, \mathbf{B}(\mathbf{y}) = 0 = \mathbf{B}(\mathbf{y}) \, \mathbf{y}$.

%Thus, $\mathbf{J} = - \mathbf{B}^{\sharp}(\mathbf{y}) \, \mathbf{d}$, and the corresponding Fick-Onsager matrix is $\mathbf{L} = \mathbf{B}^{\sharp}(\mathbf{y}) \, \mathbf{R} = \rho \, \mathbf{B}^{\sharp}(\mathbf{y}) \, \mathbf{Y}$.
%To see that $\mathbf{L}$ is symmetric, we use that $\mathbf{B}(\mathbf{y}) \, \mathbf{R}$ is symmetric because the frictions coefficients are symmetric. Then it is a small exercise using the properties of the generalised inverse ????? to show that $\mathbf{B}^{\sharp}(\mathbf{y}) \, \mathbf{R}$ is also symmetric.

In order to show that the entries of $\mathbf{B}^{\sharp}(\mathbf{y})$ are regular functions of $\mathbf{y}$, it is, as seen, sufficient to show that $D_0(y) := \sum_{i=1}^N \text{det}(\mathbf{B}(y)[i|i])$ remains strictly positive. Now $(\mathbf{B}(y)[i|i])^{\sf T}$ being diagonally dominant by rows, classical results give $\|(\mathbf{B}(y)[i|i])^{-1}\|_{\infty} \leq 1/(y_i \, \min_{j\neq i} f_{ji})$ (see \cite{Var}).

Since $\text{det}((\mathbf{B}(y)[i|i])^{-1}) \leq \|(\mathbf{B}(y)[i|i])^{-1}\|_{\infty}^{N-1}$, we have
\begin{align*}
 [\text{det}(\mathbf{B}(y)[i|i])]^{\frac{1}{N-1}} \geq & y_i \, \min_{j\neq i} f_{ji} \, ,
\end{align*}
and this yields $D_0 \geq  c_0(N) \, (\min_{i\neq j} f_{ij})^{N-1}$ with a certain constant $c_0(N) > 0$ depending only on $N$.

For comparison, notice that in Theorem \ref{Th-2}, we show a way to estimate $D_0$ replacing the assumption of strictly positive friction \emph{coefficients} by positivity assumptions on the matrix ${\bf B} \, {\bf R}$.
%
% \section{letzte Punkte}
% {\color{red} TODO 11 (DB + P-E): sollen diese Punkte noch rein? in welcher Form?}\\
%
% $\Diamond$ Final closed PDE system? Choice of velocities to separate diffusion part?
% Do we obtain standard diagonal system in dilute limit? EXAMPLE: ideal mixtures?
%
% {\color{blue} Meine Meinung: ist total interessant, wuerde aber den Rahmen sprengen. Also eher 'future work' (PE)}\\[1ex]
%
% {\color{red}
% In conclusion comment on $f_{ik}>0$: sufficient for all desired properties of ${\bf B}$, but not necessary. In general, not true if one starts with FO. Question: If $f_{ik}>0$ holds, what does it mean for FO and new closure?\\
%
% Example for a matrix ${\boldsymbol \tau}$ satisfying all properties as listed in ... but with $\tau_{ik}>0$ for some
% off-diagonal element:
% \begin{equation}
% {\boldsymbol \tau}=
% \left[
% \begin{array}{ccc}
% a-1   & 1 & - a  \\[1ex]
% 1   & a-1 & - a \\[1ex]
%  - a   &  - a & 2a
% \end{array}
% \right]
% .
% \end{equation}
% with $a>2$.
%
% But the $\tau_{ik}$ are not of the form with regular $f_{ik}$!}
%
% {\color{blue} Meine Meinung: Die Frage ist jetzt im Abschnitt \ref{limits} behandelt. Eine abschliessende Antwort waere nur moeglich, wenn man eine exacte Karakterisierung der Group-Inversen von Z-Matrizen kennen wuerde. Das liegt nicht gerade in Reichweite! (PE)}

\end{document}